\theoremstyle{plain}
\newtheorem{theorem}{Theorem}[section]
\newtheorem{lemma}[theorem]{Lemma}
\newtheorem{corollary}[theorem]{Corollary}
\theoremstyle{definition}
\newtheorem{definition}[theorem]{Definition}
\newtheoremstyle{named}{}{}{\itshape}{}{\bfseries}{.}{.5em}{\thmnote{#3}}
\theoremstyle{named}
\DeclareSymbolFont{extraup}{U}{zavm}{m}{n}
\DeclareMathSymbol{\varheart}{\mathalpha}{extraup}{86}
\DeclareMathSymbol{\vardiamond}{\mathalpha}{extraup}{87}
\newcommand{\R}{\mathbb{R}}
\newcommand{\eps}{\varepsilon}
\renewcommand{\epsilon}{\varepsilon}
\newcommand{\calC}{\mathcal{C}}
\newcommand{\calO}{\mathcal{O}}
\newcommand{\calP}{\mathcal{P}}
\newcommand{\calX}{\mathcal{X}}
\newcommand{\calY}{\mathcal{Y}}
\newcommand{\calK}{\mathcal{K}}
\newcommand{\calT}{\mathcal{T}}
\newcommand{\calM}{\mathcal{M}}
\newcommand{\calV}{\mathcal{V}}
\newcommand{\bsx}{\boldsymbol{x}}
\newcommand{\bsv}{\boldsymbol{v}}
\newcommand{\bsy}{\boldsymbol{y}}
\newcommand{\bsp}{\boldsymbol{p}}
\newcommand{\bsq}{\boldsymbol{q}}
\newcommand{\bsc}{\boldsymbol{c}}
\newcommand{\bsX}{\boldsymbol{X}}
\newcommand{\bsY}{\boldsymbol{Y}}
\newcommand{\bsP}{\boldsymbol{P}}
\newcommand{\bsQ}{\boldsymbol{Q}}
\newcommand{\bsC}{\boldsymbol{C}}
\newcommand{\bsU}{\boldsymbol{U}}
\newcommand{\bsS}{\boldsymbol{S}}
\newcommand{\bsI}{\boldsymbol{I}}
\newcommand{\dist}{\operatorname{dist}}
\newcommand{\spc}{\operatorname{sc}}
\newcommand{\row}{\operatorname{row}}
\newcommand{\round}{\operatorname{round}}
\newcommand{\ent}{\operatorname{ent}}
\newcommand{\de}{\operatorname{d}}
\newcommand{\vol}{\operatorname{vol}}
\newcommand{\h}{\operatorname{h}}
\newcommand{\embed}{\operatorname{embedsc}}
\newcommand{\w}{\operatorname{w}}
\newcommand{\expo}{\operatorname{expo}}
\newcommand{\fraction}{\operatorname{fraction}}
\newcommand{\CC}{\operatorname{CC}}
\renewcommand{\eqref}[1]{(\ref{#1})}
\newcommand{\cost}{\ensuremath{\mathrm{cost}}}
\newcommand{\ProblemName}[1]{\textsc{#1}}
\newcommand{\kzC}{\ProblemName{$(k, z)$-Clustering}}
\newcommand{\kMedian}{\ProblemName{$k$-Median}}
\newcommand{\kMeans}{\ProblemName{$k$-Means}}
\algnewcommand\algorithmicforeach{\textbf{for each}}
\newcommand{\eat}[1]{}
\newcommand*{\rom}[1]{\expandafter\@slowromancap\romannumeral #1@}
\title{Space Complexity of Euclidean Clustering}
\author{
Xiaoyi Zhu\thanks{{\tt zhuxy22@m.fudan.edu.cn}. School of Data Science, Fudan University, China.}
\and
Yuxiang Tian\thanks{{\tt tianyx22@m.fudan.edu.cn}. School of Data Science, Fudan University, China.}
\and
Lingxiao Huang\thanks{{\tt huanglingxiao1990@126.com}. State Key Laboratory of Novel Software Technology, Nanjing University, China.}\ \footnotemark[5]
\and
Zengfeng Huang\thanks{{\tt huangzf@fudan.edu.cn}. School of Data Science, Fudan University, China.}\ \thanks{Corresponding Author.}
}
\date{}
\date{\vspace{-1ex}}
\begin{document}

\maketitle

\begin{abstract}
The \kzC\ problem in Euclidean space $\mathbb{R}^d$ has been extensively studied.
Given the scale of data involved, compression methods for the Euclidean \kzC\ problem, such as data compression and dimension reduction, have received significant attention in the literature.
However, the space complexity of the clustering problem, specifically, the number of bits required to compress the cost function within a multiplicative error $\varepsilon$, remains unclear in existing literature.
This paper initiates the study of space complexity for Euclidean \kzC\ and offers both upper and lower bounds. 
Our space bounds are nearly tight when $k$ is constant, indicating that storing a coreset, a well-known data compression approach, serves as the optimal compression scheme. 
Furthermore, our lower bound result for \kzC\ establishes a tight space bound of $\Theta( n d )$ for terminal embedding, where $n$ represents the dataset size. 
Our technical approach leverages new geometric insights for principal angles and discrepancy methods, which may hold independent interest.
\end{abstract}


\newpage
\tableofcontents
\newpage

\section{Introduction}
\label{sec:introduction}

Clustering problems are fundamental in theoretical computer science and machine learning with various applications~\cite{arthur2007k,coates2012learning,lloyd1982least}. 
An important class of clustering is called Euclidean \kzC\ where, given a dataset $\bsP \subseteq \mathbb{R}^d$ of $n$ points and a $k\geq 1$, the goal is to find a center set $\bsC\subseteq \R^d$ of $k$ points that minimizes the cost 
$\operatorname{cost}_z(\bsP, \bsC):=\sum_{\bsp \in \bsP} \dist^z(\bsp, \bsC)$.
Here, $\dist^z(\bsp, \bsC) = \min _{\bsc \in \bsC} \dist^z(\bsp, \bsc)$ is the $z$-th power Euclidean distance of $\bsp$ to $\bsC$. 
Well-known examples of \kzC\ include \kMedian\ (when $z = 1$) and \kMeans\ (when $z = 2$).

In many real-world scenarios, the dataset $\bsP$ is large and the dimension $d$ is high, and it is desirable to compress $\bsP$ to reduce storage and computational requirements in order to solve the underlying clustering problem efficiently. 
Previous studies have proposed two approaches: data compression and dimension reduction.
On one hand, coresets have been proposed as a solution to data compression~\cite{har2004coresets} -- a coreset is a small representative subset $\bsS$ that approximately preserves the clustering cost for all possible center sets. 
Recent research has focused on developing efficient coresets~\cite{sohler2018strong,cohen2021new,cohen2022towards,cohenaddad2022improved,huang2023optimal}, showing that the coreset size remains independent of both the size $n$ of dataset and the dimension $d$.
On the other hand, dimension reduction methods have also proven to be effective for \kzC, including techniques like Johnson-Lindenstrauss (JL)~\cite{makarychev2019performance,charikar2022johnson} and terminal embedding~\cite{narayanan2019optimal,huang2020coresets}.
Specifically, terminal embedding (\Cref{def:embedding}), which projects a dataset $\bsP$ to a low-dimensional space while approximately preserving all pairwise distances between $\bsP$ and $\R^d$, is the key for removing the size dependence on $d$ for coreset \cite{huang2020coresets,cohen2021new}.

While the importance of compression for clustering has been widely acknowledged, the literature currently lacks clarity regarding the space complexity of the clustering problem itself.
Specifically, one may want to know how many bits are required to compress the cost function. 
Space complexity, a fundamental factor in theoretical computer science, serves as a measure of the complexity of the cost function.
Previous research has investigated the space complexity for various other problems, including approximate nearest neighbor~\cite{indyk2018approximate}, inner products~\cite{alon2017optimal}, Euclidean metric compression~\cite{indyk2022optimal}, and graph cuts~\cite{carlson2019optimal}.

To investigate the space complexity of the \kzC\ problem, one initial approach is to utilize a coreset $\bsS$, which yields a space requirement of $\tilde{O}(|\bsS|\cdot d)$ using standard quantization methods (see Theorem~\ref{thm:main_upper} in the paper). Here the $d$ factor arises from preserving all coordinates of each point in the coreset $\bsS$. 
One might wonder if it is possible to combine the benefits of coreset construction and dimension reduction to eliminate the dependence on the dimension $d$ in terms of space requirements.
This leads to a natural question: ``Is it possible to obtain an $|\bsS|\cdot o(d)$ bound? Additionally, is coreset the most efficient compression scheme for the \kzC\ problem?''
Perhaps surprisingly, we show that $\tilde{\Omega}(|\bsS|\cdot d)$ is necessary for interesting parameter regimes (see \Cref{thm:main_lower} in the paper). 
This means a quantized coreset is optimal, and dimensionality reduction does not help with space complexity. 
The proof of the lower bound for space complexity is our main contribution, which encounters more technical challenges.
Unlike upper bounds, existing lower bounds for coresets do not directly translate into lower bounds for space complexity since compression approaches can go beyond simply storing a subset of points as a coreset.
Overall, the study of space complexity is intricately connected to the optimality of coresets and poses technical difficulties.

\subsection{Problem Definition and Our Results}
\label{sec:result}

In this paper, we initiate the study of the space complexity for the Euclidean \kzC\ problem.
We first formally define the notion of space complexity.
Assume that $\bsP\subseteq [\Delta]^d$ for some integer $\Delta\geq 1$, i.e., every $\bsp\in \bsP$ is a grid point in $[\Delta]^d = \left\{1,2,\ldots, \Delta\right\}^d$.
This assumption is standard in the literature, e.g., for clustering~\cite{braverman2017clustering,Hu2018NearlyOD}, facility location~\cite{czumaj2013approximation}, minimum spanning tree~\cite{frahling2008sampling}, and the max-cut problem~\cite{chen2023streaming}, and necessary for analyzing the space complexity. 
\footnote{We need such an assumption to ensure that the precision of every coordinate of $\bsp\in \bsP$ is bounded. Otherwise, when $\bsP$ contains a unique point $\bsp\in \R^d$, we need to maintain all coordinates of $\bsp$ such that the information of $\cost_z(\bsP,\{\bsp\}) = 0$ is preserved. Then if the precision of $\bsp$ can be arbitrarily large, the space complexity is unlimited. }
Let $\mathcal{C}$ denote the collection of all $k$-center sets in $\R^d$, i.e. $\mathcal{C}:=\left\{\bsC \subseteq \mathbb{R}^d:|\bsC|=k\right\}$. 
An $\eps$-sketch for $\bsP$ is a data structure $\calO$ that given any center set $\bsC\in \mathcal{C}$, returns a value $\calO(\bsC)\in (1\pm \eps)\cdot \cost_z(\bsP,\bsC)$ which
recovers the value $\cost_z(\bsP, \bsC)$ up to a multiplicative error of $\eps$. 
We give the following notion.

\begin{definition}[Space complexity for Euclidean \kzC]
\label{def:space}
We are given a dataset $\bsP\subseteq [\Delta]^d$,
integers $n,k\geq 1$, constant $z\geq 1$ and an error parameter $\eps\in (0,1)$.
We define $\spc(\bsP,\Delta, k, z, d, \eps)$ to be the minimum possible number of bits of an $\eps$-sketch for $\bsP$.
Moreover, we define $\spc(n, \Delta, k, z, d, \eps):= \sup_{\bsP\subseteq [\Delta]^d: |\bsP| = n} \spc(\bsP,\Delta, k, z, d, \eps)$ to be the space complexity function, i.e., the maximum cardinality $\spc(\bsP,\Delta, k, z, d, \eps)$ over all possible datasets $\bsP\subseteq [\Delta]^d$ of size at most $n$.
\end{definition}

{
The upper bound and lower bound to the space complexity for Euclidean $\kzC$ are summarized in Table~\ref{tab:upper_lower}.
}

\begin{table}[h]
\centering
\caption{{A summary of our results. The bounds are tight when $k=O(1), d=\Omega\left(1/\varepsilon^2\right)$ and $n= \Omega\left(1/\varepsilon^2\right)$}}\label{tab:upper_lower}
\begin{tabular}{|c|c|c|c|}
\hline
Range &  $n\leq k$ & $n > k$ ($k\geq 2$ and $\Delta = \Omega\left(\frac{k^{\frac{1}{d}}\sqrt{d}}{\eps}\right)$ for lower bound)  \\\hline
    Upper Bound     & $O(nd\log\Delta)$        & $\tilde{O}\left(kd\log\Delta+k\log\log n+d\cdot \min\left\{\frac{k^{\frac{2z+2}{z+2}}}{\eps^2}, \frac{k}{\eps^{z+2}} \right\}\right)$ \\\hline
    Lower Bound    &   $\Omega(nd\log\Delta)$     & $\Omega\left(kd\log \Delta + k \log \log \frac{n}{k} + kd\min\left\{\frac{1}{\eps^2},\frac{d}{\log d}, \frac{n}{k}\right\}\right)$ \\\hline
\end{tabular}
\end{table}

\noindent\textbf{Space upper bounds.} Our first contribution is to provide upper bounds for the space complexity $\spc(n, \Delta, k, z, d, \eps)$.
We apply the idea of storing an $\eps$-coreset and have the following theorem.
Here, an $\eps$-coreset for \kzC\ is a subset $\bsS\subseteq \bsP$ together with a weight function $\w : \bsS\rightarrow \R_{\geq 0}$ such that for every $\bsC\in \calC$, $\sum_{\bsp \in \bsS} \w(\bsp) \cdot \dist^z(\bsp, \bsC) \in(1 \pm \eps) \cdot \cost_z(\bsP, \bsC)$.

\begin{theorem}[Space upper bounds] 
\label{thm:main_upper}
Suppose for any dataset $\bsP\subseteq [\Delta]^d$ of size $n$, there exists an $\eps$-coreset of $\bsP$ for \kzC\ of size at most $\Psi(n) \geq 1$.
We have the following space upper bounds:
 \begin{itemize}
     \item When $n \leq k$, $\spc(n, \Delta, k, z, d, \eps)\leq O\left(n d\log \Delta \right)$;
     \item When $n > k$, $\spc(n, \Delta, k, z, d, \eps) \leq O(kd\log\Delta + \Psi(n)(d\log 1/\eps + d\log\log \Delta + \log\log n ))$.
 \end{itemize}
\end{theorem}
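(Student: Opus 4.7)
The strategy splits on whether $n \le k$.

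\textbf{Case $n \le k$.} Store every point of $P$ verbatim. Each $p \in [\Delta]^d$ uses $d\lceil\log\Delta\rceil = O(d\log\Delta)$ bits, for a total of $O(nd\log\Delta)$, and the sketch returns $\cost_z(P,C)$ exactly.

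\textbf{Case $n > k$.} The plan is to combine a constant-factor approximate solution (to pin down local scales) with a quantized $\eps$-coreset. First, compute a solution $C^* \subseteq P$ with $\cost_z(P,C^*) \le \alpha\cdot\OPT$ for some constant $\alpha$, using any standard polynomial-time constant-factor algorithm returning centers from the input; since $C^* \subseteq [\Delta]^d$, it is stored exactly in $O(kd\log\Delta)$ bits. Next, by hypothesis, take an $\eps$-coreset $(S,w)$ of size $|S| \le \Gamma(n)$. For each $p \in S$ let $c^*(p) \in \arg\min_{c \in C^*}\|p-c\|_2$, $\delta_p := p - c^*(p)$, and $R_p := \|\delta_p\|_2$; I would encode $p$ as (i) an index into $C^*$, (ii) a per-coordinate floating-point version of $\delta_p$ with a sign bit, an $O(\log(1/\eps))$-bit mantissa and an $O(\log\log\Delta)$-bit exponent per coordinate, and (iii) a weight $\tilde w(p)$ rounded to the nearest power of $1+\eps$ in a polynomial range, costing $O(\log\log n + \log(1/\eps))$ bits.

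The crucial quantization trick is that \emph{per-coordinate relative} precision $\eps$ (rather than a grid in $\R^d$) yields
\[
\|\tilde\delta_p - \delta_p\|_2^2 \;=\; \sum_i (\tilde\delta_{p,i} - \delta_{p,i})^2 \;\le\; \eps^2 \sum_i \delta_{p,i}^2 \;=\; \eps^2 R_p^2,
\]
so $\|\tilde\delta_p - \delta_p\|_2 \le \eps R_p$ with no $\sqrt{d}$ blow-up. Setting $\tilde p := c^*(p)+\tilde\delta_p$ and $\calO(C) := \sum_p \tilde w(p)\, d(\tilde p, C)^z$, the mean-value-type bound $|d(\tilde p, C)^z - d(p, C)^z| \le z(d(p,C)+\eps R_p)^{z-1}\cdot \eps R_p$ together with the weighted AM-GM inequality $a^{z-1}b \le a^z + b^z$ (up to constants depending on $z$) gives
\[
\Bigl|\sum_p w(p)\, d(\tilde p, C)^z - \sum_p w(p)\, d(p, C)^z\Bigr| \;\le\; O(\eps)\Bigl(\sum_p w(p) d(p,C)^z + \sum_p w(p) R_p^z\Bigr).
\]
The coreset property makes the first sum $(1\pm\eps)\cost_z(P,C)$, and the second is $(1\pm\eps)\cost_z(P, C^*) \le O(1)\cdot\cost_z(P,C)$ since $\cost_z(P,C^*) \le \alpha\cdot\OPT \le \alpha\cdot\cost_z(P,C)$. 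Combined with the $(1\pm\eps)$ weight-quantization error and rescaling $\eps$ by an $O(z)$ factor produces the desired $(1\pm\eps)$ approximation.

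\textbf{Main obstacle.} The delicate part is the bit accounting: a naive grid quantization would cost $d\log(\sqrt{d}/\eps)$ per coreset point, losing a $\sqrt{d}$ factor, which is precisely why per-coordinate floating-point is essential. To keep the exponent range at $O(\log\log\Delta)$ one has to round coordinates with $|\delta_{p,i}| \le \eps R_p/\sqrt{d}$ to zero (which contributes only $\eps R_p$ to $\|\tilde\delta_p-\delta_p\|_2$), and to verify that the $\log k$ pointer into $C^*$ is dominated by the stated $\log\log n$ term in the relevant regime. The stability calculation for $d(\cdot,C)^z$ at arbitrary $z \ge 1$ via weighted AM-GM is standard but must be carried out with care, particularly when $d(p,C) < \eps R_p$ (handled via $z(d(p,C)+\eps R_p)^{z-1}\cdot \eps R_p \le O(z^z)(\eps R_p)^z$).
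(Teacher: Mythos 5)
Your proposal is correct and follows essentially the same route as the paper: store an $O(1)$-approximate center set $C^\star\subseteq P$ exactly, encode each coreset point as a per-coordinate floating-point offset from its nearest center (so the rounding error is $O(\eps)\cdot d(p,c^\star_p)$ with no $\sqrt{d}$ loss), quantize weights multiplicatively, and charge the resulting $\sum_p w(p)\,d(p,c^\star_p)^z$ error term to $O(1)\cdot\cost_z(P,C)$ via the approximation guarantee and the relaxed triangle inequality. The only cosmetic difference is bookkeeping: the paper avoids your per-point $\log k$ pointer by grouping coreset points under their assigned center, and since $C^\star\subseteq P\subseteq[\Delta]^d$ the offsets are integers, so the exponent range is automatically $O(\log\log\Delta)$ without the round-to-zero step you describe.
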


The proof of this theorem can be found in Section~\ref{sec:upper_bound}.
Fully storing a coreset $\bsS$ requires $\Psi(n)\cdot d \log \Delta$ bits for points and $\Psi(n)\cdot \log n$ for its weight function $\w()$.
To further reduce the storage space, we provide a quantization scheme for the weight function $\w()$ and points in $\bsS$ (Algorithm~\ref{alg:upper}).
When ignoring the logarithmic term, we have $sc(n, \Delta, k, z, d, \eps)\leq \Tilde{O}(\Psi(n)\cdot d)$.~\footnote{In this paper, $\Tilde{O}(\cdot)$ may hide a factor of $2^{O(z)}$ and the logarithmic term of the input parameters $n, \Delta, k, d, 1/\eps$.}
Combining with the recent breakthroughs that shows that $\Psi(n) = \tilde{O}\left(\min\left\{k^{\frac{2z+2}{z+2}} \eps^{-2}, k \eps^{-z-2}\right\}\right)$ \cite{cohen2021new,cohen2022towards,cohenaddad2022improved,huang2023optimal}, we conclude that when $n > k$,
\begin{align}
\label{eq:upper_bound}
sc(n, \Delta, k, z, d, \eps) \leq \tilde{O}\left(d\cdot \min\left\{\frac{k^{\frac{2z+2}{z+2}}}{\eps^2}, \frac{k}{\eps^{z+2}} \right\}\right).
\end{align}

\noindent\textbf{Space lower bounds.}
Our main contribution is to provide the lower bounds for the space complexity $\spc(n, \Delta, k, z, d, \eps)$.

\begin{theorem}[Space lower bounds] 
\label{thm:main_lower} 
We have the following space lower bounds:
\begin{itemize}
    \item When $n \leq k$, $\spc(n, \Delta, k,z,d, \eps)\geq \Omega(n d \log \Delta)$;
    \sloppy
    \item When $n > k\geq 2$ and $\Delta = \Omega\left(\frac{k^{\frac{1}{d}}\sqrt{d}}{\eps}\right)$, $$\spc(n, \Delta, k,z,d, \eps) \geq \Omega\left(kd\log \Delta + kd\min\left\{\frac{1}{\eps^2},\frac{d}{\log d}, \frac{n}{k}\right\} + k \log \log \frac{n}{k}\right).$$
\end{itemize}
\end{theorem}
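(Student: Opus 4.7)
Our plan is to proceed by an encoding argument: for each additive summand of the claimed bound we construct a family $\mathcal{F}$ of datasets $P\subseteq[\Delta]^d$ of size $n$ such that any two $P\neq P'$ in $\mathcal{F}$ must admit distinct $\eps$-sketches. Since an $sc$-bit sketch takes at most $2^{sc}$ values, this gives $sc\geq\log_2|\mathcal{F}|$, and summing the individual bounds yields the theorem. Two datasets $P,P'$ are distinguished as soon as we can exhibit a query $C\in\calC$ on which $\cost_z(P,C)$ and $\cost_z(P',C)$ differ by more than the $(1+\eps)^2$ slack allowed by two $\eps$-sketches.

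\textbf{The easy summands.} For the first bullet ($n\leq k$), we let $\mathcal{F}_1$ be all size-$n$ subsets of $[\Delta]^d$: for any $P\neq P'$, we set $C=P\cup D$ with $D$ a set of $k-n$ dummy points placed far from the grid, so that $\cost_z(P,C)=0$ but $\cost_z(P',C)>0$, forcing distinct sketches; this gives $sc\geq\log|\mathcal{F}_1|=\Omega(nd\log\Delta)$. For $n>k$, the $\Omega(kd\log\Delta)$ summand is obtained analogously by placing $k$ distinct grid points each with multiplicity $n/k$. For the $\Omega(k\log\log(n/k))$ summand, we would let each cluster's multiplicity vary independently over the $\Theta(\log(n/k))$ geometric values $\{1,2,4,\ldots,n/k\}$; a query centered near an individual cluster reads off its multiplicity to within a $(1+\eps)$ factor, so all $(\log(n/k))^k$ such datasets stay distinguishable.

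\textbf{The hard summand.} The term $\Omega(kd\cdot\min\{1/\eps^2,d/\log d,n/k\})$ is where the principal-angle and discrepancy machinery announced in the introduction enters. The plan is to place $k$ well-separated ``anchor'' centers on the grid and, inside each anchor, scatter $m=n/k$ ``signature'' points whose displacements from the anchor encode a hard vector $v_i\in\R^d$. For a query $C$ placing one center near each anchor in a subspace $L_i$, the cost decomposes (to leading order) as $\sum_i\|(I-\Pi_{L_i})v_i\|^z$, where $\Pi_{L_i}$ is the orthogonal projection onto $L_i$; the principal angles between $\mathrm{span}(v_i)$ and $L_i$ govern how sharply the cost depends on $v_i$. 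We then populate $\mathcal{F}$ from three hard ensembles, one for each term of the $\min$: random $\pm 1$ signatures for $\Omega(kd/\eps^2)$ (via a discrepancy lower bound in the spirit of known coreset lower bounds), an orthogonal subspace packing for $\Omega(kd^2/\log d)$ (via a principal-angle counting argument), and freely chosen coordinate positions for $\Omega(nd)$ when $m=n/k$ is small (recovering the terminal-embedding lower bound highlighted in the abstract).

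\textbf{Main obstacle.} The hard part will be turning a combinatorial distinguishing property of the signatures into a lower bound on the cost function itself. The sum $\cost_z(P,C)$ averages over $n$ terms and can mask small per-signature perturbations, so we must design a query $C$ that isolates a single principal-angle direction and amplifies the signature difference above the $\eps$-slack. A second, more delicate obstacle is that every point must lie on the coarse grid $[\Delta]^d$, so the signatures have to be quantized without collapsing the discrepancy gap; handling this quantization is precisely why the theorem needs $\Delta=\Omega(k^{1/d}\sqrt{d}/\eps)$, the minimum resolution that still supports embedding the hard instances on the grid.
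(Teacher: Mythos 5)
Your treatment of the easy summands matches the paper's own arguments: the all-subsets family with a center set containing $P$ itself for $n\leq k$, monotonicity in $n$ for the $\Omega(kd\log\Delta)$ term, and clusters with geometrically varying multiplicities read off by a displaced center for $\Omega(k\log\log\frac{n}{k})$. Those parts are fine.

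The hard summand, however, is only a plan, and the step you yourself flag as the ``main obstacle'' is exactly the paper's main technical content, which your proposal does not supply. Three concrete problems. First, the \kzC\ cost with centers $\{c,-c\}$ on unit vectors is $2n-2\sum_i\left|\langle p_i,c\rangle\right|$, an absolute-value sum, not the projection residual $\sum_i\|(I-\Pi_{L_i})v_i\|^z$ you write down (that is a projective-clustering cost). The absolute values are precisely why a distinguishing query is hard to exhibit; the paper's resolution (Lemma~\ref{lmm:large_cost}) takes $c=\frac{1}{\sqrt n}\sum_j\lambda_j q_j$ for a Spencer-style \emph{partial coloring} $\lambda\in\{-1,0,+1\}^n$ with $\|P^\top Q\lambda\|_\infty\leq\frac12$, whose existence requires the large-principal-angle hypothesis to force most rows of $P^\top Q$ to have small norm (Lemmas~\ref{lmm:small_sum_of_square} and~\ref{lmm:discrepancy}); none of this appears in your sketch. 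Second, one vector $v_i\in\R^d$ per anchor carries only $O(kd)$ parameters, whereas the bound needs a family of size $\exp(\Omega(kd\tilde n))$ with $\tilde n=\min\{1/\eps^2,d/\log d,n/k\}$; each anchor must therefore encode an entire $\tilde n$-dimensional subspace (an orthonormal basis of $\tilde n$ vectors), and the $\min$ arises from three constraints on this \emph{single} construction ($\sqrt{\tilde n}\gtrsim\eps\tilde n$ so the gap beats the sketch slack, $\tilde n=O(d/\log d)$ so the Grassmannian packing of Lemma~\ref{lmm:large_size} exists, and $\tilde n\leq n/k$ trivially), not from choosing among three unrelated ensembles. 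Third, for general $k$, well-separated anchors are not enough: two datasets differing on a single copy have cost difference only an $\eps/k$ fraction of the total cost, so the product family must be pruned to datasets differing on $\Omega(k)$ copies via the counting argument of Section~\ref{sec:generalize_k}; this step is absent from your plan and without it the distinguishing property fails.
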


The proof of this theorem can be found in Section~\ref{sec:lower_bound}.
Compared to Theorem~\ref{thm:main_upper}, our lower bound for space complexity is tight when $n \leq k$.
For the case when $n > k$, the key term in our lower bound is $\Omega(kd\min\left\{\frac{1}{\eps^2},\frac{d}{\log d}, \frac{n}{k}\right\})$.
Comparing this with Inequality~\eqref{eq:upper_bound}, we can conclude that the optimal space complexity $\spc(n, \Delta, k,z,d, \eps)= \Theta\left( \frac{d}{\eps^2}\right)$ when $k = O(1)$, $n\geq \Omega(\frac{1}{\eps^2})$ and $d\geq \Omega(\frac{1}{\eps^2 \log 1/\eps})$.
As a corollary, we can affirm that the coreset method is indeed the optimal compression method when the size and dimension of the dataset $\bsP$ are large and the number of centers $k$ is constant.
It would be interesting to further investigate whether the coreset method remains optimal for large $k$.
Another corollary of Theorem~\ref{thm:main_upper} is a lower bound $\Omega(\frac{k}{\eps^2})$ for the coreset size $\Psi(n)$.
This bound matches the previous result in \cite{cohen2022towards}, and it has been recently improved to $\Omega(\frac{k}{\eps^{-z-2}})$ when $\eps = \Omega(k^{\frac{1}{z+2}})$ \cite{huang2023optimal}.
Since the technical approach is different, our methods for space lower bounds may also be useful to further improve the coreset lower bounds.

It is worth noting that $d$ still appears in our lower bound results, which implies that exploiting dimension reduction techniques does not necessarily lead to a reduction in storage space. 
Although this may seem counter-intuitive, it is reasonable since we still need to maintain the mapping from the original space to the embedded space (which is also the space consumed by the dimensionality reduction itself), and the storage of this mapping could also be relatively large. 
Moreover, we can utilize this fact to lower bound the space cost of these dimension reduction methods from our results; see the following applications.

\vspace{0.1in}
\noindent \textbf{Application 1: Tight space lower bound for terminal embedding.}
Our \Cref{thm:main_lower} also yields an interesting by-product: a nearly tight lower bound for the space complexity of terminal embedding, which is a well-known dimension reduction method recently introduced by \cite{elkin2017terminal,narayanan2019optimal}. It is a pre-processing step to map input data to a low-dimensional space.
The definition of it is given as follows. 
\begin{definition}[\bf{Terminal embedding}]
	\label{def:embedding}
Let $\eps\in (0,1)$ and $\bsP$ be a dataset of $n$ points. 
	A mapping $\tau: \R^d\rightarrow \R^m$ is called an $\eps$-terminal embedding of $\bsP$ 
 if for any $\bsp\in \bsP$ and $\bsq\in \R^d$,
	$
	\dist(\bsp,\bsq)\leq \dist(\tau(\bsp),\tau(\bsq))\leq (1+\eps)\cdot \dist(\bsp,\bsq).
	$
\end{definition}

As a consequence of \Cref{thm:main_lower}, the preservation of the terminal embedding function $\tau$ must incur a large space cost; summarized by the following theorem.
The result is obtained by another natural idea for sketch construction: maintaining a terminal embedding function $\tau$ for a coreset $\bsS$ and the projection $\tau(\bsS)$ of a coreset $\bsS$, in which the storage space for $\tau(\bsS)$ can be independent on the dimension $d$.

\begin{theorem}[Informal; see \Cref{thm:embedding}] 
\label{thm:embedding_intro}
Let $\eps\in (0,1)$ and assume $d = \Omega\left(\frac{\log n \log(n/\eps) }{\eps^2 }\right)$. 
An $\eps$-terminal embedding, that projects a given dataset $\bsP\subseteq \R^d$ of size $n$ to a target dimension $O(\frac{\log n}{\eps^2})$, requires space at least $\Omega(n d)$.
\end{theorem}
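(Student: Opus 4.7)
The plan is to reduce the $\Omega(nd)$ lower bound for terminal embedding to the space lower bound for \kzC\ with $n\le k$ stated in \Cref{thm:main_lower}. The idea is that any $\eps$-terminal embedding of a worst-case clustering instance $P$ can be turned into an $\eps$-sketch for $\cost_z(P,\cdot)$ by simply pairing it with the (much cheaper) list of projected data points; hence a terminal embedding of size $o(nd)$ would violate the $\Omega(nd)$ clustering lower bound.

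I would fix $P\subseteq [\Delta]^d$ to be an extremal input for $sc(n,\Delta,k,z,d,\eps)$ in the regime $k=n$, so that the first bullet of \Cref{thm:main_lower} guarantees $sc(n,\Delta,n,z,d,\eps)\ge \Omega(nd\log\Delta)$. Given any $\eps$-terminal embedding $f:\R^d\to \R^m$ of $P$ with $m=O(\log n/\eps^2)$, described by $S_f$ bits, the sketch I would build is $\calO = \bigl(f,\ \widetilde{f(P)}\bigr)$, where $\widetilde{f(P)}$ denotes a coordinate-wise quantization of $\{f(p):p\in P\}$ with precision $\mathrm{poly}(\eps/(\Delta d))$. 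On a query $C=\{c_1,\dots,c_k\}$, the sketch evaluates $f(c_j)$ from the stored description of $f$ and outputs $\sum_{p\in P}\min_{j}\bigl\|\widetilde{f(p)}-f(c_j)\bigr\|^z$. Correctness follows by combining two elementary facts: the terminal-embedding inequality $\|p-c_j\|\le\|f(p)-f(c_j)\|\le(1+\eps)\|p-c_j\|$ yields $\cost_z(f(P),f(C))\in(1\pm O(z\eps))\cost_z(P,C)$, and a standard rounding argument shows the quantization changes each distance by an additive $o(\eps\|p-c_j\|)$ and therefore preserves the cost up to another $(1\pm\eps)$ factor after rescaling $\eps$.

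Counting bits, the total sketch size is $S_f+O(nm\log(\Delta/\eps))=S_f+O\bigl(n\log n\cdot\log(\Delta/\eps)/\eps^2\bigr)$. Combined with $sc(n,\Delta,n,z,d,\eps)\ge\Omega(nd\log\Delta)$ this gives
\begin{equation*}
S_f\ \ge\ \Omega(nd\log\Delta)\ -\ O\!\bigl(n\log n\cdot\log(\Delta/\eps)/\eps^2\bigr),
\end{equation*}
which collapses to $S_f=\Omega(nd)$ under the hypothesis $d=\Omega\bigl(\log n\cdot\log(n/\eps)/\eps^2\bigr)$ together with $\Delta=\mathrm{poly}(n,1/\eps)$, since then the subtracted term is of lower order than $nd\log\Delta$.

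The main obstacle I anticipate is the quantization step. Unlike $P$, the image $f(P)$ is not a-priori bounded in magnitude, and the terminal-embedding guarantee only controls \emph{differences} $\|f(p)-f(q)\|$; I would therefore need to first bound $\|f(p)\|=O(\sqrt{d}\,\Delta)$ by anchoring $f$ at a reference point, then argue that $O(\log(d\Delta/\eps))$ bits per coordinate suffice and that pairs with $\|p-c_j\|$ near zero contribute negligibly to the total additive error. A secondary subtlety is justifying that the two pieces of $\calO$ genuinely add in bit count: the stored description of $f$ alone does not reveal $P$, so the projected list $\widetilde{f(P)}$ must be charged separately rather than folded into $S_f$.
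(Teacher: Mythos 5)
Your high-level reduction is the same one the paper uses: a terminal embedding $f$ plus a quantized copy of $f(P)$ yields an $\eps$-sketch whose size is $S_f$ plus a lower-order term, and subtracting that term from the clustering space lower bound leaves $\Omega(nd)$. However, you anchor the reduction to the wrong clustering lower bound, and this breaks the correctness of the sketch you construct.

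The first bullet of \Cref{thm:main_lower} (the $n\le k$ regime) is proved via a family in which, for any two datasets $P\neq Q$, the separating center set is $C=P\cup\{c,\dots,c\}$, so that $\cost_z(P,C)=0$ while $\cost_z(Q,C)>0$. An $\eps$-sketch must therefore return exactly $0$ on the query $C=P$, and more generally must be multiplicatively correct for centers arbitrarily close to the data. Your claim that ``the quantization changes each distance by an additive $o(\eps\|p-c_j\|)$'' is false: rounding $f(p)$ to precision $\delta$ perturbs $\|f(p)-f(c_j)\|$ by up to an additive $\Theta(\delta\sqrt{m})$ \emph{independently} of $\|p-c_j\|$, so the relative error blows up as $\|p-c_j\|\to 0$ and the sketch outputs a nonzero value on the zero-cost query. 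Your fallback (``pairs with $\|p-c_j\|$ near zero contribute negligibly to the total additive error'') has nothing to be negligible against when the total cost is $0$ or arbitrarily small. In fact, for the $n\le k$ family any valid $\eps$-sketch must determine $P$ exactly (which is why that bound is $\Omega(nd\log\Delta)$), so no sketch of the form ``$f$ plus a cheap lossy part'' can exist there, and the inequality $sc\le S_f+o(nd\log\Delta)$ you need is unobtainable.

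The paper avoids this by reducing from the second bullet instead: it takes $z=2$ and $k=\Theta(n)$ chosen so that $\frac{n}{k}\le\min\{\frac{1}{\eps^2},\frac{d}{\log d}\}$, where the term $kd\cdot\frac{n}{k}=\Omega(nd)$ dominates, and, crucially, the hard instances are built from scaled orthonormal bases for which $\cost_2(P,C)\ge\Omega(\Delta^2 n)$ uniformly over the relevant center sets. With the cost bounded below, the additive quantization error $O(\Delta^2\eps n)$ on $\widetilde{f(P)}$ is genuinely an $O(\eps)$ fraction of the cost, the constructed object is a legitimate $\eps$-sketch of size $g(n,d,\eps)+O\bigl(n\log n\log(n/\eps)/\eps^2\bigr)$, and the subtraction goes through. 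To repair your argument, replace the $n\le k$ bound with this $n>k$ bound (and note that the lower-boundedness of the cost on the hard family is exactly the ingredient your quantization step is missing).
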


The bound $\Omega( n d )$ is not surprising since terminal embedding can be used to approximately recover the original dataset.
In the case when $d\geq \Omega(\frac{\log n \log(n/\eps)}{\eps^2})$, our result improves upon the previous lower bound of $\Omega(\frac{n\log n}{\eps^2})$ from \cite{alon2017optimal}. 
\footnote{Although the paper does not directly study terminal embedding, their bound for preserving inner products (Theorem 1.1 in \cite{alon2017optimal}) implies a lower bound of $\Omega(\frac{n\log n}{\eps^2})$ for terminal embedding.}
We replace their factor of $\frac{\log n}{\eps^2}$ with $d$.
Furthermore, our lower bound of $\Omega(nd)$ matches the prior upper bound of $\tilde{O}(nd)$ for terminal embedding \cite{cherapanamjeri2022terminal}, making it nearly tight.

Recently, \cite{huang2022near,huang2023coresets} proposed a coreset of size $O(m)+\tilde{O}(k^2\eps^{-2z-2})$ for this problem.

\vspace{0.1in}
\noindent \textbf{Application 2: Compression scheme for coreset construction in distributed and streaming settings.} In the era of big data, the size of datasets has grown dramatically, which presents significant challenges for analysis. 
Over the past decade, new computation models such as the distributed model and the streaming model have emerged as effective approaches for handling large-scale data. 

Extensive research has been conducted on constructing coresets in both distributed setting~\cite{balcan2013distributed} and streaming setting~\cite{har2004coresets,braverman2019streaming,cohen2023streaming}. 
These studies, similar to offline coreset construction, mainly focus on the size of the coreset without considering the specific space complexity.
The quantization scheme proposed in Algorithm~\ref{alg:upper} is flexible and can be applied to any algorithm based on the coreset method.
Therefore, by leveraging similar ideas, we can also derive algorithms with satisfactory bit complexity upper bounds in these scenarios.

In the distributed setting (see Definitions~\ref{def:distributed}), there is a set of $l$ sites $\mathcal{V}$ each holding a local data set. 
These sites communicate through an undirected connected graph $\mathcal{G}$, where an edge indicates that two sites can communicate with each other.
Our goal is to construct an $\eps$-sketch for the whole dataset on a specified site while minimizing the number of bits required for communication.
By constructing a sketch for each site using our compression method in Algorithm~\ref{alg:upper} and then transmitting the sketch to the coordinator, we obtain the communication cost of $\tilde{O}\left(ld\cdot \min\left\{\frac{k^{\frac{2z+2}{z+2}}}{\eps^2}, \frac{k}{\eps^{z+2}} \right\}\right)$ where $l$ denotes the number of sites. 
The results are summarized in Corollary~\ref{cor:distributed}.

In the streaming setting (see Definitions~\ref{def:streaming}), the input
data arrive sequentially and we require a data structure to maintain an aggregate of the points seen so far to facilitate computation of the objective function. 
Our goal is to maintain the data structure using as few bits as possible.  
Using our compression scheme in Algorithm~\ref{alg:upper}, we obtain the bit complexity for \kzC\ problem in the streaming setting summarized in Corollary~\ref{cor:stream}. \color{black}

\subsection{Technical Overview}
\label{sec:overview}
We now describe the high-level technical ideas behind our main contribution Theorem \Cref{thm:main_lower}. 
In general, our approach involves using a clever counting argument to establish lower bounds on space. 
We do this by creating a large family of datasets $\calP$ where, for any pair $\bsP$ and $\bsQ$ from this family, there exists a center set $\bsC$ that separates their cost function by a significant margin, denoted as $\cost_{z}\left(\bsP, \bsC\right) \notin \left(1\pm O(\varepsilon)\right) \cost_{z}\left(\bsQ, \bsC\right)$. 
This difference in cost implies that $\bsP$ and $\bsQ$ can not share the same sketch, which leads to a lower bound on space of $\log\left(\left|\calP\right|\right)$ (Lemma~\ref{lmm:setsize_to_bit}).
Hence, we focus on how to construct such a family $\calP$.

We discuss the most technical bound, which is $\Omega\left(kd\min\left\{\frac{1}{\eps^2},\frac{d}{\log d}, \frac{n}{k}\right\}\right)$, when $n > k\geq 2$ and $\Delta = \Omega\left(\frac{k^{\frac{1}{d}}\sqrt{d}}{\eps}\right)$. 
The proofs for other bounds are pretty standard. 
For brevity, we will explain the technical idea for the case of $z=k=2$ (\ProblemName{2-Means}). 
The extension to general $z$ and $k$ is straightforward, by analyzing Taylor expansions for $(1+x)^z$ (Section~\ref{sec:generalize_power_high}) and make $\Omega(k)$ copies of datasets in $\calP$ (Section~\ref{sec:generalize_k}).
Our construction of $\calP$ relies on a fundamental geometric concept known as \emph{principal angles} (Definition~\ref{def:principal_angles}). 
The Cosine of these angles, when given the orthonormal bases $\bsP=\left\{\bsp_i:i\in [n]\right\}$ and $\bsQ=\left\{\bsq_i:i\in [n]\right\}$ of two distinct subspaces in $\mathbb{R}^d$, uniquely correspond to the singular values of $\bsP^\top \bsQ$ (Lemma~\ref{lmm:property_pa}). 
This correspondence essentially measures how orthogonal the two subspaces are to each other.
With principal angles in mind, we outline the two main components of our proof. 
Assuming that $d>n$, the first component (Lemma~\ref{lmm:large_cost}) demonstrates that if the largest $O(n)$ principal angles between two orthonormal bases $\bsP$ and $\bsQ$ are sufficiently large, there exists a center set $\bsC = \left\{\bsc,-\bsc\right\}\in \calC$ with $\|\bsc\|_2=1$ such that $\operatorname{cost}_2(\bsP, \{\bsc,-\bsc\}) -\operatorname{cost}_2(\bsQ, \{\bsc,-\bsc\}) \geq \Omega\left(\sqrt{n}\right)$. 
This induced error of $\Omega\left(\sqrt{n}\right)$ from $C$ achieves the desired scale of $\eps\cdot \cost_z(\bsP,\bsC) = O(\varepsilon n)$ when $n=O\left(\frac{1}{\varepsilon^2}\right)$.
The second component (Lemma~\ref{lmm:large_size}) states that when $n = O\left(\frac{d}{\log d}\right)$, there exists a large family $\calP$ of orthonormal bases (for different $n$-dimensional subspaces) with size $\exp\left(nd\right)$ such that most principal angles of any two different orthonormal bases in the family are sufficiently large. 
The space lower bound $\Omega\left(d\min\left\{\frac{1}{\eps^2},\frac{d}{\log d}, n\right\}\right)$ directly follows from these two lemmas.

Next, we delve into the technical insights behind Lemmas \ref{lmm:large_cost} and \ref{lmm:large_size}.

\vspace{0.1in}
\noindent \textbf{\Cref{lmm:large_cost}: Reduction from principal angles to cost difference.}
Recall that we aim to show the existence of a center set $\bsC= \left\{\bsc,-\bsc\right\}$ that incurs a large cost difference between two orthonormal bases $\bsP=\left\{\bsp_i:i\in [n]\right\}$ and $\bsQ=\left\{\bsq_i:i\in [n]\right\}$.
By the formulation of $\bsC$, we note that $\cost_2(\bsP,\bsC) - \cost_2(\bsQ,\bsC) = 2 \left( \sum_{i=1}^n \left | \langle  \bsq_{i},\bsc\rangle\right| - \left | \langle  \bsp_{i},\bsc\rangle\right|\right)$.
Hence, we focus on showing the existence of a unit vector $\bsc\in \R^d$ such that 
\begin{align}
\label{eq:desired_diff}
\sum_{i=1}^n \left | \langle  \bsq_{i},\bsc\rangle\right| - \left | \langle  \bsp_{i},\bsc\rangle\right|\geq \Omega(\sqrt{n}).
\end{align}
Intuitively, our goal is to increase the magnitude of the first term $\sum_{i=1}^n \left | \langle  \bsq_{i},\bsc\rangle\right|$ while decreasing the magnitude of the second term $\sum_{i=1}^n \left | \langle  \bsp_{i},\bsc\rangle\right|$.
One initial approach is to choose $\bsc = \frac{1}{\sqrt{n}} \bsQ \boldsymbol{\zeta} = \frac{1}{\sqrt{n}} \sum_{i\in [n]} \boldsymbol{\zeta}_i \bsq_i$, where $\boldsymbol{\zeta}\in \left\{-1, +1\right\}^n$. 
By this selection, center $\bsc$ lies on the subspace spanned by $\bsQ$ and maximizes the first term $\sum_{i=1}^n \left | \langle  \bsq_{i},\bsc\rangle\right|$ to be $\sqrt{n}$.
{
Moreover, the second term becomes $\sum_{i=1}^n \left | \langle  \bsp_{i},\bsc\rangle\right| = \frac{1}{\sqrt{n}}\|\bsP^\top \bsQ \boldsymbol{\zeta}\|_1 \leq \sqrt{n} \|\bsP^\top \bsQ \boldsymbol{\zeta}\|_\infty$ and we want to minimize it.
This objective is very similar to the goal of coloring.
Informally speaking, the goal of coloring is to find a vector $\boldsymbol{\zeta}\in \left\{-1, +1\right\}^n$ for a given matrix $\bsU$ that minimizes $\|\bsU \boldsymbol{\zeta}\|_\infty$ (See Definition~\ref{def:partial_coloring}).
}
Ideally, if we can find a coloring $\boldsymbol{\zeta}\in \left\{-1, +1\right\}^n$ such that $\|\bsP^\top \bsQ \boldsymbol{\zeta}\|_\infty \leq 0.5$, we can achieve the desired cost difference in Inequality~\eqref{eq:desired_diff}.
However, the existence of such $\boldsymbol{\zeta}$ appears to be non-trivial.
For instance, if we randomly select a coloring $\boldsymbol{\zeta}$ from $\left\{-1, +1\right\}^n$, the expected value of $\|\bsP^\top \bsQ \boldsymbol{\zeta}\|_\infty$ can be as large as $O(\log n)$ \cite{spencer1985six}.
{
On the other hand, directly applying proofs from discrepancy literature (e.g., \cite{spencer1985six,dadush2022new}) does not achieve the desired property $\|\bsP^\top \bsQ \boldsymbol{\zeta}\|_\infty \leq 0.5$. 
This is because existing techniques work for arbitrary matrices $U$ instead of the specific matrix $P^\top Q$ that may have additional geometric properties, and hence, only yield unsatisfactory results.
}

{
To bypass this technical difficulty, we enhance the previous idea by allowing $\boldsymbol{\zeta} \in \left\{-1,0,+1\right\}^n$ to be a partial coloring with $\|\boldsymbol{\zeta}\|_1 \geq 0.75n$, which means $\boldsymbol{\zeta}$ can now have at most 25\% entries that are zero.
With this modification, we have $\sum_{i=1}^n \left | \langle  \bsq_{i},\bsc\rangle\right| = 0.75 \sqrt{n}$. 
Thus, it still suffices to bound $\|\bsP^\top \bsQ \boldsymbol{\zeta}\|_\infty \leq 0.5$ such that Inequality~\eqref{eq:desired_diff} holds.
Such a stricter bound calls for new ideas.  
}

{
Our core objective is to find the conditions on $\bsP$ and $\bsQ$ that allow us to identify such a partial coloring.
Let's consider two simple examples to illustrate the idea. 
When $\bsP$ and $\bsQ$ are identical, we would have $\bsP^\top \bsQ$ is the identity matrix. 
For any coloring vector $\boldsymbol{\zeta}$ with $\|\boldsymbol{\zeta}\|_0 > 0$, we must have at least one entry of $|\bsp^\top \bsQ \zeta|$ is 1 and thus $\|\bsP^\top \bsQ \boldsymbol{\zeta}\|_\infty = 1$.
When $\bsP$ and $\bsQ$ are orthogonal, we would have $\bsP^\top \bsQ$ is the zero matrix.
For any coloring vector $\boldsymbol{\zeta}$, we must have $\bsp^\top \bsQ \zeta = \boldsymbol{0} $ and thus $\|\bsP^\top \bsQ \boldsymbol{\zeta}\|_\infty = 0$.
This suggests that the greater the difference between P and Q, the easier it is to find a partial coloring that meets our requirements.
We will show that such differences can be characterized using principal angles.
}

{
After closely examining the value of the partial coloring, we find that this value is closely related to the norm of each row $\|(\bsP^\top \bsQ)_i\|_2$.
Using random coloring as an example, applying the Chernoff bound, we would find that the magnitude of each corresponding value for a row is bounded by the norm of that row, i.e. $\|(\bsP^\top \bsQ \boldsymbol{\zeta})_i\|_2 \leq  a \|(\bsP^\top \bsQ)_i\|_2$ for some constant $a$. 
Therefore, to achieve a smaller partial coloring, we require the row norms of $\bsP^\top \bsQ$ to be relatively small.
Since $\bsP$ and $\bsQ$ are two orthonormal bases, the row norms of $\bsP^\top \bsQ$ correspond to the length of the projection of $\bsp_i$ to the subspace of $\bsQ$.
For example, when $\bsp_i$ lies in the subspace of $\bsQ$, we have $\|(\bsP^\top \bsQ)_i\|_2 = 1$. 
On the other side, when $\bsp_i$ is orthogonal to the subspace of $\bsQ$, we have $\|(\bsP^\top \bsQ)_i\|_2 = 0$. 
Our aim is to find $\bsP$ and $\bsQ$ such that the length of the projection of each data point $\bsp_i$ to the subspace of $\bsQ$ is relatively small. 
}

{
Note that in the simplified two-dimensional space cases where $\bsP$ and $\bsQ$ are reduced to a single data point, a small projection length from $\bsP$ to $\bsQ$ is equivalent to having a large angle between $\bsP$ and $\bsQ$.
Based on this idea, we find a similar pattern for high-dimensional subspaces with the help of the notion ``principal angles''. 
The formal definition of principal angles can be found in Definition~\ref{def:principal_angles}.
Intuitively, principal angles are a set of minimized angles between the two subspaces.
Small principal angles indicate that the two subspaces are nearly parallel in many directions, and the length of projection to these directions would be high. 
For example, when $\bsP$ and $\bsQ$ are identical, the principal angles between them are all 0.
$\bsP^\top \bsQ$ equals the identity matrix and the row norms of it are all 1.
On the other hand, large principal angles imply that
the two subspaces span many directions that are nearly orthogonal to each other.
For example, when $\bsP$ and $\bsQ$ are orthogonal, the principal angles between them are all $\frac{\pi}{2}$.
$\bsP^\top \bsQ$ equals the zero matrix and the row norms of it are all 0.
}

{
Therefore, large principal angles imply that the majority of the row norms $\|(\bsP^\top \bsQ)_i\|_2$ are small (Lemma~\ref{lmm:small_sum_of_square}).
These small row sums enable us to find a partial coloring $\boldsymbol{\zeta}$ that further reduces the bound for $\|\bsP^\top \bsQ \boldsymbol{\zeta}\|_\infty$ to 0.5 (Lemma \ref{lmm:discrepancy}), employing similar approaches as in \cite{spencer1985six}.
In summary, we have completed the proof of Lemma~\ref{lmm:large_cost}.
}

\vspace{0.1in}
\noindent \textbf{\Cref{lmm:large_size}: Construction of $\calP$.}
Our construction is inspired by a geometric observation made in Absil et al. \cite{absil2006largest}, which states that the largest principal angle between the orthonormal bases $\bsP$ and $\bsQ$ of two $n$-dimensional subspaces, independently drawn from the uniform distribution on the Grassmann manifold of $n$-planes in $\mathbb{R}^d$, is at least $\Omega(1)$ with high probability.
We extend this result and prove that even the largest $O(n)$ principal angles between $\bsP$ and $\bsQ$ are at least $\Omega(1)$ (Lemma~\ref{lmm:principal_angles}). 
This extension relies on a more careful integral calculation for the density function of principal angles.
Moreover, this extension leads to an enhanced geometric observation: on average, these two orthonormal bases $\bsP$ and $\bsQ$ are distinct with respect to principal angles, which could be of independent research interest.
Then using standard probabilistic arguments, we can randomly select a family $\mathcal{P}$ of $\exp(\Omega(nd))$ orthonormal bases, ensuring that the largest $O(n)$ principal angles between any pair $\bsP$ and $\bsQ$ from $\mathcal{P}$ are consistently large.

\subsection{Other Related Work}
\label{sec:related}

\noindent\textbf{Coreset construction for clustering.} There are a series of works towards closing the upper and lower bounds of coreset size for \kzC\ in high dimensional Euclidean spaces \cite{feldman2011unified,braverman2016new,cohen2021new,cohen2022towards,cohenaddad2022improved,huang2023optimal}.
The current best upper bound is $\Tilde{O}(\min\{\frac{k^{\frac{2z+2}{z+2}}}{\eps^2},\frac{k}{\eps^{z+2}}\})$ by \cite{cohen2021new,cohen2022towards,cohenaddad2022improved,huang2023optimal}.
{Specifically, Cohen-Addad et al.~\cite{cohen2022towards} got an upper bound of $\tilde{O}\left(k \varepsilon^{-2} \cdot \min \left(\varepsilon^{-z}, k\right)\right)$ and Huang et al.~\cite{huang2023optimal} got an upper bound of $\tilde{O}\left(k^{\frac{2 z+2}{z+2}} \varepsilon^{-2}\right)$.}
On the other hand, Huang and Vishnoi~\cite{huang2020coresets} proved a size lower bound $\Omega(k\min\{2^{z/20},d\})$ and Cohen-addad et al.~\cite{cohen2022towards} showed bound $\Omega(k \eps^{-2})$.
Very recently, Huang and Li~\cite{huang2023optimal} gave a size lower bound of $\Omega(k \eps^{-z-2})$ for $\eps=\Omega(k^{-1/(z+2)})$, which matches the size upper bound and is nearly tight.
There have also been studies for the coreset size when the dimension is small, see e.g. \cite{har2004coresets,huang2023coresets}.
In addition to offline settings, coresets have also been studied in the stream setting~\cite{har2004coresets,braverman2019streaming,cohen2023streaming}, distributed setting~\cite{balcan2013distributed} and dynamic setting~\cite{henzinger2020fully}. {It is worth noting that the existing literature all assumes that we can store vectors with infinite precision and thus focuses primarily on the size of the coreset. This simplification makes it impossible for us to determine the exact space complexity when using these algorithms\cite{cohen2021new,cohen2022towards,cohenaddad2022improved,huang2023optimal}. Our paper addresses this issue by designing a quantization scheme for weights and points. Meanwhile, these papers only obtained lower bounds on the number of points used by the coreset method, whereas we focus on the space complexity that any algorithm might require and provide the lower bound. }

\vspace{0.1in}
\noindent\textbf{Dimension reduction.}
Dimension reduction is an important technique for data compression, including techniques like Johnson-Lindenstrauss (JL)~\cite{makarychev2019performance,charikar2022johnson} and terminal embedding~\cite{narayanan2019optimal,huang2020coresets}. 
The target dimension of any embedding satisfying the JL lemma is shown to be $\Theta(\eps^{-2}\log n)$ \cite{johnson1986extensions,alon2017optimal,larsen2017optimality}, where $n$ is the size of the data set.
The space complexity of JL is shown to be $O(\log d+\log (1 / \delta)(\log \log (1 / \delta)+\log (1 / \varepsilon)))$ random bits \cite{kane2011almost}, where $\eps$ and $\delta$ are error and fail probability respectively.
In the context of clustering, Makarychev et al.~\cite{makarychev2019performance} give a nearly optimal target dimension $O(\log(k/\eps)/\eps^2)$ for \kzC\ by applying JL. 
Their reduction ensures that the cost of the optimal clustering is preserved within a factor of $(1 + \eps)$ instead of preserving the clustering cost for all center sets. 
For terminal embedding, Narayanan and Nelson~\cite{narayanan2019optimal} provided an optimal terminal embedding with target dimension $O(\eps^{-2}\log n)$.
For the space complexity, the best-known construction of terminal embedding costs $\Tilde{O}(nd)$ bits~\cite{cherapanamjeri2022terminal}.

\vspace{0.1in}
\noindent\textbf{Space complexity.} 
{
Space complexity is receiving increasing attention in the era of big data. 
Various problems have been studied by previous research.
For example, Carlson et al.\cite{carlson2019optimal} shows that approximately storing the sizes of all cuts in an undirected graph on $n$ vertices up to a $(1\pm \varepsilon)$ error requires $\Omega\left(\frac{n\log n}{\varepsilon^2}\right)$ bits. 
Recently, Dexter et al.\cite{dexterspace} consider the problem of approximating logistic loss. 
They prove that the lower bound of space complexity is $\Omega\left(\frac{d}{\varepsilon^2}\right)$ when the complexity of the problem is constant and existing coreset constructions are optimal up to logarithmic factors in this regime. 
Their technique is based on the reduction to ReLu regression and the INDEX problem in communication complexity, which is completely different from ours.
}

\section{Preliminaries}\label{sec:preliminary}

{
Before we start our proof, we will first fix some notations. In the following chapters, we will use lowercase letters to denote scalars, such as $x$; lowercase boldface to represent vectors, such as $\boldsymbol{p}$; and uppercase boldface to denote matrices.
$\bsI_q$ is denoted as a $q\times q$ identity matrix. 
For convenience, we slightly abuse the notation by also using uppercase boldface to denote datasets, since a dataset with $n$ points in a $d$-dimensional space can be represented as a $d \times n$ matrix. Calligraphic capital letters will be used to denote sets {other than datasets}, such as $\mathcal{P}$, and upright font will be used to represent functions, such as $\operatorname{dist}(\cdot, \cdot)$. Table \ref{tab:notation} summarizes some {frequently used} notations in this paper. 
}
\begin{table}[ht]
\centering
\caption{Notations used in this paper.}\label{tab:notation}
\resizebox{\linewidth}{!}{\begin{tabular}{|c|c|c|c|}
\hline
Notation & Definition & Notation & Definition \\\hline
    $k$     & the number of cluster centers         & $z$  & the power parameter for the distance function           \\\hline
      $d$   &  dimension           &  $n$       & the size of the dataset           \\\hline
    $\Delta$     &  the parameter for the discretization of the space   &   
    $\varepsilon$ & the error parameter for the estimation of the cost function
         \\\hline
    $\sigma_i, i\in [n]$      &  the $i$-th singular value of the matrix    &  $\theta_i, i\in [n]$      &  the $i$-th principal angle
         \\\hline  $\boldsymbol{\zeta}$      &  coloring vector &
$\bsp_i,\bsq_i$ & data points in the dataset 
         
        \\\hline $\boldsymbol{P},\boldsymbol{Q}$      &  datasets with $n$ points        
     &   $\boldsymbol{C}$      & center set with $k$ points  
         \\\hline
         $\boldsymbol{S}$      &  coreset  &

    $\bsI_{n} \in \mathbb{R}^{n\times n}$ &  the identity matrix 
         \\\hline
 $\boldsymbol{U}$      & the inner product matrix $\boldsymbol{U}=\boldsymbol{P}^{T}\boldsymbol{Q}$  &

 $\mathcal{P}$      &  a large family of datasets    
          
         \\\hline
          $\mathcal{C}$      & the collection of all center sets with $k$ points  &

    $\operatorname{sc}\left(\right)$      &  the space complexity function of \kzC
         \\\hline

    $\operatorname{cost}\left(\right) $     & the cost function for clustering &

    $\operatorname{ent}\left(\right)$      &  the entropy function  
         \\\hline

    $\w\left(\right) $ & the weight function for coreset  &
     $\mathcal{O}\left(\right)$ & the $\varepsilon$-sketch for dataset 
    \\\hline

    $\expo\left(\right) $ & the encoding function for the exponent part  &
     $\fraction\left(\right)$ & the encoding function for the significand part 
    \\\hline

    $\Psi\left( \right)$      &  the size function of the coreset & $\operatorname{dens}\left(\right)$  & the density function
    \\\hline
     $\tau\left(\right)$     & the mapping function for the terminal embedding & 
    $\det\left(\right)$ & the determinant of a matrix 
    \\\hline

    $\embed\left(\right)$ & the space complexity function of terminal embedding
    &  $\CC\left(\right)$ &  \makecell[c]{the communication complexity function of \\distributed \kzC} \\\hline
    
\end{tabular}}
\end{table}

{
Next, we give a brief prelude to the tools used in our proof. 
In the proof of space upper bound in Section~\ref{sec:upper_bound}, we need the following lemma to bound the distance between two points.
\begin{lemma}[Relaxed triangle inequality (Lemma 10 of \cite{cohen2022towards})]
    \label{lmm:triangle}
    Let $\bsp_1, \bsp_2, \bsp_3$ be arbitrary points in a metric space with distance function $\dist()$, and let $z$ be a positive integer. Then for any $\eps>0,$
    \begin{align*}
         &\dist^z(\bsp_1, \bsp_2) \leq (1+\eps)^{z-1} \dist^z(\bsp_1, \bsp_3)+\left(\frac{1+\eps}{\eps}\right)^{z-1} \dist^z(\bsp_2, \bsp_3), \\
         &\left|\dist^z(\bsp_1, \bsp_2)-\dist^z(\bsp_1, \bsp_3)\right| \leq \eps \cdot \dist^z(\bsp_1, \bsp_3)+\left(\frac{z+\eps}{\eps}\right)^{z-1} \dist^z(\bsp_2, \bsp_3).
    \end{align*}
\end{lemma}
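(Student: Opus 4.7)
The plan is to derive the first inequality directly from Jensen's inequality applied to the ordinary triangle inequality, and then deduce the second inequality from the first via a parameter rescaling together with a short case analysis.

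For the first inequality, I would start with the triangle inequality $d(a,b) \leq d(a,c)+d(b,c)$ and then exploit the convexity of $t \mapsto t^z$ (which holds for $z \geq 1$). Writing the right-hand side as the convex combination $d(a,c)+d(b,c) = (1-\alpha)\cdot \tfrac{d(a,c)}{1-\alpha} + \alpha \cdot \tfrac{d(b,c)}{\alpha}$ for some $\alpha \in (0,1)$, Jensen's inequality gives $(d(a,c)+d(b,c))^z \leq (1-\alpha)^{1-z}\,d(a,c)^z + \alpha^{1-z}\,d(b,c)^z$. Choosing $1-\alpha = \tfrac{1}{1+\eps}$ (equivalently $\alpha = \tfrac{\eps}{1+\eps}$) makes $(1-\alpha)^{1-z} = (1+\eps)^{z-1}$ and $\alpha^{1-z} = \bigl(\tfrac{1+\eps}{\eps}\bigr)^{z-1}$, which are exactly the coefficients claimed.

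For the second inequality, by the symmetry of $|\cdot|$ it suffices to handle the case $d(a,b) \geq d(a,c)$ (the reverse case uses $d(a,c) \leq d(a,b) + d(b,c)$ instead). The idea is to invoke the first inequality with the rescaled parameter $\eps' = \eps/z$, which immediately produces the desired coefficient $\bigl(\tfrac{z+\eps}{\eps}\bigr)^{z-1}$ in front of $d(b,c)^z$. Subtracting $d(a,c)^z$ from both sides then leaves a residual $\bigl((1+\eps/z)^{z-1} - 1\bigr)\, d(a,c)^z$, so the remaining task is to bound this quantity by $\eps \cdot d(a,c)^z$. For integer $z$, I would expand $(1+\eps/z)^{z-1}$ via the binomial theorem and use the estimate $\binom{z-1}{i}\leq z^i/i!$ term by term, combined with a case split on whether $d(b,c) \leq \tfrac{\eps}{z}\,d(a,c)$: in the small-$d(b,c)$ regime, the sharper mean-value bound $d(a,b)^z - d(a,c)^z \leq z\cdot d(a,b)^{z-1}\cdot d(b,c)$ can be used to absorb the extra factor into $\eps \cdot d(a,c)^z$ via Young's inequality with conjugate exponents $(z/(z-1),z)$; in the large-$d(b,c)$ regime the excess is absorbed directly into the $\bigl(\tfrac{z+\eps}{\eps}\bigr)^{z-1}\,d(b,c)^z$ term, which dominates.

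The main obstacle I expect is pinning down the multiplicative constant in front of $d(a,c)^z$ to be exactly $\eps$. A naive substitution of $\eps' = \eps/z$ into the first inequality leaves a residual factor $(1+\eps/z)^{z-1}-1$, which for large $z$ or large $\eps$ can exceed $\eps$ (for example $(1.1)^{9} - 1 > 1$ when $z=10$, $\eps=1$), so the straightforward chain fails without further argument. The case split described above—absorbing the slack using the mean-value estimate in one regime and the $d(b,c)^z$ coefficient in the other—is designed precisely to close this gap while preserving the coefficient $\bigl(\tfrac{z+\eps}{\eps}\bigr)^{z-1}$ on $d(b,c)^z$.
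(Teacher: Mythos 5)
The paper does not prove this lemma --- it is quoted from \cite{cohen2022towards} --- so there is no in-paper argument to compare against; I can only assess your proposal on its own terms. Your proof of the first inequality is correct and is the standard one: triangle inequality, then convexity of $t\mapsto t^z$ applied to the splitting $x+y=(1-\alpha)\frac{x}{1-\alpha}+\alpha\frac{y}{\alpha}$ with $1-\alpha=\frac{1}{1+\eps}$, giving exactly the coefficients $(1+\eps)^{z-1}$ and $\bigl(\frac{1+\eps}{\eps}\bigr)^{z-1}$. No issues there.

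The second inequality is where the proposal breaks down, and the obstacle you flagged is not a technicality you can engineer around: the statement as written is actually false for large $z$, so no case split closes the gap. Take $z=10$, $\eps=1$, and collinear points with $d(a,c)=1$, $d(b,c)=0.1$, $d(a,b)=1.1$. Then $|d(a,b)^{10}-d(a,c)^{10}|=1.1^{10}-1\approx 1.594$, while $\eps\, d(a,c)^{10}+\bigl(\frac{z+\eps}{\eps}\bigr)^{z-1}d(b,c)^{10}=1+11^{9}\cdot 10^{-10}\approx 1.236$. Note this sits exactly at the boundary $d(b,c)=\frac{\eps}{z}d(a,c)$ of your proposed case split, and there neither absorption mechanism works: the mean-value bound $z\,d(a,b)^{z-1}d(b,c)=1.1^{9}\approx 2.36$ overshoots $\eps\,d(a,c)^z=1$, and the term $\bigl(\frac{z+\eps}{\eps}\bigr)^{z-1}d(b,c)^z\approx 0.236$ is far too small to pick up the slack. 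More generally, at $t=\frac{\eps}{z}s$ one has $(s+t)^z-s^z\approx(e^{\eps}-1)s^z$ while the right-hand side tends to $\eps\, s^z$ as $z\to\infty$, so the inequality only survives when $\eps=O(1/z)$. The provable forms in the literature either restrict $\eps$ in this way, enlarge the second coefficient (e.g.\ $2z$ in place of $z$), or charge the $\eps$-term to $\max\{d(a,b),d(a,c)\}^z$. (The paper's downstream uses, with constant $z$ and the lemma invoked at parameters like $\eps/(4z)$, all sit in the safe regime, so this does not damage the paper --- but your proof of the statement as literally quantified, for all $\eps>0$ and all positive integers $z$, cannot be completed.)
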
 }

{
The proof of space lower bound in Section~\ref{sec:lower_bound} relies on two key concepts.
The first one is partial coloring, which is commonly found in the discrepancy literature. 
}

{
\begin{definition}[Partial Coloring]\label{def:partial_coloring}
    Let $\bsU$ be a matrix in $\mathbb{R}^{n\times n}$. The goal of a partial coloring is to find a vector ${\boldsymbol{\zeta}}\in \{-1,0,1\}^n$ such that 
    \begin{enumerate}
        \item The number of zero entries $|i:{\boldsymbol{\zeta}}_i=0|\leq \frac{1}{4}n$;
        \item  The discrepancy, i.e. maximum norm $\left\|  \bsU{\boldsymbol{\zeta}}\right\|_\infty$ is as small as possible.
    \end{enumerate}
\end{definition}
}

{The second concept is called the principal angles, which characterize the relative positions of two subspaces. }
{
\begin{definition}[Principal angles]
\label{def:principal_angles}
Suppose $n\leq d$.
Given two $n$-dimensional subspaces $\mathcal{X}$ and $\mathcal{Y}$ of $\mathbb{R}^{d}$, there exists then a sequence of angles called the principal angles (or canonical angles)  $0\leq \theta_{1}(\mathcal{X}, \mathcal{Y}),\cdots, \theta_{n}(\mathcal{X}, \mathcal{Y})  \leq \frac{\pi}{2}$. The first one is defined as
\begin{align*}
\theta_{1}\left(\mathcal{X}, \mathcal{Y}\right) := 
\min & \left\{  \arccos \left(|\bsx^\top \bsy| \right) \mid \bsx \in \mathcal{X}, \bsy \in \mathcal{Y},\left\|\bsx\right\|_2=1,\left\|\bsy\right\|_2=1  \right\} = \angle\left(\bsx_1, \bsy_1\right),
\end{align*}

where the vectors $\bsx_1$ and $\bsy_1$ are the corresponding principal vectors. The other principal angles and vectors are then defined recursively via
\begin{align*}
\theta_{i}(\mathcal{X}, \mathcal{Y}) := 
\min &\left\{ \arccos\left(\left.|\bsx^\top \bsy| \right) \right\rvert\, \bsx \in \mathcal{X}, \bsy \in \mathcal{Y},\left\|\bsx\right\|_2=1,\left\|\bsy\right\|_2=1, \bsx \perp \bsx_j, \bsy \perp \bsy_j, \right.\\
&\left. \forall j \in\{1, \ldots, i-1\}\right\}.
\end{align*}

\end{definition}
Slightly abusing the notation, we use $\theta_i$ for brevity when the corresponding two subspaces are clear from the context. 
The notion of principal angles between subspaces was first introduced by Jordan~\cite{jordan1875essai} and has many important applications in statistics and numerical analysis \cite{dahlquist1968comparison,varah1970computing}. 
Intuitively, we can see that the principal vectors in each subspace form an orthonormal basis and the principal angles  $(\theta _{1}, \cdots, \theta _{k})$ are a set of minimized angles between the two subspaces. 
Small principal angles indicate that the two subspaces are nearly parallel in many directions, while large principal angles imply that 
the two subspaces are more distinct and span many directions that are nearly orthogonal to each other. 
For example, when $\mathcal{X}\perp \mathcal{Y}$, all principal angles $\theta_i  = \frac{\pi}{2}$. 
}

{
As another example, we consider two distinct planes in $\mathbb{R}^3$ (i.e., two-dimensional subspaces) intersect along a line shown in Figure~\ref{fig:principal_angle}. By the definition, we will choose $\bsx_1=\bsy_1$ on the intersection line and thus $\theta_1 = 0$. We then have $\bsx_2$ and $\bsy_2$ as the orthogonal directions to the intersection line on each of the two planes respectively. The angle between them is the second principal angle $\theta_2 = \theta$.
}

\begin{figure}[ht]
\centering\includegraphics[width=0.4\linewidth]{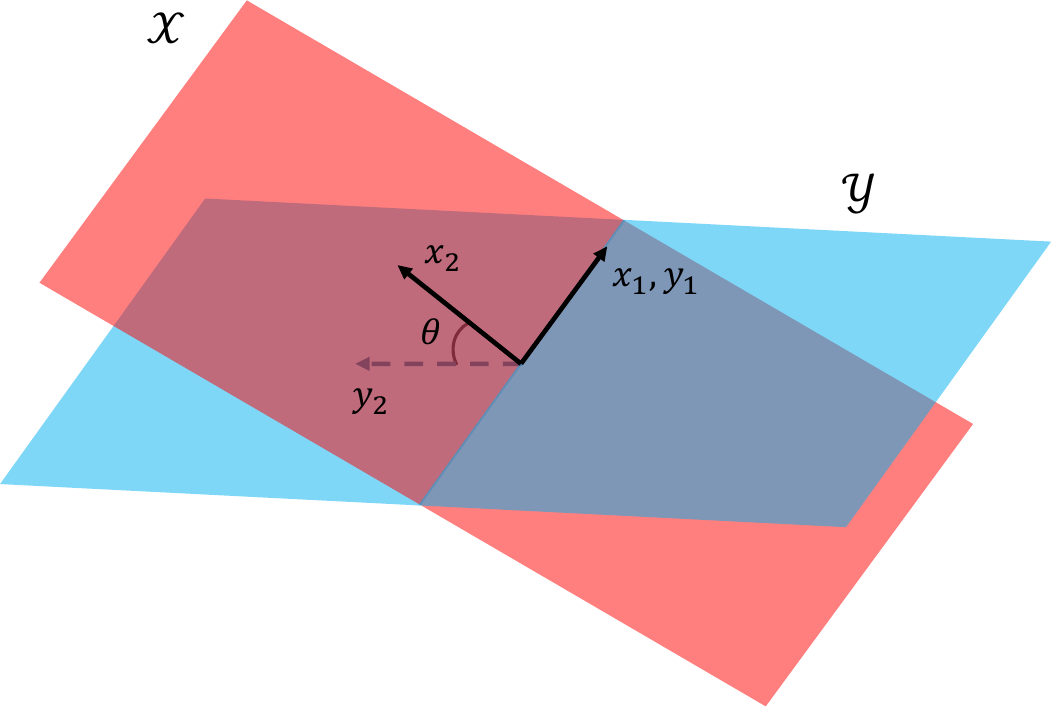}
  \caption{Example of principal angles of two distinct planes in $\mathbb{R}^3$ sharing a line .}\label{fig:principal_angle}
\end{figure}

The following lemma shows a relation between principal angles and singular value decomposition.

\begin{lemma}[Property of principal angles (Theorem 1 in~\cite{bjorck1973numerical})] \label{lmm:property_pa} 
Given two $n$-dimensional subspaces $\mathcal{X}$ and $\mathcal{Y}$ of $\mathbb{R}^{d}$, let the columns of matrices $\bsX \in \mathbb{R}^{d\times n}$ and $\bsY \in \mathbb{R}^{d\times n}$ form orthonormal bases for the subspaces $\calX$ and $\calY$ respectively. 
Denote $1\geq \sigma_1 \geq \cdots \geq \sigma_n$ to be the singular values of the inner product matrix $\bsX^{\top}\bsY$. 
We have 
$\sigma_{i} = \cos \left(\theta_{i}\right), \forall i\in [n].$
\end{lemma}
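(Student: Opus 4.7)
The plan is to translate the recursive optimization defining the principal angles into a variational problem about the matrix $M := X^T Y$, and then invoke the standard Courant--Fischer characterization of singular values.

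First I would parameterize any unit vector $x \in \calX$ as $x = X a$ for some $a \in \R^n$. Since $X$ has orthonormal columns, $X^T X = I_n$, so $\|x\|_2 = \|a\|_2$; similarly every unit $y \in \calY$ can be written $y = Y b$ with $\|b\|_2 = 1$. A direct computation then gives $x^T y = a^T X^T Y b = a^T M b$, converting the inner product on $\R^d$ into a bilinear form on $\R^n \times \R^n$ governed entirely by $M$.

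For the base case $i = 1$, the variational characterization of the largest singular value yields
\[
\cos(\theta_1(\calX, \calY)) \;=\; \sup_{\|a\|_2 = \|b\|_2 = 1} |a^T M b| \;=\; \sigma_1(M).
\]
Writing the SVD $M = U \Sigma V^T$ with columns $U = [u_1, \ldots, u_n]$ and $V = [v_1, \ldots, v_n]$, the supremum is attained at $(a,b) = (u_1, v_1)$, so one can take $x_1 = X u_1$ and $y_1 = Y v_1$. For the inductive step, assume $x_j = X u_j$ and $y_j = Y v_j$ have been produced for $j < i$. Using $X^T X = I_n$, the ambient constraint $x \perp x_j$ becomes $a^T X^T X u_j = a^T u_j = 0$, and analogously $y \perp y_j$ becomes $b^T v_j = 0$. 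Consequently,
\[
\cos(\theta_i(\calX, \calY)) \;=\; \sup\bigl\{ |a^T M b| : \|a\|_2 = \|b\|_2 = 1,~ a \perp u_1, \ldots, u_{i-1},~ b \perp v_1, \ldots, v_{i-1} \bigr\},
\]
which by Courant--Fischer for singular values equals $\sigma_i(M)$, attained at $(a,b) = (u_i, v_i)$. Taking $x_i = X u_i$ and $y_i = Y v_i$ closes the induction. The bound $\sigma_1 \le 1$ asserted in the statement is automatic since $\|M\|_{\mathrm{op}} \le \|X\|_{\mathrm{op}} \|Y\|_{\mathrm{op}} = 1$.

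The only real subtlety, and the point that has to be handled carefully, is the translation of the orthogonality constraints from the ambient space $\R^d$ to the coefficient space $\R^n$: this step relies essentially on the orthonormality of the columns of both $X$ and $Y$, so that the Gram matrices $X^T X$ and $Y^T Y$ are the identity and the constraint $x \perp x_j$ reduces to the clean condition $a \perp u_j$. Without orthonormal bases the reduction would pick up Gram-matrix factors and the equivalence with the SVD variational problem would break; with them, the inductive argument is essentially immediate.
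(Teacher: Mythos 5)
Your proof is correct and is the standard argument for this fact; the paper itself does not prove the lemma but imports it directly from Bj\"orck--Golub (Theorem~1 of the cited reference), and your reduction of the ambient variational problem to the Courant--Fischer characterization of the singular values of $X^T Y$ is exactly how that result is established. The only point you leave implicit is that the recursively defined quantities $\cos(\theta_i)$ do not depend on which maximizers $(x_j, y_j)$ are chosen at earlier stages when the supremum is attained non-uniquely; your argument exhibits one admissible choice (namely $x_j = Xu_j$, $y_j = Yv_j$) realizing the singular values, which suffices for the lemma as used in the paper.
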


Note that Lemma~\ref{lmm:property_pa} holds for any orthonormal basis $\bsX$ and $\bsY$ of corresponding subspaces and the values of principal angles are independent of the choices of them. 
For any orthogonal matrix $\boldsymbol{A},\boldsymbol{B} \in \mathbb{R}^{n\times n}$, it is easy to find that $\sigma\left(\boldsymbol{A}^{\top}\bsX^{\top}\bsY \boldsymbol{B}\right) = \sigma\left(\bsX^{\top}\bsY\right)$. 

Using Figure~\ref{fig:principal_angle} as an example, we would have $\bsX = [\bsx_1,\bsx_2]$ and $\bsY = [\bsy_1,\bsy_2]$. By calculation, we would have 
\begin{align*}
\bsX^\top \bsY &= \begin{bmatrix} \bsx_1^T
 \\
\bsx_2^T
\end{bmatrix} [\bsy_1,\bsy_2]  = \begin{bmatrix}
\bsx_1^T \bsy_1  &\bsx_1^T \bsy_2 \\
\bsx_2^T \bsy_1 & \bsx_2^T \bsy_2
\end{bmatrix} = \begin{bmatrix}
\cos(0) &0 \\
0 & \cos(\theta)
\end{bmatrix}.
\end{align*}

\section{Proof of Theorem~\ref{thm:main_upper}: Space Upper Bounds}
\label{sec:upper_bound}

The proof for the first part when $n\leq k$ is to simply store all data points.
Since $\bsP\subseteq [\Delta]^d$, the storage space for each coordinate is at most $\log \Delta$, which results in the space upper bound $O(nd \log \Delta)$.
Next, we focus on the second part when $n > k$.
The main idea is to construct a sketch to store a coreset using space as small as possible.

Let $\bsP\subseteq [\Delta]^d$ be a dataset of size $n> k$.
Let $(\bsS,\w)$ be an $\frac{\eps}{5}$-coreset of $\bsP$ for \kzC.
Let $\bsC^\star$ be an $O(1)$-approximation of optimal center set, that is, a center set satisfying $\cost_z(\bsP, \bsC^\star) \leq O(1)\cdot \min_{\bsC\in \mathcal{C}} \cost_z(\bsP, \bsC).$
We argue that by rounding each $\bsc^*\in \bsC^\star$ to the nearest point in $\bsP$, i.e. $\bsp_{\bsc^*}:=\operatorname{argmin}_{\bsp\in \bsP}\dist(\bsc^*,\bsp)$, it remains the property of $O(1)$-approximation.
To this end, by Lemma~\ref{lmm:triangle}, for any $\bsp\in \bsP,$ $|\dist^z(\bsp,\bsc^*)-\dist^z(\bsp,\bsp_{\bsc^*})| \leq \dist^z(\bsp,\bsc^*) + (1+z)^{z-1}\dist^z(\bsp_{\bsc^*},\bsc^*)  \leq (1+(1+z)^{z-1})\dist^z(\bsp,\bsc^*).$
As $z$ is constant, the claim is proved.
Without of loss of generality, we assume $\bsC^\star \subseteq \bsP$ and $\cost_z(\bsP, \bsC^\star) \leq 2 \min_{\bsC\in \mathcal{C}} \cost_z(\bsP, \bsC).$

The compression scheme is summarized in Algorithm~\ref{alg:upper}.
Intuitively, we need to compress the weight $\w(\bsp)$ and each coordinate $i\in [d]$ of points $\bsp\in \bsP$. { Here we use a base-2 floating-point format. For the exponent, we use an encoding function $\expo(\cdot,\cdot),$ where the first argument is a data point and the second argument is either $\w$ or an integer $i$, which stands for weight or its $i$'th coordinate. For the significand, we use an encoding function $\fraction(\cdot,\cdot,\cdot)$, where the first two arguments are similar to $\expo()$'s and the last argument is the precision parameter.}
For $\w(\bsp)$, we either safely ignore too small weight, i.e., $\w(\bsp)\leq \frac{\eps}{4|\bsS|}$, or {remain its exponent by $\expo(\bsp, \w)$ and the first $\lceil \log 4/\eps \rceil$ significant digits by $\fraction(\bsp,\w,\eps)$}.
{For each $\bsp$, we denote $\bsc_{\bsp}$ to be the closest center of $\bsp$ in $\bsC$, and $\bsc^*_{\bsp}$ to be the closest center of $\bsp$ in $\bsC^*$.} Then compress each coordinate of $\bsp-\bsc^*_{\bsp}$ by a similar idea as for $\w(\bsp)$. {We use the notation $\bsc^*_{l}$ to denote the $l$'th point in $\bsC^\star$, and use $\bsp[i]$ to denote $i$'th coordinate of $\bsp$}.

\begin{algorithm}[H]
    \caption{A compression scheme based on coreset}
    \label{alg:upper}
    \begin{algorithmic}[0]
    \State \textbf{Input}: Error parameter $\eps\in (0,1)$, an $\frac{\eps}{5}$-coreset $\bsS\subseteq \bsP$ of size $|\bsS|\leq \Psi(n)$ together with a weight function $\w:\bsS\rightarrow \R_{\ge 0}$, an $2$-approximate center set $\bsC^\star\subseteq \bsP$ of $\bsP$ for \kzC 
   \State \textbf{Output}: A sketch $\calO$ of $\bsP$ for \kzC 
   \State Partition $\bsS$ into\\ $\bsS_l := \left\{ \bsp\in \bsS |\bsc^*_l = \operatorname{argmin}_{\bsc^*\in \bsC^\star} \dist(\bsp,\bsc^*)\right\}, l \in [k]$;
        \For{$\bsc^*_l\in \bsC^\star$}
            \For{$\bsp\in \bsS_{l}$}
                \If{$\w(\bsp)\leq \frac{\eps}{4|\bsS|}$} 
                     \State $(\fraction(\bsp,\w,\eps),\expo(\bsp, \w))\leftarrow (0,0)$;
                \Else {}
                    $\fraction(\bsp,\w,\eps)\leftarrow \frac{\w(\bsp)}{2^{\lfloor \log \w(\bsp) \rfloor}}$, rounding to $\lceil \log 4/\eps \rceil$ decimal places; $\expo(\bsp, \w)\leftarrow \lfloor \log \w(\bsp) \rfloor$;
                \EndIf
                \ForEach{coordinate $i\in [d]$}
                    \If{$\bsp[i]-\bsc^*_{l}[i] = 0$}
                        \State $(\fraction(\bsp,i,\eps),\expo(\bsp,i)) \leftarrow (0,0)$;
                    \Else
                        {} $\fraction(\bsp,i,\eps)\leftarrow \frac{\bsp[i]-\bsc^*_{l}[i]}{2^{\lfloor \log (\bsp[i]-\bsc^*_{l}[i]) \rfloor}}$, rounding to $\lceil \log 4 z/\eps \rceil$ decimal places; \State $\expo(\bsp,i)\leftarrow\lfloor \log (\bsp[i]-\bsc^*_{l}[i]) \rfloor$;
                    \EndIf
                \EndFor
                \State $\calO_l \leftarrow \cup_{\bsp\in \bsS_l} (\{\fraction(\bsp,i,\eps),\expo(\bsp, i)\}_{i=1}^d)\cup \cup_{\bsp\in \bsS_l}(\fraction(\bsp,\w,\eps),\expo(\bsp, \w))$;
            \EndFor
        \EndFor
        \\ \Return $\calO \leftarrow \cup_{l\in [k]}(\bsc^*_l,\calO_l)$
    \end{algorithmic}
\end{algorithm}

Since $\bsS$ is a coreset, we will make use of the following lemma.

\begin{lemma}[Sum of weights]
\label{lmm:sum}
Given dataset $\bsP\subseteq [\Delta]^d$ of size $n$ and suppose $\eps\in (0,0.5)$. An $\eps$-coreset of $\bsP$ for \kzC\ satisfies that $\sum_{\bsp\in \bsS} \w(\bsp) \in (1\pm 4\eps) n$.
\end{lemma}

\begin{proof}
    
    Consider a center set $\bsC=\{\bsc,\ldots,\bsc\}$ such that for all $\bsp_1, \bsp_2 \in \bsP$, $$\dist^z(\bsp_1,\bsC) \in \left(\left(1\pm 0.2\eps\right)\dist^z(\bsp_2,\bsC)\right),$$ which could be obtained by choosing $\bsC$ far away from $\bsP$. Then we have $\frac{\sum_{\bsp\in \bsP} \dist^z(\bsp,\bsC)}{\sum_{\bsp \in \bsS} \w(\bsp)\dist^z(\bsp,\bsC)}$ is bounded by
    \begin{align*}
         & \left(\frac{\left(1- 0.2\eps\right)n}{\left(1+ 0.2\eps\right)\sum_{\bsp\in \bsS} \w(\bsp)}, \frac{\left(1+ 0.2\eps\right)n}{\left(1- 0.2\eps\right)\sum_{\bsp\in \bsS} \w(\bsp)}\right) \in  \left(1\pm \eps\right) \frac{n}{\sum_{\bsp\in \bsS} \w(\bsp)}.
    \end{align*}
    By the coreset definition, we have $ \frac{\cost_z(\bsP,\bsC)}{\sum_{\bsp \in \bsS} \w(\bsp)\dist^z(\bsp,\bsC)} = \frac{\sum_{\bsp\in \bsP} \dist^z(\bsp,\bsC)}{\sum_{\bsp \in \bsS} \w(\bsp)\dist^z(\bsp,\bsC)}\in 1\pm \eps$, then 
    \begin{align*}
        \sum_{\bsp\in \bsS} \w(\bsp) & \in \left(\frac{1-\eps}{1+\eps}, \frac{1+\eps}{1-\eps}\right) n \in (1\pm 4\eps) n.
    \end{align*}
\end{proof}\color{black}

Now we are ready to prove the second part of Theorem~\ref{thm:main_upper}.

\begin{proof}[Proof of Theorem~\ref{thm:main_upper} (second part)]
    \noindent \textbf{Correctness analysis.} 
    We first show Algorithm~\ref{alg:upper} indeed outputs an $\eps$-sketch of $\bsP$ for \kzC\ function. 
    We use $\calO$ to obtain $\widehat{\w}(\bsp) = \fraction(\bsp,\w,\eps)\cdot 2^{\expo(\bsp, \w)}$ and $ \widehat{\bsp}=\bsp_0 + \bsc^*_{\bsp}$ with $ \bsp_0[i] = \fraction(\bsp,i,\eps)\cdot 2^{\expo(\bsp, i)}$ for $i\in [d]$.
    Given a center set $\bsC\in\mathcal{C},$ we approximate \kzC\ function by the following value: 
    \begin{equation*}
        \sum_{\bsp\in \bsS} \widehat{\w}(\bsp) \cdot \dist^z(\widehat{\bsp},\bsC).
    \end{equation*}
    We claim that for each $\bsp\in \bsS$, $\widehat{\w}(\bsp) \in (1\pm \frac{\eps}{4}) \w(\bsp)$ when $\w(\bsp)> \frac{\eps}{4|\bsS|}.$
    This is because $ \frac{\w(\bsp)}{2} \leq 2^{\expo(\bsp, \w)}\leq \w(\bsp)$ and $|\fraction(\bsp,\w,\eps)-\frac{\w(\bsp)}{2^{\expo(\bsp, \w)}}|\leq \frac{\eps}{4}$, which implies that $\fraction(\bsp,\w,\eps) \in (1\pm \frac{\eps}{4})\frac{\w(\bsp)}{2^{\expo(\bsp, \w)}}$ and $\widehat{\w}(\bsp) \in (1\pm \frac{\eps}{4})\w(\bsp)$. 
    When $\w(\bsp)\leq \frac{\eps}{4|\bsS|},$ we have $\w(\bsp)\dist^z(\bsp,\bsC)\leq \frac{\eps}{4|\bsS|} \dist^z(\bsp,\bsC) \leq \frac{\eps}{4|\bsS|}\sum_{\bsp\in \bsP}\dist^z(\bsp,\bsC)$, which means this quantity is too small to affect the \kzC\ function and we could set all such $\w(\bsp)$ to zero.

    Next, we analyze $\widehat{\bsp}$.
    By Lemma~\ref{lmm:triangle}, for any $\bsc \in \bsC,$  $|\dist^z(\bsp,\bsc)-\dist^z(\widehat{\bsp},\bsc)|$ is upper bounded by $\frac{\eps}{4} \dist^z(\bsp,\bsc) + (1+\frac{4z}{\eps})^{z-1}\dist^z(\bsp,\widehat{\bsp}).$
    By our construction, $\dist(\bsp,\widehat{\bsp}) = \dist(\bsp-\bsc^*_{\bsp},\bsp_0) = \sqrt{\sum_{i=1}^d (\bsp[i]-\bsc^*_{\bsp}[i]-\bsp_0[i])^2},$ and as the same argument for weight, $\bsp[i]-\bsc^*_{\bsp}[i]-\bsp_0[i] \leq 
    \frac{\eps}{4 z} (\bsp[i]-\bsc^*_{\bsp}[i]),$ thus $\dist(\bsp,\widehat{\bsp}) \leq \frac{\eps}{4z} \dist(\bsp,\bsc^*_{\bsp}).$

    Putting the above results together,
\begin{align*}
    \dist^z(\widehat{\bsp},\bsC) 
      & \in
        \dist^z(\bsp,\bsc_{\bsp}) \pm\left(\frac{\eps}{4}\dist^z(\bsp,\bsc_{\bsp})
        + \left(1+\frac{4z}{\eps}\right)^{z-1}\dist^z(\bsp,\widehat{\bsp})\right)\\
        & \in 
        \left(1\pm \frac{\eps}{4}\right)\dist^z(\bsp,\bsc_{\bsp}) 
        \pm \frac{\eps}{4z}\left(1+\frac{\eps}{4z}\right)^{z-1}\dist^z(\bsp,\bsc^*_{\bsp}) \\
        & \in 
        \left(1\pm \frac{\eps}{4}\right) \dist^z(\bsp,\bsc_{\bsp}) \pm \frac{\eps}{8} \dist^z(\bsp,\bsc^*_{\bsp})
        \\
        & \in 
        \left(1 \pm \frac{\eps}{2} \right)\dist^z(\bsp,\bsc_{\bsp}),
    \end{align*}
    and thus we have that
    \begin{align*}
        \sum_{\bsp\in \bsS} \widehat{\w}(\bsp)\cdot 
    \dist^z(\widehat{\bsp},\bsC) 
        & \in 
        \left(1\pm \frac{\eps}{4}\right)\left(1 \pm \frac{\eps}{2} \right)\left(1\pm \frac{\eps}{5}\right) 
        \cdot\sum_{\bsp\in \bsP}\dist^z(\bsp,\bsc_{\bsp})\\
        &\in 
        (1\pm\eps)\sum_{\bsp\in \bsP}\dist^z(\bsp,\bsc_{\bsp}),
    \end{align*}
    where the third line follows from $\ln (1+\frac{\eps}{4z}) \leq \frac{\eps}{4z}$, the fourth line follows from $\bsC^\star$ is a $2$-approximation of an optimal center set, and the penultimate line follows from the construction of coreset.
    Therefore we construct an $\eps$-sketch for $\bsP$.

    \vspace{0.1in}
    \noindent \textbf{Space complexity analysis.} 
    We analyze its space complexity from now on.
    The storage for $k$ grid points $\bsC^\star$ is $O(kd\log \Delta).$
    We store each weight by a set $(\fraction(\bsp,\w,\eps),\expo(\bsp, \w))$, where the first number is up to $O(\lceil \log 4/\eps \rceil)$ decimal places, and representing the integer number $\expo(\bsp, \w) = \lfloor \log \w(\bsp) \rfloor$ requires $O( \log\max\{\log\frac{4|\bsS|}{\eps},\log n \})$ bits by Lemma~\ref{lmm:sum}.
    Similarly, the storage for each $\bsp$ is $O(d\log 4z/\eps + d\log\log \Delta)$ bits.
    Combining them and notice that $|\bsS|\leq n,$ we obtain the final bound
    \begin{align*}
         &sc(P,\Delta, k, z, d, \eps)
       \\& \leq 
        O\left( kd\log\Delta + |\bsS|\left( \log\frac{4}{\eps} +  \log\max\left\{\log\frac{4|\bsS|}{\eps},\log n \right\} + d\log \frac{4z}{\eps} + d\log\log \Delta\right) \right) \\
        & = 
        O\left(kd\log\Delta +\Psi(n)(d\log 1/\eps  d\log\log \Delta + \log\log n)\right),
    \end{align*}
    where we ignore the dependence on $z.$
\end{proof}
\section{Proof of Theorem~\ref{thm:main_lower}: Space Lower Bounds}
\label{sec:lower_bound}

In this section, we prove the space lower bounds. 
The high-level idea is to construct a large family of datasets such that any two of them can not use the same sketch; summarized by the following lemma.
{
The intuition behind this lemma is that: if two datasets $\bsP$ and $\bsQ$ yield similar results under any clustering center set, our sketch does not need to allocate additional space to distinguish between them.
On the other hand, if they produce significantly different results under some clustering center set, our function must retain this information; otherwise, it would produce incorrect results on one of the datasets.
}

\begin{lemma}[A family of datasets leads to space lower bounds] 
\label{lmm:setsize_to_bit} 
Suppose there exists a family $\mathcal{P}$ of datasets of size $n\geq 1$ such that for any two datasets $\bsP, \bsQ\in \mathcal{P}$, there exists a center set $\bsC\in \calC$ with $\operatorname{cost}_z(\bsP, \bsC) \notin \left(1\pm 3\varepsilon\right)\operatorname{cost}_z(\bsQ, \bsC)$.
Then we have $\spc(n, \Delta, k,z,d, \varepsilon) \geq \Omega(\log \left|\mathcal{P}\right|)$. 
\end{lemma}

\begin{proof}
We prove this by contradiction. 
Assume that $\spc(n, \Delta, k,z,d, \varepsilon) = o(\log\left|\mathcal{P}\right|)$, we must be able to find two datasets $\bsP$ and $\bsQ$ such that they correspond to the same $\eps$-sketch $\calO$. 
Since $\calO$ is an $\eps$-sketch for both $\bsP$ and $\bsQ$, we have for every center set $\bsC\in \calC$,
\begin{align*}
& \calO(\bsC) \in (1\pm \eps)\cdot \operatorname{cost}_z(\bsP, \bsC) , ~\calO(\bsC) \in (1\pm \eps)\cdot \operatorname{cost}_z(\bsQ, \bsC),
\\& \operatorname{cost}_z(\bsP, \bsC) \leq \frac{1}{1-\eps}\calO(\bsC) \leq \frac{1+\eps}{1-\eps} \operatorname{cost}_z(\bsQ, \bsC) \leq (1+3\eps)\operatorname{cost}_z(\bsQ, \bsC),
\\& \operatorname{cost}_z(\bsP, \bsC) \geq \frac{1}{1+\eps}\calO(\bsC) \geq \frac{1-\eps}{1+\eps} \operatorname{cost}_z(\bsQ, \bsC) \geq (1-3\eps)\operatorname{cost}_z(\bsQ, \bsC).
\end{align*}
This contradicts with our assumption that $\operatorname{cost}_z(\bsP, \bsC) \notin \left(1\pm 3\varepsilon\right)\operatorname{cost}_z(\bsQ, \bsC)$.
\end{proof}

\subsection{Proof of Theorem~\ref{thm:main_lower}}
\label{sec:lower_bound_high}

We first prove the lower bound $\Omega\left(n d \log \Delta\right)$ when $n\leq k$. 
{In other words, we must store the entire dataset in this case.  }

\begin{proof}[Proof of Theorem~\ref{thm:main_lower} (first part)]
We construct a family $\mathcal{P}$ as follows: for each dataset $\bsP\in \calP$, we 
choose $n$ different grid points in $[\Delta]^d$. 
Note that for each single grid point, there are $\Delta^d$ choices. Therefore, the size $|\calP|$ is
$\binom{\Delta^d}{n}$, which implies that $\log |\calP| = \Omega(nd \log \Delta)$.

For any two datasets $\bsP, \bsQ \in \mathcal{P}$, there must be a single grid point $\bsq$ such that $\bsq\in \bsQ\setminus \bsP$. 
Let 
$\bsC = \bsP\cup \{\underbrace{\bsc,\cdots,\bsc}_{k-n\text{ points}}\}$
, where $\bsc\in \R^d$ is an arbitrary point with $\bsc\neq \bsq$. 
We have
\[\operatorname{cost}_{z}\left(\bsQ,\bsC\right) \geq \dist\left(\bsq,\bsC\right)^z >0,~\operatorname{cost}_{z}\left(\bsP,\bsC\right) = 0 \notin \left(1\pm 3\eps\right)\operatorname{cost}_{z}\left(\bsQ,\bsC\right). \]
Using Lemma~\ref{lmm:setsize_to_bit}, we obtain the space lower bound $\Omega\left(n d \log \Delta\right)$.
\end{proof}

We then consider the second part of Theorem~\ref{thm:main_lower} when $n>k$. 
We first prove the lower bound $\Omega\left(kd\min\left\{\frac{1}{\varepsilon^2},\frac{d}{\log d}, \frac{n}{k}\right\}\right)$.
Recall that we have $n > 2$ and $\Delta = \Omega(\frac{k^{\frac{1}{d}}\sqrt{d}}{\eps})$.
Recall that we have the notion and properties of principal angles in Definitio~\ref{def:principal_angles} and Lemma~\ref{lmm:property_pa}.

The idea is still to construct a large family $\calP$ of datasets to apply Lemma~\ref{lmm:setsize_to_bit}.
For ease of analysis, we first do not require the construction of datasets $\bsP\subseteq [\Delta]^d$ and actually ensure that every $\bsP$ consists of orthonormal bases of some subspaces. 
At the end of the proof, we will show how to round and scale these datasets $\bsP$ into $[\Delta]^d$. 


Let $\theta_{i}$ represent the $i$-th least principal angles of the two subspaces spanned by $\bsP$ and $\bsQ$. 
The following lemma shows that large principal angles between $\bsP$ and $\bsQ$ imply a large cost difference on some center set $\left\{\bsc,-\bsc\right\}$.

\begin{lemma}[Principal angles to cost difference] \label{lmm:large_cost} Let $\bsP$, $\bsQ$ be datasets of $n$ orthonormal bases ($ 100\leq n \leq \frac{d}{2}$) satisfying that 
$\theta_{\frac{1}{32}10^{-6}\cdot n}  \geq \arccos\left(\frac{10^{-3}}{4\sqrt{2}}\right).$
There exists a unit vector $\bsc\in \R^d$ such that
$\operatorname{cost}_2(\bsP, \{\bsc,-\bsc\}) -\operatorname{cost}_2(\bsQ, \{\bsc,-\bsc\}) \geq \frac{1}{2}\sqrt{n}.
$
\end{lemma}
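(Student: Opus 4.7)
The plan is to follow the three-step outline sketched in Section~\ref{sec:overview}. First I would expand the cost difference into a form that isolates inner products with $c$: since the columns of $P$ are orthonormal, $\|p_i\|_2=1$, and for any unit $c\in\R^d$, $\min(\|p_i-c\|_2^2,\|p_i+c\|_2^2)=2-2|\langle p_i,c\rangle|$. Summing over $i$ and doing the same for $Q$ gives
\[
\operatorname{cost}_2(P,\{c,-c\})-\operatorname{cost}_2(Q,\{c,-c\})=2\sum_{i=1}^n\bigl(|\langle q_i,c\rangle|-|\langle p_i,c\rangle|\bigr),
\]
so it suffices to produce a unit $c$ with $\sum_{i=1}^n|\langle q_i,c\rangle|-|\langle p_i,c\rangle|\ge\tfrac14\sqrt n$. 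Following the discrepancy heuristic of \cite{spencer1985six}, I would look for $c$ of the form $Q\lambda/\|\lambda\|_2$ with a partial coloring $\lambda\in\{-1,0,+1\}^n$; orthonormality of $Q$ then yields $\sum_i|\langle q_i,c\rangle|=\sqrt{\|\lambda\|_0}$ and $\sum_i|\langle p_i,c\rangle|=\|U\lambda\|_1/\|\lambda\|_2\le n\|U\lambda\|_\infty/\sqrt{\|\lambda\|_0}$, where $U:=P^\top Q$.

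The second step is to convert the principal-angle hypothesis into a structural bound on $U$. By Lemma~\ref{lmm:property_pa}, the singular values of $U$ are exactly the cosines of the principal angles, so the hypothesis forces at most $n/(32\cdot 10^6)$ singular values of $U$ to exceed $10^{-3}/(4\sqrt 2)$. Summing $\sigma_i^2$ yields $\|U\|_F^2=O(n\cdot 10^{-7})$, and hence by a Markov-type averaging (the planned Lemma~\ref{lmm:small_sum_of_square}) all but a tiny fraction of the rows $U_i$ satisfy $\|U_i\|_2\le 0.1$. Let $I\subseteq[n]$ be the index set of such \emph{light} rows; then $|I|\ge(1-10^{-5})n$, easily enough to accommodate a $0.75n$-sized partial coloring supported on $I$.

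The third step is the main technical one: invoke the planned Lemma~\ref{lmm:discrepancy} to produce $\lambda\in\{-1,0,+1\}^n$ supported on $I$ with $\|\lambda\|_0\ge 0.75n$ and $\|U\lambda\|_\infty\le 0.5$. I would mimic Spencer's entropy / partial-coloring framework from \cite{spencer1985six}, but feed in the row-norm estimate $\|U_i\|_2\le 0.1$ from the previous step: that estimate sharpens the Gaussian tail bound on each coordinate of $U\lambda$ and lets the argument close at the threshold $\tfrac12$ rather than at the generic $O(1)$ constant one gets from orthonormality alone. Once $\lambda$ is in hand, setting $c=Q\lambda/\sqrt{\|\lambda\|_0}$ yields
\[
\sum_{i=1}^n\bigl(|\langle q_i,c\rangle|-|\langle p_i,c\rangle|\bigr)\ge \sqrt{0.75\, n}-\frac{0.5\,n}{\sqrt{0.75\, n}}=\frac{\sqrt 3}{6}\sqrt n,
\]
so the cost difference is at least $2\cdot \tfrac{\sqrt 3}{6}\sqrt n=\tfrac{\sqrt 3}{3}\sqrt n>\tfrac12\sqrt n$, as required.

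The hard part is Step three. Spencer's theorem on its own only gives an $O(1)$ discrepancy bound for orthonormal-column matrices, which is not sharp enough; the crucial input is the row-norm control from the principal-angle hypothesis, and the entropy argument must be redone carefully with that sharper input to push the infinity-norm discrepancy below $\tfrac12$ on a $0.75$-fraction of the coordinates. Threading a single self-consistent chain of numerical constants through the entire pipeline -- $n/(32\cdot 10^6)$ heavy singular values, row-norm threshold $0.1$, partial-coloring fraction $0.75$, discrepancy target $0.5$, and the final $\tfrac12\sqrt n$ in the conclusion -- is the part I expect to require the most care.
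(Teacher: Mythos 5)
Your proposal follows essentially the same route as the paper's proof: the identical cost-difference expansion, the same Frobenius-norm/Markov reduction from the principal-angle hypothesis to row-norm control on $U=P^\top Q$ (the paper's Lemma~\ref{lmm:small_sum_of_square}), and the same Spencer-style entropy/partial-coloring argument sharpened by those row norms to get $\|U\lambda\|_\infty\le\tfrac12$ with at most $\tfrac14 n$ zeros (the paper's Lemma~\ref{lmm:discrepancy}). The only differences are minor: you normalize $c=Q\lambda/\|\lambda\|_2$, which is automatically a unit vector, whereas the paper uses $Q\lambda/\sqrt{n}$ and then pads with a component orthogonal to both subspaces (this is where $n\le d/2$ is used); and your remark that $\lambda$ is ``supported on $I$'' is a harmless row/column slip, since $I$ indexes rows of $U$ while $\lambda$ colors columns --- the light-row bound enters only through the sharpened tail estimates, exactly as you then describe.
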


\begin{proof}
Proof of Lemma~\ref{lmm:large_cost} can be found in section~\ref{sec:large_cost}. 
\end{proof}

Applying $n = O\left(\frac{1}{\varepsilon^2}\right)$ to the above lemma leads to our desired cost difference $ \Omega\left(\varepsilon n\right)= \Omega\left(\eps \cost_2(\bsP, \left\{\bsc,-\bsc\right\})\right)$.
We then show it is possible to construct a large family $\mathcal{P}$ such that the principal angles between any two datasets in $\calP$ are sufficiently large.

\begin{lemma}[Construction of a large family of datasets] \label{lmm:large_size} When $n = O\left(\frac{d}{\log d}\right)$, there is a family $\mathcal{P}$ of size $$\exp\left(\frac{1}{256}10^{-6} \log\left(\frac{1}{1- \frac{1}{32}10^{-6}}\right) \cdot nd\right)$$
such that for any two dataset $\bsP,\bsQ\in \calP$, we have their principal angles satisfying $\theta_{\frac{1}{32}10^{-6}\cdot n} \geq \arccos\left(\frac{10^{-3}}{4\sqrt{2}}\right).$
\end{lemma}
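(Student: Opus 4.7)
My plan is to establish Lemma~\ref{lmm:large_size} by the probabilistic method: sample $N$ orthonormal bases independently from the Haar (uniform) measure on the Grassmann manifold $G(n,d)$ of $n$-dimensional subspaces in $\mathbb{R}^{d}$, show that a fixed pair fails the principal angle condition with probability $\exp(-\Omega(nd))$, and then choose $N = \exp(\Omega(nd))$ large enough to match the stated size while still making the expected number of bad pairs less than one by a union bound. Concretely, each $P^{(i)}$ is obtained as the first $n$ columns of a Haar-random orthogonal matrix in $O(d)$.

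The central input is a tail bound for principal angles between two independent Haar-random subspaces. By rotational invariance we may fix $P$ to be the first $n$ coordinate axes and let only $Q$ be random; the cosines of the $n$ principal angles then coincide with the singular values of the top $n\times n$ block of $Q$. Their joint density on $[0,1]^{n}$ is the classical matrix Jacobi ensemble, proportional to
\[
\prod_{1\le i<j\le n}(\sigma_{i}^{2}-\sigma_{j}^{2})\cdot\prod_{i=1}^{n}(1-\sigma_{i}^{2})^{(d-2n-1)/2},
\]
which is exactly the density Absil et al.~\cite{absil2006largest} used for a single principal angle. I would invoke the strengthening stated as Lemma~\ref{lmm:principal_angles}, proved by integrating this density more carefully over the region in which only a small fraction of the $\sigma_{i}$ are close to $1$, to conclude that
\[
\Pr\!\left[\theta_{\tfrac{1}{32}10^{-6}n}(P,Q)<\arccos\!\left(\tfrac{10^{-3}}{4\sqrt{2}}\right)\right] \le \left(1-\tfrac{1}{32}10^{-6}\right)^{\Omega(nd)}.
\]
The intuition is that forcing a single principal angle to have cosine larger than a fixed constant suppresses the Jacobi weight by $\exp(-\Omega(d))$, and demanding this for a constant fraction of the $n$ angles gives the $\exp(-\Omega(nd))$ bound; the hypothesis $n = O(d/\log d)$ is used to ensure the Jacobi exponent $(d-2n-1)/2$ dominates the logarithmic factors arising from the Vandermonde interaction and the Selberg-type normalizing constants.

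With the tail bound in hand, I set $N$ equal to the target size in the lemma statement. The expected number of pairs $(P^{(i)},P^{(j)})$ failing the principal angle condition is then at most
\[
\binom{N}{2}\cdot\left(1-\tfrac{1}{32}10^{-6}\right)^{\Omega(nd)},
\]
which is strictly less than $1$ by the choice of the constant $\tfrac{1}{256}10^{-6}\log\!\bigl(\tfrac{1}{1-\frac{1}{32}10^{-6}}\bigr)$ in the exponent defining $N$. Hence by the probabilistic method there exists a realization of the sample in which every pair satisfies $\theta_{\tfrac{1}{32}10^{-6}n}(P,Q)\ge \arccos(\tfrac{10^{-3}}{4\sqrt{2}})$; take $\mathcal{P}$ to be the resulting $N$-element family.

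The main obstacle is the concentration statement Lemma~\ref{lmm:principal_angles}, i.e., extending Absil et al.'s single-angle analysis to simultaneous control of $\Theta(n)$ principal angles. The singular values are coupled through the Vandermonde factor in the Jacobi density, so one cannot simply raise the single-angle tail bound to the $n$-th power. My approach would be to partition the bad event by the subset $I\subseteq[n]$ of indices with $\sigma_{i}\ge \tfrac{10^{-3}}{4\sqrt{2}}$, factor the density along these coordinates, bound the Jacobi weight on $I$ by the constant-cosine suppression $\exp(-\Omega(d|I|))$, and integrate out the remaining variables using standard Jacobi-ensemble estimates. Matching the precise constant in the claimed size of $\mathcal{P}$ is then a matter of tracking these factors carefully; the conceptual difficulty lies in handling the Vandermonde coupling, not in the final book-keeping.
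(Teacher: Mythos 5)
Your proposal is correct and follows essentially the same route as the paper: sample $\exp(\Omega(nd))$ subspaces uniformly from the Grassmann manifold, invoke the pairwise tail bound of Lemma~\ref{lmm:principal_angles} (which the paper likewise proves by a careful integration of the Jacobi-type density from Absil et al.), and conclude by a union bound over pairs that a good realization exists. The only cosmetic difference is that you phrase the final step via the expected number of bad pairs rather than the paper's direct union bound on the failure probability, which is the same first-moment argument.
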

\begin{proof}
Proof of Lemma~\ref{lmm:large_size} can be found in section~\ref{sec:large_size}.
\end{proof}

Combining Lemma~\ref{lmm:large_cost} and \ref{lmm:large_size}, we are ready to prove the second part of Theorem~\ref{thm:main_lower}. 

\begin{proof}[Proof of Theorem~\ref{thm:main_lower} (second part)]
\sloppy
The lower bound of $\Omega(kd \log \Delta)$ is trivial since $sc(n, \Delta, k,z,d, \varepsilon)$ is non-decreasing with $n$.
Then by the first part of Theorem~\ref{thm:main_lower}, we have $sc(n, \Delta, k,z,d, \varepsilon)\geq sc(k, \Delta, k,z,d, \varepsilon)\geq \Omega(kd \log \Delta)$.

\sloppy
Next, we prove the lower bound of $\Omega\left(kd\min\left\{\frac{1}{\varepsilon^2},\frac{d}{\log d},\frac{n}{k}\right\}\right)$. 
For ease of analysis, we prove the case of $k = z = 2$. 
The extensions to general $k$ and $z$ can be found in Section~\ref{sec:generalize_power_high} and \ref{sec:generalize_k}. 
{We first ensure that our parameters meet the requirements of our previous lemmas.}
Denote 
\[\tilde{n}  = \min\left\{\Theta\left(\frac{1}{484\varepsilon^2}\right),\Theta\left(\frac{d}{\log d}\right),n\right\} \geq 100,\]
where the first term $\Theta\left(\frac{1}{484\varepsilon^2}\right)$ is for achieving a large cost difference by Lemma~\ref{lmm:large_cost} and the second term $\Theta\left(\frac{d}{\log d}\right)$ is to satisfy the condition of Lemma~\ref{lmm:large_size}. 
Since $sc(n, \Delta, 2,2,d, \varepsilon)$ is non-decreasing with $n$, it suffices to prove a lower bound for $sc(\tilde{n}, \Delta, 2,2,d, \varepsilon)$. 

\sloppy
{Our proof proceeds as follows:
Lemma \ref{lmm:large_size} shows that we can find a family $\mathcal{P}$ of size $\exp\left(\frac{1}{256}10^{-6} \log\left(\frac{1}{1- \frac{1}{32}10^{-6}}\right) \cdot \tilde{n}d\right)$ such that for any two dataset in this set $\bsP$ and $\bsQ$, we have their principal angles 
$\theta_{\frac{1}{32}10^{-6}\cdot n}\geq \arccos\left(\frac{10^{-3}}{4\sqrt{2}}\right).$
Using this condition, Lemma \ref{lmm:large_cost} allows us to find a unit-norm vector $\bsc$ such that 
$\operatorname{cost}_2(\bsP, \{\bsc,-\bsc\}) -\operatorname{cost}_2(\bsQ, \{\bsc,-\bsc\})\geq \frac{1}{2} \sqrt{\tilde{n}}.$
By our choice of $\tilde{n}$, we have $\frac{1}{2} \sqrt{\tilde{n}} \geq  11\varepsilon \tilde{n}\geq 5\eps\cdot \cost_2(\bsP, \{\bsc,-\bsc\})$. This already satisfies the requirements of Lemma~\ref{lmm:setsize_to_bit}. However, it is important to note that the family we have obtained so far is constructed in a continuous space. Therefore, we need to discretize this family and demonstrate that this process does not significantly affect the properties we need.
}

We then round and scale every dataset $\bsP\in \calP$ to $[\Delta]^{d}$, where $\Delta = \lceil\frac{10\sqrt{d}}{\varepsilon}\rceil$.
The extra term $k^{\frac{1}{d}}$ will show up when we extend the result to general $k\geq 2$ (Section~\ref{sec:generalize_k}). 
Without loss of generality, we may assume that $\Delta$ is an odd integer. Otherwise, we just let $\Delta = \lceil\frac{10\sqrt{d}}{\varepsilon}\rceil+1$.

Denote $\boldsymbol{1}=(1,\cdots,1).$ For a dataset $\bsP = \left(\bsp_{1},\cdots,\bsp_{\tilde{n}}\right)\in \calP$, we will construct $\tilde{\mathcal{P}}$ to be our final family as follows: For each of dataset $\bsP\in\mathcal{P}$, we shift the origin to $\lceil\frac{\Delta}{2}\rceil\cdot\mathbf{1}$, scale it by a factor of $\frac{\Delta}{2}$ and finally perform an upward rounding on each dimension to put every point on the grid: 
$\tilde{\bsP} = \left(\lceil\frac{\Delta}{2}\bsp_{1}\rceil,\cdots,\lceil\frac{\Delta}{2}\bsp_{\tilde{n}}\rceil\rceil\right) + \lceil\frac{\Delta}{2}\rceil\cdot\mathbf{1} = \left(\tilde{\bsp}_{1},\cdots,\tilde{\bsp}_{\tilde{n}}\right).$ We will then show that this set fulfills the requirement of Lemma~\ref{lmm:setsize_to_bit}. For ease of explanation, we also define $\hat{\bsP}$ to be the dataset without rounding: $\hat{\bsP} = \left(\frac{\Delta}{2}\bsp_{1},\cdots,\frac{\Delta}{2}\bsp_{\tilde{n}}\right) + \lceil\frac{\Delta}{2}\rceil\cdot\mathbf{1}= \left(\hat{\bsp}_{1},\cdots,\hat{\bsp}_{\tilde{n}}\right).$ Moreover, let $\bar{\bsc} = \frac{\Delta}{2}\bsc+ \lceil\frac{\Delta}{2}\rceil\cdot\mathbf{1}$. We must have that for the scaling dataset, 
\begin{align*}
&\operatorname{cost}_2\left(\hat{\bsP}, \left\{\bar{\bsc},-\bar{\bsc}\right\}\right) = \frac{\Delta^2}{4} \operatorname{cost}_2\left(\bsP, \{\bsc,-\bsc\}\right) \leq \frac{\Delta^2 \tilde{n}}{2},
\\& \operatorname{cost}_2\left(\hat{\bsP}, \left\{\bar{\bsc},-\bar{\bsc}\right\}\right) -\operatorname{cost}_2\left(\hat{\bsQ}, \left\{\bar{\bsc},-\bar{\bsc}\right\}\right)=\frac{\Delta^2}{4}\left(\operatorname{cost}_2(\bsP, \{\bsc,-\bsc\}) -\operatorname{cost}_2(\bsQ, \{\bsc,-\bsc\})\right) \geq  \frac{11 \Delta^2\varepsilon \tilde{n}}{4}.
\end{align*}
On the other hand, for the rounding dataset, we have 
\begin{align*}
\left|\left\|\hat{\bsp}_{i}-\bar{\bsc}\right\|_2^2 - \left\|\tilde{\bsp}_{i}-\bar{\bsc}\right\|_2^2 \right|& \leq 2\left\|\hat{\bsp}_{i}-\tilde{\bsp}_{i}\right\|_2\left\|\hat{\bsp}_{i}-\bar{\bsc}\right\|_2 + \left\|\hat{\bsp}_{i}-\tilde{\bsp}_{i}\right\|_2^2 \leq 2\Delta\sqrt{d} + d\leq \frac{\Delta^2\varepsilon}{4} .  
\end{align*}
The case for $-\bar{\bsc}$ and other datasets is similar. Therefore, we have that for any dataset
\begin{align*}
& \left|\operatorname{cost}_2(\hat{\bsP}, \{\bar{\bsc},-\bar{\bsc}\}) - \operatorname{cost}_2(\tilde{\bsP}, \{\bar{\bsc},-\bar{\bsc}\})\right| \leq \frac{\Delta^2\varepsilon \tilde{n}}{4},
\\& \operatorname{cost}_2\left(\tilde{\bsP}, \left\{\bar{\bsc},-\bar{\bsc}\right\}\right) \leq \operatorname{cost}_2\left(\hat{\bsP}, \left\{\bar{\bsc},-\bar{\bsc}\right\}\right) + \frac{\Delta^2\varepsilon \tilde{n}}{4} \leq \frac{\Delta^2 \tilde{n}}{2} + \frac{\Delta^2\varepsilon \tilde{n}}{4} \leq \frac{3\Delta^2 \tilde{n}}{4}.
\end{align*}
We now have a rounded family $\tilde{\mathcal{\bsP}}$ such that all points are on the grid and we can find $\bar{\bsc}$ that 
\begin{align*}
\operatorname{cost}_2\left(\tilde{\bsP}, \left\{\bar{\bsc},-\bar{\bsc}\right\}\right) -\operatorname{cost}_2\left(\tilde{\bsQ}, \left\{\bar{\bsc},-\bar{\bsc}\right\}\right) & \geq \operatorname{cost}_2\left(\hat{\bsP}, \left\{\bar{\bsc},-\bar{\bsc}\right\}\right) -\operatorname{cost}_2\left(\hat{\bsQ}, \left\{\bar{\bsc},-\bar{\bsc}\right\}\right) - \frac{\Delta^2\varepsilon \tilde{n}}{2}\\& \geq \frac{9\Delta^2\varepsilon \tilde{n}}{4}.
\end{align*}
Since the cost function value is upper bounded by $\frac{3\Delta^2 \tilde{n}}{4}$, we must have that 
$\operatorname{cost}_2(\tilde{\bsP}, \{\bar{\bsc},-\bar{\bsc}\}) \notin \left(1\pm 3\varepsilon\right) \operatorname{cost}_2(\tilde{\bsQ}, \{\bar{\bsc},-\bar{\bsc}\}).$
Moreover, we can find origin being $\lceil\frac{\Delta}{2}\rceil\cdot\mathbf{1}$ such that all the center points and data points have distance to it less than $\frac{\Delta}{2}+\sqrt{d} \leq \Delta$.  
By Lemma~\ref{lmm:setsize_to_bit}, we have
\[ 
\spc(n, \Delta, 2,2,d, \varepsilon) \geq \Omega\left(\log \left|\tilde{\mathcal{P}}\right|\right)=\Omega\left(\log \left|\mathcal{P}\right|\right)\geq  \Omega\left(d\min\left\{\frac{1}{\varepsilon^2},\frac{d}{\log d},n\right\}\right).
\]

The extension to any constant $z\geq 1$ can be found in Section~\ref{sec:generalize_power_high}, which relies on the analysis of the Taylor expansion of function $(1+x)^z$. 
The extension to general $k\geq 2$ can be found in Section~\ref{sec:generalize_k}, whose main idea is to let every dataset consist of $\Theta(k)$ datasets from $\calP$ and set the positions of their center points in $[\Delta]^d$ ``remote'' from each other. 

Finally, we prove the lower bound $\Omega\left(k\log\log\frac{n}{k}\right)$. 
We again construct a large family $\calP$ of datasets.
For preparation, we find arbitrary $\frac{k}{2}$ points, denoted as $\bsp_1,\cdots, \bsp_{\frac{k}{2}}$, such that the distance between every two points is at least 10. 
This is available since $\Delta^{d} = \Omega(k)$.
Every dataset $\bsP\in \mathcal{P}$ is constructed as follow: denote $\boldsymbol{e}_1 =(1,0,\cdots,0)$ for each $i\in \left[\frac{k}{2}\right]$, we select $m_i\in [\log \frac{n}{k}]$ and put $2^{m_i}$ points at $\bsp_{i}+\boldsymbol{e}_1$ and $\frac{2n}{k} -2^{m_i}$ points at $\bsp_{i}$. 
Therefore, the total number of possible datasets is $|\mathcal{P}| = \prod_{i=1}^{k}m_i  = O\left(\left(\log \frac{n}{k}\right)^{\frac{k}{2}}\right)$. We then consider for any two different datasets $\bsP, \bsQ \in \mathcal{P}$, there must exist $l$ such that $\bsP$ and $\bsQ$ have different assignments for $\bsp_{l}$ and $\bsp_{l}+\boldsymbol{e}_1$. Without loss of generality,assume that $\bsP$ put $2^{i}$ at $\bsp_{l}+\boldsymbol{e}_1$ while $\bsQ$ put $2^{j}$ for $i<j$. Choosing center set 
$\bsC = \left\{\bsp_1,\bsp_1+\boldsymbol{e}_1,\cdots,\bsp_l,\bsp_l+2\boldsymbol{e}_1,\cdots,\bsp_{\frac{k}{2}},\bsp_{\frac{k}{2}}+\boldsymbol{e}_1\right\},$
we must have that 
$\operatorname{cost}_z(\bsP, \bsC) = 2^i \leq \frac{1}{2} 2^j \notin  \left(1 \pm \frac{1}{2}\right) \operatorname{cost}_z(\bsQ, \bsC),$
which satisfies our requirement of $\mathcal{P}$. 
Lemma \ref{lmm:setsize_to_bit} provides us with a lower bound of $\Omega\left(\log\left|\mathcal{P}\right|\right) \geq \Omega\left(k\log \log \frac{n}{k}\right).$

\end{proof}

\subsection{Proof of Lemma~\ref{lmm:large_cost}: Principal Angles to Cost Difference}
\label{sec:large_cost}

{In this section, we primarily prove Lemma~\ref{lmm:large_cost}, restated as follows
\begin{lemma}[Restatement of Lemma~\ref{lmm:large_cost}] Let $\bsP$, $\bsQ$ be datasets of $n$ orthonormal bases ($ 100\leq n \leq \frac{d}{2}$) satisfying that 
$\theta_{\frac{1}{32}10^{-6}\cdot n}  \geq \arccos\left(\frac{10^{-3}}{4\sqrt{2}}\right).$
There exists a unit vector $\bsc\in \R^d$ such that
$\operatorname{cost}_2(\bsP, \{\bsc,-\bsc\}) -\operatorname{cost}_2(\bsQ, \{\bsc,-\bsc\}) \geq \frac{1}{2}\sqrt{n}.
$
\end{lemma}
Our proof strategy proceeds as follows: Let $\bsP=\left\{\bsp_{i} \in \R^d \right\}_{i \in [n]}$ and $ \bsQ=~\left\{\bsq_{i} \in \R^d \right\}_{i \in [n]}$ be two orthonormal bases and let their inner product matrix be
\[\bsU :=\bsP^{\top} \bsQ = \left[\bsU_{1}, \cdots,\bsU_{n} \right]^\top = \begin{bmatrix}
\bsp_{1}^{\top} \bsq_{1}  & \cdots   & \bsp_{1}^\top\bsq_{n}\\
\vdots  & \ddots  & \vdots \\
\bsp_{n}^{\top} \bsq_{1}  &\cdots  &\bsp_{n}^\top\bsq_{n}
\end{bmatrix} = \left(\bsU_{ij}\right)_{i,j \in [n]}.\]
Compute the cost function for any unit vector $\bsc\in \R^d$, we have 
\begin{align*}
\operatorname{cost}_2(\bsP, \{\bsc,-\bsc\}) = \sum_{i=1}^n \left(\left\|\bsp_{i}\right\|_2^2+\left\|\bsc\right\|_2^2 - 2\left | \langle \bsp_{i},\bsc\rangle\right|\right) =2n - 2\sum_{i=1}^n \left | \langle  \bsp_{i},\bsc\rangle\right| \leq 2n. 
\end{align*}
The difference between the cost of the two datasets $\bsP$ and $\bsQ$ is 
\[ \operatorname{cost}_2(\bsP, \{\bsc,-\bsc\}) -\operatorname{cost}_2(\bsQ, \{\bsc,-\bsc\}) = 2 \left( \sum_{i=1}^n \left | \langle  \bsq_{i},\bsc\rangle\right| - \left | \langle  \bsp_{i},\bsc\rangle\right|\right).\]
Our objective is to maximize this value as much as possible. 
To maximize the first term, We choose $\bsc$ to be a vector in the subspace spanned by $\bsQ$ that $\bsc = \bsQ \boldsymbol{\zeta} = \sum_{i=1}^{n} \boldsymbol{\zeta}_i \bsq_i$, thus our difference becomes:
\begin{align*}
\operatorname{cost}_2(\bsP, \{{\bsc},-{\bsc}\}) -\operatorname{cost}_2(\bsQ, \{{\bsc},-{\bsc}\}) &= 2\left(  \sum_{i=1}^n \left | \boldsymbol{\zeta}_i \right| - \sum_{i=1}^n \left | \langle  \bsp_{i},\sum_{j=1}^{n}  \boldsymbol{\zeta}_j \bsq_{j}\rangle\right| \right)\\&= 2 \left( \sum_{i=1}^n \left |\boldsymbol{\zeta}_i \right| - \sum_{i=1}^n \left | \sum_{j=1}^{n} \boldsymbol{\zeta}_j \bsU_{ij} \right| \right) .   
\end{align*}
The first term is maximized when most of the $|\boldsymbol{\zeta}_i |$ are $\frac{1}{\sqrt{n}}$. 
Additionally, we observe that minimizing the second term is very similar to the objective of partial coloring, which is formally defined in Definition~\ref{def:partial_coloring}.
Partial coloring focuses on the infinity norm, whereas we are concerned with the 1-norm. 
However, as long as this infinity norm is sufficiently small, our 1-norm can be controlled by it up to a factor of $n$.
Therefore, we consider employing techniques from this area to achieve this. 
We note that if $\bsU$ is an arbitrary matrix, it is challenging to minimize the second term effectively. 
However, our $\bsU$ is the inner product matrix of two subspaces with significantly different orientations (i.e., having many large principal angles), and its row norm is sufficiently small for us to bound the second term. 
}



%
To this end, we first show that large principal angles imply that most rows of the inner product matrix $\bsU$ have a small $\ell_2$-norm. 

\begin{lemma}[Principal angles to row norms] \label{lmm:small_sum_of_square} Let $\bsP$, $\bsQ$ be datasets of $n$ orthonormal bases following the condition that 
$\theta_{\frac{1}{32}10^{-6}\cdot n}  \geq \arccos\left(\frac{10^{-3}}{4\sqrt{2}}\right).$
There exists a set $\mathcal{K}$ of size larger than $\left(1-\frac{10^{-4}}{16}\right)n$ and having property that the inner product matrix satisfies
$\|\bsU_i\|_2^2:= \sum_{j=1}^{n} \left(\bsU_{ij}\right)^2 = \left\{\begin{array}{l}
\leq 10^{-2}, i \in  \calK \\
\leq 1, i \notin \calK
\end{array}\right. .$
\end{lemma}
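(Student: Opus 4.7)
The plan is to translate the principal-angle hypothesis into a Frobenius-norm bound on $U = P^T Q$ via Lemma~\ref{lmm:property_pa}, and then derive the row-norm bound through a Markov-type counting argument.

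First I would apply Lemma~\ref{lmm:property_pa}: the singular values $1 \geq \sigma_1 \geq \cdots \geq \sigma_n$ of $U$ satisfy $\sigma_i = \cos(\theta_i(P,Q))$. Setting $m := \lceil \tfrac{10^{-6}}{32} n \rceil$, the hypothesis $\theta_m(P,Q) \geq \arccos(\tfrac{10^{-3}}{4\sqrt{2}})$ forces $\sigma_i \leq \tfrac{10^{-3}}{4\sqrt{2}}$ for every $i \geq m$, while the first $m-1$ singular values are at most $1$ automatically (indeed, since $P$ has orthonormal columns, $PP^T$ is a projection, so $U^T U \preceq Q^T Q = I$). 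Squaring and summing yields
\[
\|U\|_F^2 \;=\; \sum_{i=1}^{n} \sigma_i^2 \;\leq\; (m-1)\cdot 1 \;+\; (n-m+1)\cdot \tfrac{10^{-6}}{32} \;\leq\; \tfrac{10^{-6}}{16}\,n,
\]
using $\bigl(\tfrac{10^{-3}}{4\sqrt{2}}\bigr)^2 = \tfrac{10^{-6}}{32}$ and the choice of $m$.

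Next I would observe that $\|U\|_F^2 = \sum_{i=1}^n \|U_i\|_2^2$, so a Markov-style argument bounds the number of ``bad'' rows: defining $K := \{ i \in [n] : \|U_i\|_2^2 \leq 10^{-2}\}$, any $i \notin K$ contributes more than $10^{-2}$ to $\|U\|_F^2$, hence
\[
|[n]\setminus K| \;\leq\; \frac{\|U\|_F^2}{10^{-2}} \;\leq\; \tfrac{10^{-4}}{16}\,n,
\]
giving $|K| \geq (1 - \tfrac{10^{-4}}{16})n$ as required. For the complementary rows $i \notin K$, the uniform bound $\|U_i\|_2^2 \leq 1$ follows directly from $U_i = Q^T p_i$ and the fact that $Q^T$ is an orthogonal projection onto the column span of $Q$, so $\|U_i\|_2 \leq \|p_i\|_2 = 1$.

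No step here is a substantial obstacle: the reduction is essentially three lines once the correspondence between principal angles and singular values (Lemma~\ref{lmm:property_pa}) is invoked. The only thing to watch is careful bookkeeping of the numerical constants, in particular verifying that the chosen thresholds $\tfrac{10^{-6}}{32}$ (for angles) and $10^{-2}$ (for row norms) are consistent with the target fraction $\tfrac{10^{-4}}{16}$ of exceptional rows, which they are by the computation above.
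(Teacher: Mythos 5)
Your proposal is correct and follows essentially the same route as the paper's proof: convert the principal-angle hypothesis into the singular-value bound via Lemma~\ref{lmm:property_pa}, bound $\|U\|_F^2 = \sum_i \sigma_i^2$ by splitting the singular values at the index $\frac{1}{32}10^{-6}n$, and then apply a Markov-type counting argument to the row norms, with the uniform bound $\|U_i\|_2^2 \leq 1$ coming from the orthonormality of $Q$. The constants check out exactly as in the paper, so there is nothing to add.
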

\begin{proof}
Since both of the datasets are composed of only orthonormal bases, we have 
$$\forall i \in [n],\sum_{j=1}^{n} \left(\bsU_{ij}\right)^2 = \sum_{j=1}^{n} \left(\bsp_{i}^{\top}\bsq_{j}\right)^2 \leq \left\|\bsp_{i} \right\|_2^2  = 1 . $$
Therefore, we focus on the existence of set $\calK$. 
By the property of principal angles shown in Lemma~\ref{lmm:property_pa}, we have that 
$\sigma_{i} = \cos \theta_i, i=1,\cdots, n.$
Moreover, we have the relation between singular values and Frobenius norm of the matrix being 
$\sum_{i=1}^{n}\sigma_{i}^2 = \left\|\bsU\right\|_F^2 = \sum_{i=1}^n \sum_{j=1}^n \left(\bsU_{ij}\right)^2 .$
On the other hand, the condition shows that 
\begin{align*}
&\theta_{n} \geq \cdots \geq \theta_{\frac{1}{32}10^{-6}\cdot n}  \geq \arccos\left(\frac{10^{-3}}{4\sqrt{2}}\right), \sigma_{n} \leq \cdots \leq \sigma_{\frac{1}{32}10^{-6}\cdot n} \leq \cos \left(\arccos\left(\frac{10^{-3}}{4\sqrt{2}}\right)\right) = \frac{10^{-3}}{4\sqrt{2}}.
\end{align*}
With the general upper bound that $\sigma_{i} = \cos\theta_{i} \leq 1 $, we have 
\[\sum_{i=1}^{n}\sigma_{i}^2 \leq \frac{1}{32}10^{-6}n + \left(1-\frac{1}{32}10^{-6}\right) n \cdot\left(\frac{10^{-3}}{4\sqrt{2}}\right)^2  \leq \frac{1}{16}10^{-6}n.\]
We would then have that the number of rows with sum of square larger than $10^{-2}$ is less than $\frac{1}{16}10^{-6} n / 10^{-2} = \frac{1}{16}10^{-4} n$, which completes the proof.  
\end{proof}

Next, we show that a partial coloring can be found for the rows of $\bsU$ using similar techniques as that of \cite{spencer1985six}.

\begin{lemma}[Row norms to partial coloring] \label{lmm:discrepancy} Considering that we have a matrix $\bsU \in \mathbb{R}^{n\times n} $ (with $n\geq 100$) such that we can find a set $\calK$ of size larger than $\left(1-\frac{10^{-4}}{16}\right)n$ and having the property that 
$\sum_{j=1}^{n} \bsU_{ij}^2 = \left\{\begin{array}{l}
\leq 10^{-2}, i \in  \calK \\
\leq 1, i \notin \calK
\end{array}\right. .$
we can find a partial coloring $\{\boldsymbol{\zeta}_{i}\}_{[n]} \in \{-1,0,1\}_{[n]}$ such that $\left|i:\boldsymbol{\zeta}_i = 0\right| \leq \frac{1}{4}n$ and $\left| \sum_{j=1}^{n}\boldsymbol{\zeta}_j \bsU_{ij}\right|\leq \frac{1}{2},\forall i \in [n]$.
\end{lemma}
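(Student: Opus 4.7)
The plan is to follow Spencer's entropy-based partial coloring framework, adapted to exploit the small $\ell_2$-norms of most rows. I will sample $\lambda\in\{-1,+1\}^n$ uniformly, form the row products $X_i(\lambda):=\sum_j\lambda_j U_{ij}$, discretize each $X_i$ into buckets of width one, and collect the tuple of bucket indices as a signature $S(\lambda)$. Two sign vectors $\lambda_1,\lambda_2$ sharing a signature must satisfy $|X_i(\lambda_1)-X_i(\lambda_2)|\leq 1$ for every $i$, so $\chi:=(\lambda_1-\lambda_2)/2\in\{-1,0,+1\}^n$ has $\|U\chi\|_\infty\leq 1/2$ and $\|\chi\|_0$ equal to the Hamming distance between $\lambda_1$ and $\lambda_2$. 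The problem therefore reduces to exhibiting two sign vectors in a common signature class that disagree in at least $3n/4$ coordinates.

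The first step is to bound the per-row entropy $H(S_i)$. For $i\in K$, Hoeffding's inequality together with $\sum_j U_{ij}^2\leq 10^{-2}$ gives $\Pr[|X_i|>1/2]\leq 2e^{-12.5}$, with super-exponential decay for outer buckets, so $H(S_i)=O(e^{-12.5})$ is essentially zero. For the at most $\tfrac{10^{-4}n}{16}$ rows in $K^c$, the standard entropy estimate for a discretized sub-Gaussian variable of variance at most $1$ gives $H(S_i)=O(1)$. Summing, the total signature entropy is $o(n)$, and the inequality $\max_s\Pr[S(\lambda)=s]\geq 2^{-H(S(\lambda))}$ yields a signature class $\Lambda^*$ of size at least $2^{n(1-o(1))}$. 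To push the Hamming-distance guarantee from Spencer's classical $n/2$ up to $3n/4$, I will iterate the partial coloring construction twice: a first round with buckets of width $1/2$ yields $\chi^{(1)}$ with $\geq n/2$ nonzeros and $\|U\chi^{(1)}\|_\infty\leq 1/4$; restricting $U$ to the coordinate set $Z$ on which $\chi^{(1)}$ vanishes preserves the row-norm hypotheses (the restricted norms can only shrink), so a second round on $U_{\cdot,Z}$ produces $\chi^{(2)}$ with $\geq |Z|/2\geq n/4$ further nonzeros on $Z$ and discrepancy at most $1/4$. Since the supports of $\chi^{(1)}$ and $\chi^{(2)}$ are disjoint, $\chi:=\chi^{(1)}+\chi^{(2)}$ has at least $3n/4$ nonzeros and $\|U\chi\|_\infty\leq 1/4+1/4=1/2$.

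The main obstacle will be the calibration of the two iterations. A single pass with bucket width one drives the entropy to essentially zero but only delivers an $n/2$-nonzero partial coloring; pushing a single pass further is blocked by the volume barrier that a Hamming ball of radius $3n/4-1$ already covers almost all of $\{-1,+1\}^n$, so no signature class (however large) can guarantee a pair at Hamming distance $\geq 3n/4$ via a direct pigeonhole. This is exactly what forces the two-stage decomposition. The delicate numerical point is that halving the bucket width to $1/2$ tightens the Hoeffding tail only to $2e^{-25/8}$, so I will need to verify that the resulting per-$K$-row entropy remains small enough that, together with the vanishing contribution of the $K^c$-rows, the total entropy falls well below the Spencer threshold in each round; this preserves a large signature class and hence guarantees a pair at Hamming distance at least $n/2$ in each iteration, closing the argument.
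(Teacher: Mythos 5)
Your setup (width-one buckets, the Hoeffding tail $2e^{-12.5}$ for rows in $K$, the entropy/pigeonhole step producing a signature class of size $2^{n(1-o(1))}$, and the map $\chi=(\lambda_1-\lambda_2)/2$) coincides with the paper's argument. The divergence, and the gap, is in how you extract a pair at Hamming distance $\tfrac{3}{4}n$. You assert that no signature class, however large, can guarantee such a pair ``via a direct pigeonhole,'' and you use this to justify a two-round iteration. The assertion is true only for the naive ball-volume bound; it is false for the isodiametric inequality on the Hamming cube (Kleitman's theorem), which states that $|\mathcal{A}|\geq 2^{nH(1/2-p)}$ forces $\operatorname{diam}(\mathcal{A})\geq(1-2p)n$. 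Taking $p=\tfrac{1}{8}$, a class of size $2^{0.96n}\geq 2^{nH(3/8)}\approx 2^{0.955n}$ already contains two colorings differing in $\geq\tfrac{3}{4}n$ coordinates, which is exactly how the paper finishes in a single round. This is the missing ingredient in your write-up.

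Your substitute two-round scheme does not survive the numerical check you yourself flag as delicate. In round one, with bucket width $\tfrac12$ and row variance $\leq 10^{-2}$, the central-bucket escape probability for $i\in K$ is $2e^{-25/8}\approx 0.088$, so the per-row signature entropy is about $H(0.088)\approx 0.43$ bits; summed over $\approx n$ rows this leaves a signature class of size only about $2^{0.57n}$, far below the $2^{nH(1/4)}\approx 2^{0.81n}$ that Kleitman (or any argument) would need to produce a pair at Hamming distance $\geq n/2$. Round two is worse: after restricting to the uncolored set $Z$ with $|Z|\leq n/2$, you still have $n$ rows but only $2^{|Z|}$ sign vectors, so the same per-row entropy now consumes roughly $0.86|Z|$ of the $|Z|$ available bits, leaving a class of size $2^{0.14|Z|}$. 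Also note that a third round is foreclosed by the discrepancy budget ($\tfrac14+\tfrac14$ already exhausts $\tfrac12$), so the iteration cannot be rescued by weakening the per-round coloring fraction. To repair the proof, drop the iteration, keep your width-one single pass, and invoke Kleitman's diametric theorem on the large signature class.
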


\begin{proof}
Define $\row_i(\boldsymbol{\zeta}_1,\cdots, \boldsymbol{\zeta}_n):= \sum_{j=1}^{n}\boldsymbol{\zeta}_j \bsU_{ij}$ for all $i\in [n]$.
Define the rounding map $$\round(\boldsymbol{\zeta}_1,\cdots, \boldsymbol{\zeta}_n) =(b_1,\cdots, b_n),$$ where $b_i$ is the nearest integer to $\row_i(\boldsymbol{\zeta}_1,\cdots, \boldsymbol{\zeta}_n)$. 
That is, 
$b_i=0$ if and only if $\left|\row_i\right| \leq \frac{1}{2}, b_i=1$ if and only if $ \frac{1}{2}< \left|\row_i\right| \leq \frac{3}{2}, b_i=-1$ if and only if $ -\frac{3}{2}\leq \left|\row_i\right| < -\frac{1}{2},$ etc.
We then define a subset $\mathcal{B}\subset \mathbb{Z}^n $ of the range that 
$
\mathcal{B} = \{(b_1,\cdots, b_n) \in \mathbb{Z}^n: |\{i:|b_i|\geq s\}| < \alpha_s n, \text{for all positive integer } s\},$ where 
\[\alpha_s = \left[2\left(1-\frac{10^{-4}}{16}\right) \exp\left(-\frac{(2s-1)^2}{8 \cdot 10^{-2}}\right) + 2\cdot \frac{10^{-4}}{16}\exp\left(-\frac{(2s-1)^2}{8}\right) \right] 2^{s+1}.\]
We firstly prove that $\left|\round^{-1}(\mathcal{B})\right|\geq \frac{1}{2} 2^{n}.$ 
To see this, let $\boldsymbol{\zeta}_1, \cdots, \boldsymbol{\zeta}_n \in\{-1,+1\}$ be independent and uniform and let $\row_1, \cdots, \row_n, b_1, \cdots, b_n$ be the values generated. 
We shall note that the standard deviation of $\row_i$ is the $l_2$-norm of $i$-th row. 
Thus, classic Chernoff bound provides 
\[
\operatorname{Pr}\left[\left|b_i\right| \geq s\right]=\operatorname{Pr}\left[\left|\row_i\right| \geq \frac{2s-1}{2}\right]<\left\{\begin{matrix}
2\exp\left(-\frac{(2s-1)^2}{8 \cdot 10^{-2}}\right), i \in \calK \\
2\exp\left(-\frac{(2s-1)^2}{8}\right), i \notin \calK
\end{matrix}\right. .\]
As expectation is linear, we would have 
\begin{gather*}
\mathbb{E}\left[\left|\left\{i:\left|b_i\right| \geq s\right\}\right|\right] <  n \left[2\left(1-\frac{10^{-4}}{16}\right) \exp\left(-\frac{(2s-1)^2}{8 \cdot 10^{-2}}\right) + 2\cdot \frac{10^{-4}}{16}\exp\left(-\frac{(2s-1)^2}{8}\right) \right] ,\\
\operatorname{Pr}\left[\left|\left\{i:\left|b_i\right| \geq s\right\}\right|\geq \alpha_s n \right] \leq \frac{1}{2^{s+1}},\operatorname{Pr}\left[\left(b_1, \cdots, b_n\right) \notin \mathcal{B}\right] \leq \sum_{s=1}^{\infty} \frac{1}{2^{s+1}}=\frac{1}{2} .
\end{gather*}
That is, at least half of all $\left(\boldsymbol{\zeta}_1, \cdots, \boldsymbol{\zeta}_n\right) \in\{-1,+1\}^n$ are in $\round^{-1}(\mathcal{B})$, yielding $\left|\round^{-1}(\mathcal{B})\right|\geq \frac{1}{2} 2^{n}$.

We then consider the size of $\mathcal{B}$ by crude counting arguments. We have 
\[
|\mathcal{B}| \leq \prod_{s=1}^{\infty}\left[\left[\sum_{l=0}^{\alpha_s n}{
\binom{n}{l}}
\right] 2^{\alpha_s n}\right].
\]
This is because  $\left\{i:\left|b_i\right|=s\right\}$ can be chosen in at most $\sum_{i=0}^{\alpha_s n}{
\binom{n}{i}}$ ways and, having been selected, can be split into $\left\{i: b_i=s\right\}$ and $\left\{i: b_i=-s\right\}$ in at most $2^{\alpha_s n}$ ways. 
We bound this value with
\[
\sum_{l=0}^{\alpha n}
{\binom{n}{l}}
 \leq 2^{n\cdot\ent(\alpha)}, \ent(\alpha)=-\alpha \log _2 \alpha-(1-\alpha) \log _2(1-\alpha), \]
where $\ent(\alpha) $ is the entropy function. Therefore, $|\mathcal{B}| \leq 2^{h n},$ where $h=\sum_{s=1}^{\infty}[\ent(\alpha_s)+\alpha_s].$
In our case, we shall have that 
\begin{align*}
&h=\sum_{s=1}^{\infty}\left[\ent\left(\alpha_s\right)+\alpha_s\right] \leq \sum_{s=1}^{\infty}\left[\exp(1) \cdot \alpha_s^{\frac{1}{\ln 4}}+\alpha_s\right] \leq (\exp(1)+1)\sum_{s=1}^{\infty} \alpha_s^{\frac{1}{\ln 4}},\\
&\frac{\alpha_s}{\alpha_{s+1}} \geq \frac{1}{2} \exp\left(\frac{(2s+1)^2-(2s-1)^2}{8}\right) = \frac{\exp(s)}{2}  \geq \frac{\exp(1)}{2},\\
&\alpha_1 = \left[2\left(1-\frac{10^{-4}}{16}\right) \exp\left(-\frac{1}{8 \cdot 10^{-2}}\right) + 2\cdot \frac{10^{-4}}{16}\exp\left(-\frac{1}{8}\right) \right] 4 \leq 10^{-4}.
\end{align*}
Combining together, we have 
\[h \leq \left(\exp(1)+1\right)\sum_{s=1}^{\infty} \alpha_s^{\frac{1}{\ln 4}} \leq (\exp(1)+1) \frac{\alpha_1^{\frac{1}{\ln 4}}}{1-\left(\frac{2}{\exp(1)}\right)^{\frac{1}{\ln 4}}} \leq (\exp(1)+1)\frac{\left(10^{-4}\right)^{\frac{1}{\ln 4}}}{1-\left(\frac{2}{\exp(1)}\right)^{\frac{1}{\ln 4}}} \leq 0.03.\]

Applying the pigeonhole principle, there exists specific $(\tilde{b}_1, \cdots, \tilde{b}_n) \in \mathcal{B}$ such that
\begin{align*}
&\mathcal{A}=\{(\boldsymbol{\zeta}_1, \cdots, \boldsymbol{\zeta}_n) \in\{-1,+1\}^n: \round(\boldsymbol{\zeta}_1, \cdots, \boldsymbol{\zeta}_n)=(\tilde{b}_1, \cdots, \tilde{b}_n)\}, \\&|\mathcal{A}| \geq \frac{\round^{-1}(\mathcal{B})}{|\mathcal{B}|}| \geq 2^{n(1-h)-1}.
\end{align*}
 We use the following result.
\begin{lemma}[\cite{kleitman1966combinatorial}] If $\mathcal{A} \subset\{-1,+1\}^n$ and $|\mathcal{A}| \geq 2^{n\cdot \ent(1 / 2-p)}$ with $p>0$, then $\operatorname{diam}(\mathcal{A}) \geq(1-2 p) n$.
\end{lemma}
In our case, we have for $n \geq 100$ and hence 
$2^{n(1-h)-1} \geq 2^{n\left(1-0.03-0.01\right)} \geq 2^{n\cdot \ent\left(\frac{1}{2} - \frac{1}{8}\right)}.$
Thus, $ \operatorname{diam}(\mathcal{A}) \geq  \left(1-\frac{1}{4}\right)n$. That is, there exist two vectors in $\mathcal{A}$ which differ in at least $\left(1-\frac{1}{4}\right)n$ coordinates. Let $\boldsymbol{\zeta}^{\prime}=\left(\boldsymbol{\zeta}_1^{\prime}, \cdots, \boldsymbol{\zeta}_n^{\prime}\right), \boldsymbol{\zeta}^{\prime \prime}=\left(\boldsymbol{\zeta}_1^{\prime \prime}, \cdots, \boldsymbol{\zeta}_n^{\prime \prime}\right) \in \mathcal{A}$ with $
\left\|\boldsymbol{\zeta}^{\prime} - \boldsymbol{\zeta}^{\prime \prime}\right\|_1 =\operatorname{diam}(\mathcal{A})$. Set
$
\boldsymbol{\zeta}=\left(\boldsymbol{\zeta}_1, \cdots, \boldsymbol{\zeta}_n\right)=\frac{\boldsymbol{\zeta}^{\prime}-\boldsymbol{\zeta}^{\prime \prime}}{2}.$ 
All $\boldsymbol{\zeta}_j \in\{-1,0,+1\}$ and $\boldsymbol{\zeta}_j=0$ if and only if $\boldsymbol{\zeta}_j^{\prime}=\boldsymbol{\zeta}_j^{\prime \prime}$. Therefore, 
\[
\left|\left\{i: \boldsymbol{\zeta}_i=0\right\}\right|=n-\left\|\boldsymbol{\zeta}^{\prime} - \boldsymbol{\zeta}^{\prime \prime}\right\|_1=n-\operatorname{diam}(\mathcal{A}) \leq \frac{1}{4} n .
\]
Moreover, for all $i$, since $\round$ is identical on $\boldsymbol{\zeta}^{\prime}$ and $\boldsymbol{\zeta}^{\prime \prime}, \row_i\left(\boldsymbol{\zeta}^{\prime}\right)$ and $\row_i\left(\boldsymbol{\zeta}^{\prime \prime}\right)$ lie on a common interval of length $1$, we have
\[ \left|
\row_i(\boldsymbol{\zeta}_1,\cdots,\boldsymbol{\zeta}_n)\right|=\left|\frac{\row_i(\boldsymbol{\zeta}_1^{\prime},\cdots,\boldsymbol{\zeta}_n^{\prime}) - \row_i(\boldsymbol{\zeta}_1^{\prime\prime},\cdots,\boldsymbol{\zeta}_n^{\prime\prime})}{2} \right|\leq \frac{1}{2}.
\]

\end{proof} 

By Lemmas~\ref{lmm:small_sum_of_square} and \ref{lmm:discrepancy}, we are ready to prove Lemma~\ref{lmm:large_cost}. 

\begin{proof}[Proof of Lemma~\ref{lmm:large_cost}]
By Lemmas \ref{lmm:small_sum_of_square} and \ref{lmm:discrepancy}, there exists a partial coloring $\{\boldsymbol{\zeta}_{j}\}_{[n]} \in \{-1,0,1\}_{[n]}$ such that $\left|i:\boldsymbol{\zeta}_i = 0\right| \leq \frac{1}{4}n$ and $\left| \sum_{j=1}^{n}\boldsymbol{\zeta}_j \bsU_{ij}\right|\leq \frac{1}{2},\forall i\in[n]$. 
Let $\tilde{\bsc} = \sum_{i=1}^{n}  \frac{\boldsymbol{\zeta}_i}{\sqrt{n}} \bsq_{i}$. 
The cost difference w.r.t. $\{\tilde{\bsc},-\tilde{\bsc}\}$ is 
\begin{align*}
  &\operatorname{cost}_2(\bsP, \{\tilde{\bsc},-\tilde{\bsc}\}) -\operatorname{cost}_2(\bsQ, \{\tilde{\bsc},-\tilde{\bsc}\})\\& =2 \sum_{i=1}^n \left | \langle  \bsq_{i},\tilde{\bsc}\rangle\right| - \sum_{i=1}^n \left | \langle  \bsp_{i},\tilde{\bsc}\rangle\right| = 2  \sum_{i=1}^n \left | \frac{\boldsymbol{\zeta}_i}{\sqrt{n}} \right| - \sum_{i=1}^n \left | \langle  \bsp_{i},\sum_{j=1}^{n}  \frac{\boldsymbol{\zeta}_j}{\sqrt{n}} \bsq_{j}\rangle\right|\\&= 2 \sum_{i=1}^n \left | \frac{\boldsymbol{\zeta}_i}{\sqrt{n}} \right| - \sum_{i=1}^n \left | \sum_{j=1}^{n} \frac{\boldsymbol{\zeta}_j}{\sqrt{n}} \bsU_{ij} \right| .   
\end{align*}
Due to our choice of $\{\boldsymbol{\zeta}_i\}_{[n]}$, we would have that 
\begin{align*}
&\sum_{i=1}^n \left | \frac{\boldsymbol{\zeta}_j}{\sqrt{n}} \right| = \left(1-\frac{\left|i:\boldsymbol{\zeta}_i = 0\right|}{n}\right) \sqrt{n} \geq \frac{3}{4} \sqrt{n},\\
&\sum_{i=1}^n \left | \sum_{j=1}^{n} \frac{\boldsymbol{\zeta}_j}{\sqrt{n}} \bsU_{ij} \right| = \frac{1}{\sqrt{n}}\sum_{i=1}^n \left | \sum_{j=1}^{n} \boldsymbol{\zeta}_j \bsU_{ij} \right| \leq  \frac{1}{\sqrt{n}}\sum_{i=1}^n \frac{1}{2} = \frac{1}{2} \sqrt{n}.
\end{align*}
Combining together, we would have that 
$\operatorname{cost}_2(\bsP, \{\tilde{\bsc},-\tilde{\bsc}\}) -\operatorname{cost}_2(\bsQ, \{\tilde{\bsc},-\tilde{\bsc}\}) \geq 2(\frac{3}{4} \sqrt{n} - \frac{1}{2} \sqrt{n}) = \frac{1}{2} \sqrt{n}.$
To complete proof, note that 
$\left\| \tilde{\bsc} \right\|_2 = \sqrt{\sum_{i=1}^{n} \frac{\boldsymbol{\zeta}_i^2}{n}} \leq  1.$
Since $d > 2n$, we can always find a vector $\hat{\bsc}$ such that $\hat{\bsc} \perp \bsP,\bsQ$ and $\left\| \hat{\bsc} \right\|_2^2 = 1- \left\| \tilde{\bsc} \right\|_2^2 \geq 0$. Let $\bsc = \tilde{\bsc} + \hat{\bsc}$, we should have that $\bsc$ is of unit norm and 
\[\operatorname{cost}_2(\bsP, \{\bsc,-\bsc\}) -\operatorname{cost}_2(\bsQ, \{\bsc,-\bsc\})= \operatorname{cost}_2(\bsP, \{\tilde{\bsc},-\tilde{\bsc}\}) -\operatorname{cost}_2(\bsQ, \{\tilde{\bsc},-\tilde{\bsc}\}) \geq  \frac{1}{2} \sqrt{n}.\]
\end{proof}

\subsection{Proof of Lemma~\ref{lmm:large_size}: Construction of A Large Family $\calP$}
\label{sec:large_size}

{
In this chapter, our goal is to prove the existence of a sufficiently large family that meets our conditions. 
Our approach uses probabilistic methods to show that the probability of obtaining a sufficiently large family through random sampling of subspaces is greater than zero.
}
We first present the following lemma showing that the principal angles are likely to be large between random subspaces. 

\begin{lemma}[Large principal angles between random subspaces] \label{lmm:principal_angles}
Let $\mathcal{X},\mathcal{Y}$ be two sub-spaces chosen from the uniform distribution on the Grassmann manifold of $n$-planes in $\mathbb{R}^d$ $\left(n = O\left(\frac{d}{\log d}\right)\right)$ endowed with its canonical metric. 
We have 
\[
\operatorname{Pr}\left(\theta_{\frac{1}{32}10^{-6}\cdot n} <  \arccos\left(\frac{10^{-3}}{4\sqrt{2}}\right)\right)< \exp\left(-\frac{1}{128}10^{-6} \log\left(\frac{1}{1- \frac{1}{32}10^{-6}}\right)\cdot nd \right).\]
\end{lemma}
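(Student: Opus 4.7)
The plan is to use the explicit joint probability density of principal angles: for two uniformly random $n$-dim subspaces of $\mathbb{R}^d$ with $n \le d/2$, the sorted angles $0 \le \theta_1 \le \cdots \le \theta_n \le \pi/2$ admit a density proportional to
\[
\prod_{i=1}^n (\sin\theta_i)^{d-2n}\prod_{i<j}|\cos^2\theta_i-\cos^2\theta_j|.
\]
This is the density used in Absil et al.~\cite{absil2006largest} to bound $\Pr[\theta_n < \theta^*]$ (the single largest angle), and the task is to extend their single-angle calculation to control $\theta_k$ for $k = \frac{1}{32}10^{-6}\cdot n$. Writing $\alpha = \frac{1}{32}10^{-6}$ and $x_i = \cos^2\theta_i$, the symmetric (unordered) form of this density is the real Jacobi ensemble proportional to $\prod_i x_i^{-1/2}(1-x_i)^{(d-2n-1)/2}\prod_{i<j}|x_i-x_j|$, and the event $\{\theta_k < \arccos\sqrt\alpha\}$ is equivalent to $\{|\{i : x_i > \alpha\}| \ge \alpha n\}$, i.e.\ a tail event for the number of Jacobi eigenvalues above a threshold.

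The heart of the argument is to bound this tail probability by carefully integrating the joint density. First, a union bound over the $\binom{n}{\alpha n}$ choices of ``large'' coordinates reduces the problem to estimating $\Pr[x_1,\ldots,x_{\alpha n} > \alpha]$. Second, each such restricted coordinate contributes an exponential suppression factor $(1-\alpha)^{(d-2n-1)/2}$ coming directly from $(1-x_i)^{(d-2n-1)/2}$ in the density, exactly analogous to the single-angle suppression in Absil et al. Third, one integrates out the free coordinates and compares against the Jacobi normalization $Z_{n,d}$ (given by the Selberg integral) to obtain
\[
\Pr[x_1,\ldots,x_{\alpha n} > \alpha]\;\le\;(1-\alpha)^{\alpha n(d-2n-1)/2}\cdot (\text{polynomial in }d\text{ and }\alpha^{-1}).
\]
Combining with $\binom{n}{\alpha n} \le (e/\alpha)^{\alpha n}$, $(d-2n-1)/2 \ge d/4$ (valid once $n \le d/4$, which follows from $n = O(d/\log d)$ with a small enough hidden constant), and absorbing the polynomial corrections into half of the main suppression, yields the claimed bound $\exp\!\bigl(-\frac{1}{128}10^{-6}\log\frac{1}{1-\alpha}\cdot nd\bigr)$.

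\textbf{Main obstacle.} The main technical challenge is handling the Vandermonde factor $V(x) = \prod_{i<j}|x_i-x_j|$ correctly. A naive bound $V \le 1$ loses large polynomial factors against $Z_{n,d}$ (whose Selberg expression is a product of Gamma ratios involving $d$), and for $n = \Theta(d/\log d)$ these polynomial factors could in principle rival the exponential suppression. The fix is a more refined Selberg-based calculation in which the Vandermonde contribution in the restricted integral cancels pairwise against Gamma-function factors in the normalization, leaving only a mild polynomial-in-$d$ correction per coordinate. This bookkeeping step is where the hypothesis $n = O(d/\log d)$ is crucially used (beyond merely $n \le d/2$), ensuring that the residual correction is dominated by the exponential suppression across the whole regime of interest.
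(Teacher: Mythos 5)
Your plan follows essentially the same route as the paper's proof: start from the Jacobi-ensemble joint density of the squared cosines from Absil et al., extract the factor $(1-\alpha)^{(d-2n-1)/2}$ from each of the $\alpha n$ restricted coordinates, evaluate the residual restricted integral (Vandermonde included) by a Selberg-type identity --- the paper does this via Muirhead's integral representation and the Gauss evaluation at the identity of the matrix-argument ${}_2F_1$ --- and finally use $n = O(d/\log d)$ to absorb the multivariate-Gamma-ratio corrections into half of the exponential suppression. The only caveat is that your intermediate bound calls the residual correction ``polynomial in $d$'' when it is really of order $\exp\left(O(n^2\log d)\right)$, i.e.\ $d^{O(n^2)}$; but since you then invoke $n = O(d/\log d)$ precisely to dominate this term, the plan is sound and matches the paper's argument.
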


With Lemma~\ref{lmm:principal_angles}, we are ready to show the existence of a large enough family $\mathcal{P}$.

\begin{proof}[Proof of Lemma~\ref{lmm:large_size}]
We will generate $\exp\left(\frac{1}{256}10^{-6} \log\left(\frac{1}{1- \frac{1}{32}10^{-6}}\right)\cdot nd\right)$ subspaces, each of which is chosen from uniform distribution on the Grassmann manifold of $n$-planes in $\mathbb{R}^d$.
We then let our dataset be arbitrary orthonormal bases of the subspace. 
Notice that by Lemma~\ref{lmm:property_pa}, the choice of the orthonormal bases will not affect the value of the principal angles. With the result of lemma \ref{lmm:principal_angles}, we have that for any two datasets $\bsP$ and $\bsQ$, 
\begin{align*}
\operatorname{Pr}\left(\theta_{\frac{1}{32}10^{-6}\cdot n} <  \arccos\left(\frac{10^{-3}}{4\sqrt{2}}\right)\right) <  \exp\left(-\frac{1}{128}10^{-6}\log\left(\frac{1}{1- \frac{1}{32}10^{-6}}\right)\cdot nd \right).
\end{align*}
We then consider that 
\begin{align*}
&\operatorname{Pr}\left(\forall \bsP\neq \bsQ \in \mathcal{P},\theta_{\frac{1}{32}10^{-6}\cdot n} \geq  \arccos\left(\frac{10^{-3}}{4\sqrt{2}}\right)\right)  \\&=1-\operatorname{Pr}\left(\exists \bsP\neq \bsQ \in \mathcal{P},\theta_{\frac{1}{32}10^{-6}\cdot n} <  \arccos\left(\frac{10^{-3}}{4\sqrt{2}}\right)\right)
\\&\geq 1-\sum_{i\neq j}\operatorname{Pr}\left(\theta_{\frac{1}{32}10^{-6}\cdot n} <  \arccos\left(\frac{10^{-3}}{4\sqrt{2}}\right)\right) \\&> 1-\sum_{i\neq j}\exp\left(-\frac{1}{128}10^{-6}\log\left(\frac{1}{1- \frac{1}{32}10^{-6}}\right)\cdot nd \right)
\\&> 1- \left(\exp\left(\frac{1}{256}10^{-6} \log\left(\frac{1}{1- \frac{1}{32}10^{-6}}\right)\cdot nd\right)\right)^2 
\cdot\exp\left(-\frac{1}{128}10^{-6}\log\left(\frac{1}{1- \frac{1}{32}10^{-6}}\right)\cdot nd \right)\\&= 0.
\end{align*}
Therefore, there is a positive probability for us to find enough datasets that fulfill our requirement, which shows the existence of the family. 
\end{proof}

It remains to prove Lemma~\ref{lmm:principal_angles}. 
The first idea is to apply Random Matrix Theory, e.g. \cite{Meckes_2019}, which shows that with high probability the Hilbert-Schmidt norm of a random matrix is small. 
However, the theory only considers the squared matrices and their results may not be strong enough to have a bound of $\exp\left(-O(nd)\right)$. 
Instead, we take advantage of the techniques in \cite{absil2006largest}, which calculates the distribution of the largest principal angle $\theta_1$. 
Using algebraic calculations, we extend their ideas to bound $\theta_{O(n)}$. 

{Our core idea is to bound $\operatorname{Pr}(\theta_{O(n)} < \theta)$ for some specific value $\theta$. For preparation, we first have the joint distribution of the square of the cosine of the principal angles given in \cite{absil2006largest}.}

{
\begin{lemma}[Section 2 in \cite{absil2006largest}]\label{lmm:distribution}
Let $\mathcal{X},\mathcal{Y}$ be two sub-spaces chosen from the uniform distribution on the Grassmann manifold of $n$-planes in $\mathbb{R}^d$, with $d\geq 2n$. Let $\theta_1\leq \cdots\leq \theta_n $ be the principal angles between $\mathcal{X}$ and $\mathcal{Y}$, and have $\mu_1 \geq \cdots\geq \mu_n $ such that $\mu_i = \cos^2\theta_i$. The joint probability density function of the $\mu$’s is thus given by
\[\operatorname{dens}\left(\mu_1, \cdots, \mu_n\right)=c_{n, n, d-n} \prod_{i<j}\left|\mu_i-\mu_j\right| \prod_{i=1}^p \mu_i^{-\frac{1}{2}}\left(1-\mu_i\right)^{\frac{1}{2}\left(d-2n-1\right)} .\]
where $c_{n, n, d-n}=\frac{\pi^{\frac{n^2}{2}}}{\Gamma_n\left(\frac{n}{2}\right)} \cdot \frac{\Gamma_n\left(\frac{d}{2}\right)}{\Gamma_n\left(\frac{n}{2}\right) \Gamma_n\left(\frac{d-n}{2}\right)}$.
\end{lemma}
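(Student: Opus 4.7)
The plan is to derive the joint density of $(\mu_1,\ldots,\mu_n)$ by reducing to the classical eigenvalue distribution of a matrix-variate Beta random matrix (see Muirhead, Chapters 2--3). First, by orthogonal invariance of the Haar measure on the Grassmann manifold, I may fix $\mathcal{X}$ to be the span of the first $n$ coordinate axes, so that an orthonormal basis is $X = \begin{pmatrix} I_n \\ 0 \end{pmatrix} \in \mathbb{R}^{d\times n}$. Let $Y \in \mathbb{R}^{d\times n}$ be an orthonormal basis for $\mathcal{Y}$; by rotational invariance $Y$ is distributed as the first $n$ columns of a Haar-distributed orthogonal matrix in $O(d)$. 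Writing $Y = \begin{pmatrix} Y_1 \\ Y_2 \end{pmatrix}$ with $Y_1 \in \mathbb{R}^{n\times n}$, we have $X^T Y = Y_1$, and by Lemma~\ref{lmm:property_pa} the values $\mu_i = \cos^2\theta_i$ are exactly the eigenvalues of $W := Y_1^T Y_1$.

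Next, I show that $W$ follows the matrix-variate Beta distribution of type I with parameters $(n/2,(d-n)/2)$. This follows from a standard Gaussian construction: if $Z \in \mathbb{R}^{d\times n}$ has i.i.d.\ $N(0,1)$ entries and $Z = QR$ is its QR decomposition, then $Q$ is Haar-distributed on the Stiefel manifold, so the top block $Q_1$ has the same law as $Y_1$. Partitioning $Z = \begin{pmatrix} Z_1 \\ Z_2 \end{pmatrix}$ with $Z_1 \in \mathbb{R}^{n\times n}$, one obtains $W = Q_1^T Q_1 = (Z^T Z)^{-1/2}(Z_1^T Z_1)(Z^T Z)^{-1/2}$, where $Z_1^T Z_1$ and $Z_2^T Z_2$ are independent $n\times n$ Wisharts with $n$ and $d-n$ degrees of freedom respectively; this is precisely the defining construction of $W \sim \mathrm{Beta}_n(n/2,(d-n)/2)$, whose density on $\{0 < W < I_n\}$ is
\[
\frac{\Gamma_n(d/2)}{\Gamma_n(n/2)\,\Gamma_n((d-n)/2)}\,(\det W)^{-1/2}\,(\det(I_n - W))^{(d-2n-1)/2}.
\]

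Finally, I pass from this matrix density to the density of its eigenvalues. Decomposing $W = O\,\mathrm{diag}(\mu_1,\ldots,\mu_n)\,O^T$ with $O \in O(n)$ and $\mu_1 > \cdots > \mu_n > 0$, the Jacobian of the spectral map contributes the Vandermonde factor $\prod_{i<j}(\mu_i - \mu_j)$, while integrating out the orientation $O$ against the normalized Haar measure on $O(n)$ yields the volume factor $\pi^{n^2/2}/\Gamma_n(n/2)$; this is James's classical formula (see Muirhead, Theorem 3.2.17). Combining the three steps produces exactly the stated density with the stated constant $c_{n,n,d-n}$. The main obstacle is this last Jacobian computation, which requires a careful analysis of the exterior derivative under the spectral map together with a volume identity for the orthogonal group; I would invoke it as a black box from Muirhead rather than reprove it, since the substantive content of the derivation lies in the first two reduction steps, both of which rely only on orthogonal invariance of Haar measure and on the identification of principal angles with singular values via Lemma~\ref{lmm:property_pa}.
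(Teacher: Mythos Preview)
The paper does not prove this lemma at all: it is stated with the attribution ``Section~2 in \cite{absil2006largest}'' and used as a black box, with no derivation given. Your proposal, by contrast, supplies a complete and correct derivation along the standard route: reduce by orthogonal invariance to the case $X = \binom{I_n}{0}$, identify the $\mu_i$ as eigenvalues of $W = Y_1^T Y_1$, recognize $W$ as matrix-variate $\mathrm{Beta}_n(n/2,(d-n)/2)$ via the Wishart-ratio construction, and then read off the eigenvalue density using the standard Jacobian (Muirhead, Theorem~3.2.17). Each step is sound, and the exponents and normalizing constant you obtain match the statement exactly. So there is nothing to compare against in the paper itself; your argument is simply a valid proof of a result the paper chose to quote.
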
}

{Here we use the notion of multivariate gamma function, which can be expressed as a product of ordinary gamma functions. 
It should be noted that $a$ and some other scalars in subsequent definitions are complex numbers.}

{
\begin{definition}[Theorem 2.1.12 in \cite{muirhead2009aspects}]\label{lmm:2112} Given real numbers $a,m$ satisfying $a>\frac{1}{2}(m-1)$, the multivariate gamma function, denoted by $\Gamma_m(a)$, is defined to be
\[
\Gamma_m(a)=\pi^{m(m-1) / 4} \prod_{i=1}^m \Gamma\left[a-\frac{1}{2}(i-1)\right].
\]
\end{definition}}

{By letting $m=1$ the multivariate gamma function reduces to the ordinary one, and similarly we have $\Gamma_2(a)=\pi^{1 / 2} \Gamma(a) \Gamma(a-1 / 2)$ and $\Gamma_3(a)=\pi^{3 / 2} \Gamma(a) \Gamma(a-1 / 2) \Gamma(a-1)
$.} 

{With the joint probability density function, we bound the probability by integrating over the set $\left\{\theta_{1}\leq \cdots \leq \theta_{O(n)} < \theta\right\}$. To facilitate the integration process, we will ultimately convert it into an integration over matrices. Therefore, we also need the Gaussian hypergeometric function ${ }_2 F_1$ of matrix argument. The original definition of ${ }_2 F_1$ is rather complex and readers can refer to Definition~\ref{def:original_gaussian_hyper} and \cite{muirhead2009aspects} for details. In our computation, we will only focus on the integral representation of it defined in Lemma~\ref{lmm:742}. Recall that $\bsI_q$ is denoted as a $q\times q$ identity matrix.}

{
\begin{lemma}[Theorem 7.4.2. in  \cite{muirhead2009aspects}] \label{lmm:742}
Given real numbers $e,f,g$ and a real symmetric $q\times q$ matrix $\bsX$ satisfying 
$$\bsX<\bsI_q, e>\frac{1}{2}(q-1),g-e>\frac{1}{2}(q-1), $$
the Gaussian hypergeometric function of matrix argument ${ }_2 F_1(e, f ; g ; \bsX)$ function has the integral representation
\begin{align*}
&{ }_2 F_1(e, f ; g ; \bsX)\\&=\frac{\Gamma_q(g)}{\Gamma_q(e) \Gamma_q(g-e)} \int_{0<\bsY<\bsI_q} \operatorname{det}(\bsI_q-\bsX \bsY)^{-f}(\operatorname{det} \bsY)^{e-(q+1) / 2} 
\cdot \operatorname{det}(\bsI_q-\bsY)^{g-e-(q+1) / 2}(\de \bsY),
\end{align*}
\end{lemma}}

{
The function value of the identity matrix is, }

{
\begin{lemma}[Equation (3.2) in \cite{richards2020reflection}]\label{lmm:32} Given real numbers $e,f,g,q$, The value of ${ }_2 F_1$ function for identity matrix $\bsI_q $ is 
\[{ }_2 F_1\left(e, f ; g ; \bsI_q\right) = \frac{\Gamma_{q}(g)\Gamma_{q}(g-e-f)}{\Gamma_{q}(g-e)\Gamma_{q}(g-f)}. \]
\end{lemma}}

{We also need the following properties of Gaussian hypergeometric function of matrix argument}

{
\begin{lemma}[Refinement of Theorem 7.4.2. and Theorem 7.4.3. in  \cite{muirhead2009aspects}]\label{lmm:742+743} Given real numbers $e,f,g$ and a real symmetric $q\times q$ matrix $\bsX$ satisfying 
$$\bsX<\bsI_q, e>\frac{1}{2}(q-1),g-e>\frac{1}{2}(q-1), $$
the ${ }_2 F_1$ function satisfies that 
\begin{align*}
& \int_{0<\bsY<\bsI_q} \operatorname{det}(\bsI_q-\bsX \bsY)^{-f}(\operatorname{det} \bsY)^{e-(q+1) / 2} \operatorname{det}(\bsI_q-\bsY)^{g-e-(q+1) / 2} \left(\de \bsY\right) \\
&=\frac{\Gamma_q(e) \Gamma_q(g-e)}{\Gamma_q(g)} \operatorname{det}(\bsI_q-\bsX)^{-f}{ }_2 F_1\left(g-e, f ; g ;-\bsX(\bsI_q-\bsX)^{-1}\right)
\end{align*}
\end{lemma}}

{
\begin{lemma}\label{lmm:less_than_identity}Given real numbers $e,f,g,q$ and $0<h<1$, the ${ }_2 F_1$ function satisfies that 
\[{ }_2 F_1\left(e, f ; g ; (1-h)\bsI_q\right) \leq { }_2 F_1\left(e, f ; g ; \bsI_q\right). \]
\end{lemma}}
\begin{proof}
{
For the complete proof, please refer to Appendix~\ref{appendix}. The idea is to use the original definition of the Gaussian hypergeometric function of matrix argument. The value of the function can be viewed as positive polynomials in the eigenvalues of the matrix. Since the eigenvalue of $(1-h)\bsI_q$ is smaller than that of $\bsI_q$, we get our desired bound. } 
\end{proof}

With the help of the above notations and properties, we are now ready for the proof of Lemma~\ref{lmm:principal_angles}.

\begin{proof}[Proof of Lemma~\ref{lmm:principal_angles}] For simplicity of expression, we denote $\frac{1}{32}10^{-6}\cdot n = an$ and $\arccos\left(\frac{10^{-3}}{4\sqrt{2}}\right)= \theta$.
Let $1\geq \mu_1\geq \cdots\geq \mu_n \geq 0$ such that $\mu_i = \cos^{2}\theta_i$ and $\mu = \cos^{2}\theta$. By Lemma~\ref{lmm:distribution}, we have 

\[\operatorname{dens}\left(\mu_1, \cdots, \mu_n\right)=c_{n, n, d-n} \prod_{i<j}\left|\mu_i-\mu_j\right| \prod_{i=1}^p \mu_i^{-\frac{1}{2}}\left(1-\mu_i\right)^{\frac{1}{2}\left(d-2n-1\right)} .\]

Observe that the determinant of the Jacobian of the change of variables between $\mu_i$s and $\theta_i$s are $\prod_{i=1}^{n} 2\sin\theta_i \cos\theta_i$, to obtain
\[\operatorname{dens}\left(\theta_1, \cdots, \theta_n\right)=2 c_{n, n, d-n} \prod_{i<j}\left|\cos ^2 \theta_i-\cos ^2 \theta_j\right| \prod_{i=1}^n \cos ^{0} \theta_i \sin ^{d-2n} \theta_i.\]

{
The rest of the proof is to calculate the probability we are interested in. We define $\mathcal{T} = \left\{0\leq\theta_{1}\leq \cdots \leq \theta_{a n} < \theta\right\}$ be the set of angles we want and $\mathcal{M} = \left\{1\geq\mu_{1}\geq \cdots \geq \mu_{a n} > \mu\right\}$ be the corresponding set for the cosine. We have 
}

\begin{align*}
\operatorname{Pr}\left(\theta_{a n}<\theta \right) 
 & = \int_{\calT} \operatorname{dens}\left(\theta_1, \cdots, \theta_n\right) \left(\de\theta_1\cdots\theta_n\right) \\
&= \int_{\calT} 2 c_{n, n, d-n} \prod_{i<j}\left|\cos ^2 \theta_i-\cos ^2 \theta_j\right| \prod_{i=1}^n \sin ^{d-2n} \theta_i \left(\de\theta_1\cdots\theta_n \right)\\
&\leq \int_{\calT} 2 c_{n, n, d-n} \prod_{i<j\leq a n}\left|\cos ^2 \theta_i-\cos ^2 \theta_j\right| \prod_{i=1}^{a n} \sin ^{d-2n} \theta_i \left(\de\theta_1\cdots\theta_n\right) \\
&\leq \left(\frac{\pi}{2}\right)^{(1-a)n}\int_{\calT} 2 c_{n, n, d-n} 
\prod_{i<j\leq a n}\left|\cos ^2 \theta_i-\cos ^2 \theta_j\right| \prod_{i=1}^{a n} \sin ^{d-2n} \theta_i \left(\de\theta_1\cdots\theta_{a n}\right)\\
&= \left(\frac{\pi}{2}\right)^{(1-a)n}\int_{\calM} c_{n, n, d-n}\prod_{i<j\leq a n}\left|\mu_i-\mu_j\right| \prod_{i=1}^{a n} \mu_{i}^{-\frac{1}{2}}(1-\mu_i)^{\frac{d-2n-1}{2}} \left(\de\mu_1\cdots\mu_{a n}\right),
\end{align*}
The change of variables $\mu_i = (1 - \mu)t_i + \mu $ gives
\begin{align*}
 \operatorname{Pr}\left(\theta_{a n}<\theta \right)  &\leq \left(\frac{\pi}{2}\right)^{(1-a)n}  c_{n, n, d-n}(1-\mu)^{\frac{a n(a n-1)}{2}}\mu^{-\frac{a n}{2}} (1-\mu)^{\frac{a n(d-2n-1)}{2}}(1-\mu)^{a n}\\
 &\quad \int_{1\geq t_1\geq \cdots\geq t_{a n}\geq 0}\prod_{i<j\leq a n}\left|t_i-t_j\right| \times \prod_{i=1}^{a n} \left(1+\frac{1-\mu}{\mu}t_i\right)^{-\frac{1}{2}}(1-t_i)^{\frac{d-2n-1}{2}} \left(\de t_1\cdots t_{a n}\right)\\
 &=  \left(\frac{\pi}{2}\right)^{(1-a)n}  c_{n, n, d-n}(1-\mu)^{\frac{a n(a n-1)}{2}}\mu^{-\frac{a n}{2}} (1-\mu)^{\frac{a n(d-2n+1)}{2}}\\
 & \quad\frac{2^{a n}}{\vol\left(\boldsymbol{O}_{a n}\right)} \int_{0<\bsY<\bsI_{a n}} \left(\operatorname{det}\left(\bsI_{a n}+\frac{1-\mu}{\mu}\bsI_{a n}\bsY\right)\right)^{-\frac{1}{2}}\times \left(\operatorname{det}(\bsI_{a n}-\bsY)\right)^{\frac{d-2n-1}{2}}\left(\de \bsY\right),
\end{align*}
where $\vol\left(\boldsymbol{O}_q\right)=\int_{\boldsymbol{O}_q} \boldsymbol{A}^{\top} \left(\mathrm{d} \boldsymbol{A}\right)=\frac{2^{q}\pi^{q^2/2}}{\Gamma_q\left(\frac{q}{2}\right)}$ is the volume of the orthogonal group $\boldsymbol{O}_q$ [Page 71 in \cite{muirhead2009aspects}] and the last equation comes from the fact that $\left(\de \bsY\right)=\prod\left|t_i-t_j\right| \left(\de \boldsymbol{T}\left(\boldsymbol{A}^{\top} \left(\de \boldsymbol{A}\right)\right)\right)$, where $\bsY=\boldsymbol{A} \boldsymbol{T} \boldsymbol{A}^{\top}$ is an eigen-decomposition with eigenvalues sorted in non-increasing order, and $\boldsymbol{A}$ cancels out everywhere in the integrated. 
The inequality signs in the integral means PSD ordering and factor $2^{a n}$ appears because the eigen-decomposition is defined up to the choice of the direction of the eigenvectors. This procedure is similar to that in Section 3 of \cite{absil2006largest}.
 
On the other hand, we have the property of Lemma~\ref{lmm:742+743} to have 

\begin{align*}
 \int_{0<\bsY<\bsI_q} & \operatorname{det}(\bsI_q-\bsX \bsY)^{-f}(\operatorname{det} \bsY)^{e-(q+1) / 2} \operatorname{det}(\bsI_q-\bsY)^{g-e-(q+1) / 2} \left( \de \bsY\right) \\
& \quad=\frac{\Gamma_q(e) \Gamma_q(g-e)}{\Gamma_q(g)} \operatorname{det}(\bsI_q-\bsX)^{-f}{ }_2 F_1\left(g-e, f ; g ;-\bsX(\bsI_q-\bsX)^{-1}\right),
\end{align*}
We make the following appointment 
\[\left\{\begin{array}{l}
 q = a n\\
 -f = -\frac{1}{2} \\
 e - (q + 1)/2 = 0\\
g-e- (q + 1)/2 = \frac{d-2n-1}{2}\\
\bsX = -\frac{1-\mu}{\mu} \bsI_{a n}
\end{array}\right. \Rightarrow \left\{\begin{array}{l}
 q = a n\\
 f = \frac{1}{2} \\
 e =  \frac{a n + 1}{2} \\
g = \frac{d-2(1-a)n+1}{2} \\
\bsX = -\frac{1-\mu}{\mu} \bsI_{a n}
\end{array}\right. ,\]
and we get 
\begin{align*}
\operatorname{Pr}\left(\theta_{a n}<\theta \right) \leq & \left(\frac{\pi}{2}\right)^{(1-a)n} \cdot \frac{\pi^{\frac{n^2}{2}}}{\Gamma_n\left(\frac{n}{2}\right)} \cdot \frac{\Gamma_n\left(\frac{d}{2}\right)}{\Gamma_n\left(\frac{n}{2}\right) \Gamma_n\left(\frac{d-n}{2}\right)} \cdot \frac{\Gamma_{a n}\left(\frac{a n}{2}\right)}{ \pi^{(a n)^2/2}}\cdot \frac{\Gamma_{a n}\left(\frac{a n+1}{2}\right) \Gamma_{a n}\left(\frac{d-(2-a)n}{2}\right)}{\Gamma_{a n}\left(\frac{d-2(1-a)n+1}{2}\right)}  \\
& \quad \mu^{-\frac{a n}{2}}(1-\mu)^{\frac{a n(d-(2-a)n)}{2}}
\operatorname{det}\left(\frac{1}{\mu}\bsI_{a n}\right)^{-\frac{1}{2}} \\&\quad { }_2 F_1\left(\frac{d-(2-a)n}{2}, \frac{1}{2} ; \frac{d-2(1-a)n+1}{2} ;(1-\mu)\bsI_{a n}\right).
\end{align*}

We now consider the value of Gaussian hypergeometric function of matrix argument. By Lemma~\ref{lmm:less_than_identity} and the value of the identity matrix given by Lemma~\ref{lmm:32}, we have 

%
%
\begin{align*}
&{ }_2 F_1\left(\frac{d-(2-a)n}{2}, \frac{1}{2} ; \frac{d-2(1-a)n+1}{2} ;(1-\mu)\bsI_{a n}\right)\\&\leq { }_2 F_1\left(\frac{d-(2-a)n}{2}, \frac{1}{2} ; \frac{d-2(1-a)n+1}{2} ;\bsI_{a n}\right) = \frac{\Gamma_{a n}\left(\frac{d-2(1-a)n+1}{2}\right)\Gamma_{a n}\left(\frac{a n}{2}\right)}{\Gamma_{a n}\left(\frac{a n+1}{2}\right)\Gamma_{a n}\left(\frac{d-2(1-a)n}{2}\right)}.
\end{align*}
Bringing it back, we have that 
\begin{align*}
\operatorname{Pr}\left(\theta_{a n}<\theta \right) &\leq \left(\frac{\pi}{2}\right)^{(1-a)n} \cdot \frac{\pi^{\frac{n^2}{2}}}{\Gamma_n\left(\frac{n}{2}\right)} \cdot \frac{\Gamma_n\left(\frac{d}{2}\right)}{\Gamma_n\left(\frac{n}{2}\right) \Gamma_n\left(\frac{d-n}{2}\right)} \cdot \frac{\Gamma_{a n}\left(\frac{a n}{2}\right)}{ \pi^{(a n)^2/2}} \\&\quad \cdot \frac{\Gamma_{a n}\left(\frac{a n}{2}\right) \Gamma_{a n}\left(\frac{d-(2-a)n}{2}\right)}{\Gamma_{a n}\left(\frac{d-2(1-a)n}{2}\right)} (1-\mu)^{\frac{a n(d-(2-a)n)}{2}}
\\&\leq  \left(\frac{\pi}{2}\right)^{(1-a)n} \cdot \pi^{\frac{(1-a^2)n^2}{2}} \cdot \frac{\Gamma_{n}\left(\frac{d}{2}\right)}{\Gamma_{n}\left(\frac{d-n}{2}\right)} \left(\Gamma_{a n}\left(\frac{a n}{2}\right)\right)^2 \cdot (1-\mu)^{\frac{a n(d-(2-a)n)}{2}}.
\end{align*}
By Definition~\ref{lmm:2112} and identities $\Gamma\left(\frac{1}{2}\right)= \sqrt{\pi}$ and  $\Gamma\left(m+1\right)= m \Gamma\left(m\right)$, we would have that 

\begin{align*}
\frac{\Gamma_{n}\left(\frac{d}{2}\right)}{\Gamma_{n}\left(\frac{d-n}{2}\right)} &= \frac{\prod_{i=1}^n \Gamma\left(\frac{d-n}{2}+\frac{i}{2}\right)}{\prod_{i=1}^n \Gamma\left(\frac{d-2n}{2}+\frac{i}{2}\right)} = \prod_{i=1}^n \left(\frac{d-n-1}{2}+\frac{i}{2}\right)\cdots \left(\frac{d-2n}{2}+\frac{i}{2}\right)\\&\leq \left(\frac{d-1}{2}\right)^{n^2} = \exp\left(O(n^2 \log d)\right).
\end{align*}
\begin{align*}
\left(\Gamma_{a n}\left(\frac{a n}{2}\right)\right)^2 &= \pi^{\frac{a n(a n-1)}{2}} \left(\prod_{i=1}^{a n} \Gamma\left(\frac{i}{2}\right)\right)^2\leq \pi^{\frac{a n(a n-1)}{2}} \left( \Gamma\left(\frac{a n}{2}\right)\right)^{2a n}\\&\leq \pi^{\frac{a n(a n-1)}{2}} \left(\sqrt{\pi} \right)^{2a n}\left(\frac{an - 1}{2}\right)^{a^2 n^2} = \exp\left(O(n^2 \log n)\right).
\end{align*}
Combining together, we would have that 

\begin{align*}
\operatorname{Pr}\left(\theta_{a n}<\theta \right) &\leq \exp\left(O(n^2 \log n)\right)\cdot \exp\left(O(n^2 \log d)\right)  \cdot (1-\mu)^{\frac{a nd}{2}} 
\\& \leq \exp\left(-\frac{a}{2} \log\left(\frac{1}{1-\mu}\right)\cdot nd + O(n^2 \log n)+O(n^2 \log d)\right).
\end{align*}
In our case, $a =\frac{1}{32}10^{-6} $ and $\mu = \cos^{2} \theta = \frac{1}{32}10^{-6}$ this should be 
\begin{align*}
&\operatorname{Pr}\left(\theta_{\frac{1}{32}10^{-6}\cdot n} <  \arccos\left(\frac{10^{-3}}{4\sqrt{2}}\right)\right) \\&\leq \exp\left(-\frac{1}{64}10^{-6} \log\left(\frac{1}{1- \frac{1}{32}10^{-6}}\right)\cdot nd + O(n^2 \log n)+O(n^2 \log d)\right) \\ &<  \exp\left(-\frac{1}{128}10^{-6}\log\left(\frac{1}{1- \frac{1}{32}10^{-6}}\right)\cdot nd \right).
\end{align*}
where the last equation holds for $d = \Omega(n \log d)$ with sufficiently large constant.
\end{proof}

\subsection{Extension to General $z\geq 1$}
\label{sec:generalize_power_high}

In this section, we generalize the lower bound to arbitrary powers $z\geq 1$. We again focus on the cases where $k=2$ as we can generalize the result to higher $k$ in Section~\ref{sec:generalize_k}. 
For ease of analysis, we again do not require the construction of datasets $\bsP\subseteq [\Delta]^d$ and actually ensure that every $\bsP$ consists of orthonormal bases of some subspaces.
At the end of the proof, we will show how to round and scale these datasets $\bsP$ into $[\Delta]^d$. 
The cost function we have now without scaling is 

\begin{align*}
\operatorname{cost}_z(\bsP, \{\bsc,-\bsc\})&=\sum_{i=1}^n \min \left\{\dist(\bsp_{i}, \bsc)^z,\dist(\bsp_{i}, -\bsc)^z \right\}= \sum_{i=1}^n \left(\left\|\bsp_{i}\right\|_2^2+\left\|\bsc\right\|_2^2 - 2\left | \langle \bsp_{i},\bsc\rangle\right|\right)^{\frac{z}{2}}\\&=\sum_{i=1}^n \left(2- 2\left | \langle \bsp_{i},\bsc\rangle\right|\right)^{\frac{z}{2}}. 
\end{align*}
The value is upper bounded by $2^{\frac{z}{2}} n$ and the difference between the cost of two datasets is 
\[\left| \operatorname{cost}_z(\bsP, \{\bsc,-\bsc\}) -\operatorname{cost}_z(\bsQ, \{\bsc,-\bsc\})\right| = \left| \sum_{i=1}^n \left(2- 2\left | \langle \bsp_{i},\bsc\rangle\right|\right)^{\frac{z}{2}} - \sum_{i=1}^n \left(2- 2\left | \langle \bsq_{i},c\rangle\right|\right)^{\frac{z}{2}}\right|.\]
Similarly, to prove a lower bound on $\spc(n, \Delta, 2,z,d, \varepsilon)$, we just need to find a large enough set $\mathcal{P}$ such that the cost function of any two elements is separated. The key tool is Lemma~\ref{lmm:inequation_for_arbitrary_power}. 

\begin{lemma}[Inequality of Taylor expansion]
\label{lmm:inequation_for_arbitrary_power} 
For any $0 < z \leq 2$ and any $x \in  \left[0, \frac{1}{2}\right]$, we have
\[1-{\frac{z}{2}}x - z\left(1-{\frac{z}{2}}\right)x^2 \leq (1-x)^{\frac{z}{2}} \leq  1-{\frac{z}{2}}x  . \]
For any $ z\geq 2$ and any $x \in  \left[0, \frac{1}{2}\right]$, we have
\[1-{\frac{z}{2}}x \leq (1-x)^{\frac{z}{2}} \leq   1-{\frac{z}{2}}x + {\frac{z}{2}}\left({\frac{z}{2}}-1\right)x^2. \]
\end{lemma}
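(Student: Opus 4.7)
The plan is to apply Taylor's theorem with the Lagrange form of the remainder to $f(x)=(1-x)^{z/2}$ expanded around $x=0$. A direct computation gives $f'(x)=-\tfrac{z}{2}(1-x)^{z/2-1}$ and $f''(x)=\tfrac{z}{2}\bigl(\tfrac{z}{2}-1\bigr)(1-x)^{z/2-2}$, hence $f(0)=1$, $f'(0)=-z/2$, and
\[
(1-x)^{z/2}=1-\tfrac{z}{2}x+\tfrac{1}{2}\cdot\tfrac{z}{2}\bigl(\tfrac{z}{2}-1\bigr)(1-\xi)^{z/2-2}\,x^{2}
\]
for some $\xi\in[0,x]\subseteq[0,1/2]$. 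The sign of the prefactor $\tfrac{z}{2}(\tfrac{z}{2}-1)$ flips at $z=2$, which is exactly why the two regimes in the statement behave oppositely: $f$ is concave for $0<z\le 2$ and convex for $z\ge 2$.

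For the regime $0<z\le 2$, the upper bound $(1-x)^{z/2}\le 1-\tfrac{z}{2}x$ is immediate from concavity (the remainder is nonpositive). For the matching lower bound, I need a uniform two-sided control of $(1-\xi)^{z/2-2}$ on $\xi\in[0,1/2]$. Since $z/2-2\in(-2,-1]$, this factor is increasing in $\xi$ and is at most $2^{2-z/2}$ at $\xi=1/2$, so the remainder is at least $-\tfrac{z}{2}\bigl(1-\tfrac{z}{2}\bigr)\cdot 2^{1-z/2}x^{2}$. Noting $2^{1-z/2}\le 2$ whenever $z\ge 0$ collapses this to $-z(1-\tfrac{z}{2})x^{2}$, which matches the statement.

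For the regime $z\ge 2$, the lower bound follows at once from convexity, and for the upper bound I need $(1-\xi)^{z/2-2}\le 2$ on $\xi\in[0,1/2]$. I will handle this by a short case split: when $z\ge 4$ the exponent is nonnegative so the factor is at most $1$; when $2<z<4$ the exponent lies in $[-1,0)$, the factor is increasing in $\xi$, and its maximum $2^{2-z/2}$ at $\xi=1/2$ is strictly less than $2$; and the endpoint $z=2$ is trivial because the prefactor $\tfrac{z}{2}(\tfrac{z}{2}-1)$ vanishes. Substituting $f''(\xi)\le 2\cdot\tfrac{z}{2}(\tfrac{z}{2}-1)$ into the Taylor remainder yields exactly the desired upper bound. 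The only non-routine element is the case split establishing $(1-\xi)^{z/2-2}\le 2$ (and the analogous bound $2^{1-z/2}\le 2$ in the concave regime) uniformly across $z$; once these are in place the rest is bookkeeping.
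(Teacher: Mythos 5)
Your proof is correct. It is worth noting how it relates to the paper's argument: the paper obtains the two linear bounds $(1-x)^{z/2}\lessgtr 1-\tfrac{z}{2}x$ from the series identity $(1-x)^{z/2}=\exp\bigl(-\tfrac{z}{2}\sum_{m\ge 1}x^m/m\bigr)$ compared term by term with $\exp\bigl(-\sum_{m\ge 1}(\tfrac{z}{2}x)^m/m\bigr)$, and then proves each quadratic bound by introducing the auxiliary function $h(x)=\pm\tfrac{z}{2}\bigl(\tfrac{z}{2}-1\bigr)x^2+\tfrac{z}{2}x-1+(1-x)^{z/2}$ and showing $h''$ has a fixed sign, $h'(0)=0$, $h(0)=0$. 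You instead get all four inequalities from a single application of Taylor's theorem with Lagrange remainder: the sign of the prefactor $\tfrac{z}{2}(\tfrac{z}{2}-1)$ gives the linear bounds for free, and a uniform bound on $(1-\xi)^{z/2-2}$ over $\xi\in[0,1/2]$ gives the quadratic ones. The decisive quantitative step is the same in both proofs --- bounding $(1-\xi)^{z/2-2}$ by $2^{2-z/2}\le 4$ (resp.\ by $2$) on $[0,1/2]$, which is exactly what makes the paper's $h''$ single-signed --- so the two arguments are equivalent in content, but yours is more economical in that it dispenses with both the exponential-series comparison and the double integration of $h''$. One cosmetic slip: for $2<z<4$ the exponent $z/2-2$ lies in the open interval $(-1,0)$ rather than $[-1,0)$; the endpoint $-1$ corresponds to $z=2$, which you already handle separately via the vanishing prefactor, so nothing is affected.
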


\begin{proof}
The detailed proof is in Appendix~\ref{appendix}. The inequation is 
essentially obtained by ignoring the higher-order terms in the Taylor expansion. The main idea is to calculate the first and second order of the function, and then use monotonicity to derive our final bound. 
\end{proof}

{
In our previous conclusions, we have already identified a sufficiently large family such that for any two datasets, when $z = 2$, the value of the cost function shows a significant difference.
\[\left| \operatorname{cost}_2(\bsP, \{\bsc,-\bsc\}) -\operatorname{cost}_2(\bsQ, \{\bsc,-\bsc\})\right| = \left| \sum_{i=1}^n \left | \langle \bsp_{i},\bsc\rangle\right| - \sum_{i=1}^n \left | \langle \bsq_{i},c\rangle\right|\right| \geq \frac{1}{2}\sqrt{n}.\]
All we need to do is use our Lemma~\ref{lmm:inequation_for_arbitrary_power} to expand the value of $\operatorname{cost}_z$ so that the result we obtain for $z=2$ an be generalized to any $z$. The scaling process would then be standard. 
}

\begin{lemma}\label{lmm:arbitraty_power}Assume $\bsP$ and $\bsQ$ being two set of $n$ orthonormal bases in $\mathbb{R}^{d}$ (with $2n < d$). If we can find $\hat{c}$ to satisfy 
\[\left| \sum_{i=1}^{n} \left|\langle \bsp_{i},\hat{\bsc}\rangle\right| - \sum_{i=1}^{n}\left| \langle \bsq_{i},\hat{\bsc}\rangle\right|  \right| > \frac{1}{2} \sqrt{n}, \]
where $\left\|\hat{\bsc}\right\| = 1$. We would then be able to find unit-norm vector $\bsc$ such that 
\[\left|\sum_{i=1}^{n}\left(2-2\left|\langle \bsp_{i},\bsc\rangle\right|\right)^{\frac{z}{2}} - \sum_{i=1}^{n}\left(2-2\left|\langle \bsq_{i},\bsc\rangle\right|\right)^{\frac{z}{2}} \right| \geq \left\{\begin{matrix}
 \frac{2^{\frac{z}{2}} {z}}{8}\sqrt{n} - \frac{2^{\frac{z}{2}} {z}\left(1-{\frac{z}{2}}\right)}{4} ,0<z\leq 2\\
\frac{2^{\frac{z}{2}} {z}}{8}\sqrt{n} - \frac{2^{\frac{z}{2}} {z}\left({\frac{z}{2}}-1\right)}{8} ,z\geq 2
\end{matrix}\right.
\]
\end{lemma}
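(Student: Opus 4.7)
The plan is to build the desired unit vector $c$ by scaling the hypothesised $\hat c$, then expand $(1-x)^{z/2}$ via \Cref{lmm:inequation_for_arbitrary_power} around each inner product. The key obstacle is that \Cref{lmm:inequation_for_arbitrary_power} requires the argument $x$ to lie in $[0, \tfrac12]$; since $|\langle p_i,\hat c\rangle|$ and $|\langle q_i,\hat c\rangle|$ may each be as large as $1$, we first have to halve the relevant inner products while keeping $c$ of unit norm. The assumption $2n < d$ is precisely what lets us do this without losing signal.

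Concretely, let $V = \operatorname{span}(P \cup Q)$, which has dimension at most $2n < d$, and pick any unit vector $e \in V^\perp$. Set
\[
c \;:=\; \tfrac12 \hat c \;+\; \tfrac{\sqrt 3}{2} e.
\]
Then $\|c\|_2 = 1$, and because $p_i, q_i \in V$ while $e \perp V$, we have $\langle p_i, c\rangle = \tfrac12 \langle p_i,\hat c\rangle$ and $\langle q_i, c\rangle = \tfrac12 \langle q_i,\hat c\rangle$. Writing $a_i := |\langle p_i,c\rangle|$ and $b_i := |\langle q_i,c\rangle|$, we immediately get $a_i, b_i \in [0, \tfrac12]$, so \Cref{lmm:inequation_for_arbitrary_power} applies to each term. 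Moreover, orthonormality of $\{p_i\}$ and $\{q_i\}$ gives $\sum_i a_i^2 \le \tfrac14 \|\hat c\|_2^2 = \tfrac14$ and $\sum_i b_i^2 \le \tfrac14$, while the hypothesis yields $\bigl|\sum_i a_i - \sum_i b_i\bigr| > \tfrac14\sqrt{n}$. Without loss of generality, assume $\sum_i b_i - \sum_i a_i > \tfrac14 \sqrt{n}$.

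Now apply \Cref{lmm:inequation_for_arbitrary_power} termwise to $(1-a_i)^{z/2}$ and $(1-b_i)^{z/2}$ and sum. For $0 < z \le 2$, use the upper bound $(1-a_i)^{z/2} \le 1 - \tfrac{z}{2}a_i$ and the lower bound $(1-b_i)^{z/2} \ge 1 - \tfrac{z}{2}b_i - z(1-\tfrac{z}{2})b_i^2$, so
\[
\sum_i\!\bigl[(1-b_i)^{z/2} - (1-a_i)^{z/2}\bigr] \;\ge\; \tfrac{z}{2}\!\sum_i (b_i - a_i) \;-\; z\bigl(1-\tfrac{z}{2}\bigr)\!\sum_i b_i^2 \;\ge\; \tfrac{z}{8}\sqrt{n} - \tfrac{z(1-z/2)}{4}.
\]
For $z \ge 2$, reverse the inequalities: use $(1-a_i)^{z/2} \ge 1 - \tfrac{z}{2}a_i$ and $(1-b_i)^{z/2} \le 1 - \tfrac{z}{2}b_i + \tfrac{z}{2}(\tfrac{z}{2}-1)b_i^2$, which after summation yields
\[
\sum_i\!\bigl[(1-a_i)^{z/2} - (1-b_i)^{z/2}\bigr] \;\ge\; \tfrac{z}{8}\sqrt{n} - \tfrac{z(z/2-1)}{8}.
\]

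Finally, multiply both sides by $2^{z/2}$ using $(2-2x)^{z/2} = 2^{z/2}(1-x)^{z/2}$, which produces exactly the two claimed lower bounds. The one step to be slightly careful about is the sign of $\sum_i(b_i - a_i)$: since the statement only asserts an absolute value, we may swap the roles of $P$ and $Q$ to match whichever of the two Taylor cases we are in, so that the "good" direction of the one-sided Taylor bound aligns with the sign of the discrepancy. This is the only subtlety; everything else is direct algebraic manipulation using the two inputs (the scaling trick enabled by $2n < d$, and the Taylor inequality of \Cref{lmm:inequation_for_arbitrary_power}).
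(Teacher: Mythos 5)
Your proposal is correct and follows essentially the same route as the paper: scale $\hat c$ by $\tfrac12$ so that all inner products fall into $[0,\tfrac12]$, apply the Taylor bounds of \Cref{lmm:inequation_for_arbitrary_power} termwise with Bessel's inequality $\sum_i\langle\cdot,\tilde c\rangle^2\le\|\tilde c\|_2^2=\tfrac14$ controlling the quadratic correction, and use $2n<d$ to pad $c$ back to unit norm with a vector orthogonal to $\operatorname{span}(P\cup Q)$ so the inner products are unchanged. The only blemish is the sign of the linear term in your first display (lower-bounding $(1-b_i)^{z/2}$ and upper-bounding $(1-a_i)^{z/2}$ yields $\tfrac z2\sum_i(a_i-b_i)$, not $\tfrac z2\sum_i(b_i-a_i)$), but since the conclusion is an absolute value and you explicitly note that $P$ and $Q$ may be swapped to align the signs, this is a cosmetic slip rather than a gap.
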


\begin{proof}
Without loss of generality, we can assume that 
\[ \sum_{i=1}^{n} \left|\langle \bsp_{i},\hat{\bsc}\rangle\right| - \sum_{i=1}^{n}\left| \langle \bsq_{i},\hat{\bsc}\rangle\right|   > \frac{1}{2} \sqrt{n}. \]
We can choose $\tilde{\bsc} = \frac{1}{2} \hat{\bsc}$ such that 
\[\left|\langle \bsp_{i},\tilde{\bsc}\rangle\right| \leq \left\| \bsp_{i}\right\|\left\| \tilde{\bsc}\right\| \leq \frac{1}{2}, \left|\langle \bsq_{i},\tilde{\bsc}\rangle\right| \leq \left\| \bsq_{i}\right\|\left\| \tilde{\bsc}\right\| \leq \frac{1}{2},\forall i\in[n],\]
which fulfills the requirement of Lemma \ref{lmm:inequation_for_arbitrary_power}. For $0<z\leq 2$, we have that
\begin{align*}
&\sum_{i=1}^{n}\left(1-\left|\langle \bsq_{i},\tilde{\bsc}\rangle\right|\right)^{\frac{z}{2}}- \sum_{i=1}^{n}\left(1-\left|\langle \bsp_{i},\tilde{\bsc}\rangle\right|\right)^{\frac{z}{2}} 
 \\&\geq \sum_{i=1}^{n} \left(1 - {\frac{z}{2}}\left|\langle \bsq_{i},\tilde{\bsc}\rangle\right| - z\left(1-{\frac{z}{2}}\right)\langle \bsq_{i},\tilde{\bsc}\rangle^2\right) - \sum_{i=1}^{n}\left(1-{\frac{z}{2}}\left|\langle \bsp_{i},\tilde{\bsc}\rangle\right|\right) \\
&= {\frac{z}{2}}\left(\sum_{i=1}^{n} \left|\langle \bsp_{i},\tilde{\bsc}\rangle\right| - \sum_{i=1}^{n}\left| \langle \bsq_{i},\tilde{\bsc}\rangle\right|\right) - z\left(1-{\frac{z}{2}}\right) \sum_{i=1}^{n} \langle \bsq_{i},\tilde{\bsc}\rangle^2.
\end{align*}
Based on our choice and the fact that $\bsq_{i}$ is a set of $n$ orthonormal bases, we have 
\begin{align*}
& \sum_{i=1}^{n} \left|\langle \bsp_{i},\tilde{\bsc}\rangle\right| - \sum_{i=1}^{n}\left| \langle \bsq_{i},\tilde{\bsc}\rangle\right| = \frac{1}{2} \left(\sum_{i=1}^{n} \left|\langle \bsp_{i},\hat{\bsc}\rangle\right| - \sum_{i=1}^{n}\left| \langle \bsq_{i},\hat{\bsc}\rangle\right| \right)  >  \frac{1}{4}\sqrt{n}, \\
& \sum_{i=1}^{n} \langle \bsq_{i},\tilde{\bsc}\rangle^2 \leq \left\|\tilde{\bsc}\right\|_2^2 = \frac{1}{4}.
\end{align*}
We finally achieve 
\[\sum_{i=1}^{n}\left(1-\left|\langle \bsq_{i},\tilde{\bsc}\rangle\right|\right)^{\frac{z}{2}} - \sum_{i=1}^{n}\left(1-\left|\langle \bsp_{i},\tilde{\bsc}\rangle\right|\right)^{\frac{z}{2}} \geq \frac{z}{8}\sqrt{n} - \frac{z\left(1-{\frac{z}{2}}\right)}{4}.\]
Since $d > 2n$, we can always find a vector $\bar{\bsc}$ such that $\bar{\bsc} \perp \bsp_{i},\bsq_{i}, i \in [n]$ and $\left\| \bar{\bsc} \right\|_2^2 = 1- \left\| \tilde{\bsc} \right\|_2^2 \geq 0$. 
Let $\bsc = \tilde{\bsc} + \bar{\bsc}$, we should have that $\bsc$ is of unit norm and 
\begin{align*}
\sum_{i=1}^{n}\left(1-\left|\langle \bsq_{i},\bsc\rangle\right|\right)^{\frac{z}{2}} - \sum_{i=1}^{n}\left(1-\left|\langle \bsp_{i},\bsc\rangle\right|\right)^{\frac{z}{2}} &=  \sum_{i=1}^{n}\left(1-\left|\langle \bsq_{i},\tilde{\bsc}\rangle\right|\right)^{\frac{z}{2}} - \sum_{i=1}^{n}\left(1-\left|\langle \bsp_{i},\tilde{\bsc}\rangle\right|\right)^{\frac{z}{2}} \\& \geq \frac{z}{4}\varepsilon n - \frac{z\left(1-{\frac{z}{2}}\right)}{4}. \end{align*}
Therefore, 
\begin{align*}
\left|\sum_{i=1}^{n}\left(2-2\left|\langle \bsp_{i},\bsc\rangle\right|\right)^{\frac{z}{2}} - \sum_{i=1}^{n}\left(2-2\left|\langle \bsq_{i},\bsc\rangle\right|\right)^{\frac{z}{2}} \right| \geq \frac{2^{\frac{z}{2}} z}{8}\sqrt{n} - \frac{2^{\frac{z}{2}} z\left(1-{\frac{z}{2}}\right)}{4}.
\end{align*}
On the other hand, for $z\geq 2$, we have 
\begin{align*}
&\sum_{i=1}^{n}\left(1-\left|\langle \bsq_{i},\tilde{\bsc}\rangle\right|\right)^{\frac{z}{2}} - \sum_{i=1}^{n}\left(1-\left|\langle \bsp_{i},\tilde{\bsc}\rangle\right|\right)^{\frac{z}{2}} 
\\& \geq \sum_{i=1}^{n} \left(1 - {\frac{z}{2}}\left|\langle \bsq_{i},\tilde{\bsc}\rangle\right|\right) - \sum_{i=1}^{n}\left(1-{\frac{z}{2}}\left|\langle \bsp_{i},\tilde{\bsc}\rangle\right| +  {\frac{z}{2}}\left({\frac{z}{2}}-1\right)\langle \bsp_{i},\tilde{\bsc}\rangle^2\right) \\
&= {\frac{z}{2}}\left(\sum_{i=1}^{n} \left|\langle 
\bsp_{i},\tilde{\bsc}\rangle\right| - \sum_{i=1}^{n}\left| \langle \bsq_{i},\tilde{\bsc}\rangle\right|\right) - {\frac{z}{2}}\left({\frac{z}{2}}-1\right) \sum_{i=1}^{n} \langle \bsp_{i},\tilde{\bsc}\rangle^2 \\&\geq \frac{z}{8}\sqrt{n} - \frac{z\left({\frac{z}{2}}-1\right)}{8}.
\end{align*}
Since $d > 2n$, we can always find a vector $\bar{\bsc}$ such that $\bar{\bsc} \perp \bsp_{i},\bsq_{i}, i \in[n]$ and $\left\| \bar{\bsc} \right\|_2^2 = 1- \left\| \tilde{\bsc} \right\|_2^2 \geq 0$. 
Let $\bsc = \tilde{\bsc} + \bar{\bsc}$, we should have that $\bsc$ is of unit norm and 
\begin{align*}
\sum_{i=1}^{n}\left(1-\left|\langle \bsq_{i},\bsc\rangle\right|\right)^{\frac{z}{2}} - \sum_{i=1}^{n}\left(1-\left|\langle \bsp_{i},\bsc\rangle\right|\right)^{\frac{z}{2}} &=  \sum_{i=1}^{n}\left(1-\left|\langle \bsq_{i},\tilde{\bsc}\rangle\right|\right)^{\frac{z}{2}} - \sum_{i=1}^{n}\left(1-\left|\langle \bsp_{i},\tilde{\bsc}\rangle\right|\right)^{\frac{z}{2}}\\& \geq \frac{z}{8}\sqrt{n} - \frac{z\left({\frac{z}{2}}-1\right)}{8}.
\end{align*}
Finally, we have
\begin{align*}
\left|\sum_{i=1}^{n}\left(2-2\left|\langle \bsp_{i},\bsc\rangle\right|\right)^{\frac{z}{2}} - \sum_{i=1}^{n}\left(2-2\left|\langle \bsq_{i},\bsc\rangle\right|\right)^{\frac{z}{2}} \right| \geq \frac{2^{\frac{z}{2}} z}{8}\sqrt{n} - \frac{2^{\frac{z}{2}} z\left({\frac{z}{2}}-1\right)}{8}.
\end{align*}
\end{proof}
With the results of Lemma~\ref{lmm:arbitraty_power}, we just need to perform scaling to finally generalize to arbitrary $z$. 

\begin{lemma}\label{lmm:setsize_to_bit_z_high} When $\Delta = \frac{3072\cdot 2^{\frac{z}{2}}\sqrt{d}}{z^2\cdot \varepsilon} = \Theta\left(\frac{\sqrt{d}}{ \varepsilon}\right)$, $d$ larger than a large enough constant and constant $z$, we have that $ \spc(n,\Delta,2,z,d,\varepsilon)
\geq \Omega\left(d\min\left\{\frac{1}{\varepsilon^2},\frac{d}{\log d},n\right\}\right)$. 
\end{lemma}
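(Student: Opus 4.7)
\vspace{0.1in}
\noindent\textbf{Proof proposal.} The plan is to mimic the proof of the second part of Theorem~\ref{thm:main_lower} for the case $k=z=2$, substituting Lemma~\ref{lmm:arbitraty_power} for the direct cost-difference calculation used there. I first set
\[
\tilde n \;=\; \min\!\Bigl\{\Theta\!\bigl(\tfrac{1}{z^{4}\eps^{2}}\bigr),\ \Theta\!\bigl(\tfrac{d}{\log d}\bigr),\ n\Bigr\},
\]
where the constants are chosen so that $\tilde n\geq 100$ and so that the linear term in Lemma~\ref{lmm:arbitraty_power} dominates its constant second-order correction. Lemma~\ref{lmm:large_size} then yields a family $\calP$ of orthonormal bases of $\tilde n$-dimensional subspaces of $\R^{d}$ with $|\calP|=\exp(\Omega(\tilde n d))$ on which the largest $\Omega(\tilde n)$ principal angles between any two bases exceed $\arccos(10^{-3}/(4\sqrt{2}))$. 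Applied to any $P,Q\in\calP$, Lemma~\ref{lmm:large_cost} delivers a unit vector $\hat c\in\R^{d}$ with $\sum_{i}|\langle p_i,\hat c\rangle|-\sum_{i}|\langle q_i,\hat c\rangle|\ge\tfrac12\sqrt{\tilde n}$.

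Feeding $\hat c$ into Lemma~\ref{lmm:arbitraty_power} promotes this linear gap into a $z$-power cost gap: there is a unit vector $c$ with $|\cost_z(P,\{c,-c\})-\cost_z(Q,\{c,-c\})|\ge c_{z}\sqrt{\tilde n}-c'_{z}$ for explicit positive constants $c_{z},c'_{z}$ depending only on $z$. The choice of $\tilde n$ above ensures this is $\Omega_z(\sqrt{\tilde n})$. Since $\cost_z(P,\{c,-c\})\le 2^{z/2}\tilde n$ and $\sqrt{\tilde n}\ge \Omega_z(\eps\tilde n)$ (the $1/(z^{4}\eps^{2})$ term), the gap is at least $11\eps\cdot 2^{z/2}\tilde n$ for a suitable constant, which forces $\cost_z(P,\{c,-c\})\notin(1\pm 3\eps)\cost_z(Q,\{c,-c\})$.

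It remains to place the datasets on the grid $[\gamma]^{d}$. I follow the $z=2$ construction: for each $P\in\calP$ set $\hat P=\tfrac{\gamma}{2}P+\lceil\gamma/2\rceil\mathbf 1$, then round coordinate-wise to obtain $\tilde P\in[\gamma]^{d}$, with shifted center $\bar c=\tfrac{\gamma}{2}c+\lceil\gamma/2\rceil\mathbf 1$. After scaling, both $\cost_z$ and the cost gap pick up a factor $(\gamma/2)^{z}$, while $\|\hat p_i-\tilde p_i\|_{2}\le\sqrt d$. Unlike the $z=2$ proof, which expands $|a^{2}-b^{2}|=|a-b|(a+b)$ directly, for general $z$ I would use the relaxed triangle inequality (Lemma~\ref{lmm:triangle}) with parameter $\eps'=\Theta(\eps)$:
\[
\bigl|d(\hat p_i,\bar c)^{z}-d(\tilde p_i,\bar c)^{z}\bigr|\;\le\;\eps'\,d(\hat p_i,\bar c)^{z}+\Bigl(\tfrac{z+\eps'}{\eps'}\Bigr)^{z-1}d(\hat p_i,\tilde p_i)^{z}.
\]
Using $d(\hat p_i,\bar c)\le\gamma$ and $d(\hat p_i,\tilde p_i)\le\sqrt d$ and summing over $\tilde n$ points bounds the total cost perturbation by $\eps'\gamma^{z}\tilde n+C_z\,\eps'^{-(z-1)}d^{z/2}\tilde n$. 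The specific value $\gamma=\Theta(2^{z/2}\sqrt d/(z^{2}\eps))$ in the statement is calibrated so that $d^{z/2}\ll\eps^{z}\gamma^{z}$, making the second term negligible compared to the first. The same analysis with $-\bar c$ handles the antipodal center. Hence the rounded gap is still $\Omega(\eps\gamma^{z}\tilde n)$ while the rounded cost is $O(\gamma^{z}\tilde n)$, so $\cost_z(\tilde P,\{\bar c,-\bar c\})\notin(1\pm 3\eps)\cost_z(\tilde Q,\{\bar c,-\bar c\})$. Applying Lemma~\ref{lmm:setsize_to_bit} to $\{\tilde P:P\in\calP\}$ gives $sc(n,\gamma,2,z,d,\eps)\ge\Omega(\log|\calP|)=\Omega(\tilde n d)$, as required.

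The main obstacle is the third paragraph: carefully tracking the $z$-dependent constants so that the net cost gap survives both the Taylor approximation in Lemma~\ref{lmm:arbitraty_power} (which costs a factor $\propto z\cdot 2^{z/2}$ and a constant additive term, requiring $\tilde n\gtrsim z^{2}$) and the rounding error $\propto d^{z/2}$ (which drives the $2^{z/2}/z^{2}$ factor in $\gamma$). For constant $z$ all these constants are absorbed into the $\Omega(\cdot)$, and for large $d$ the logarithmic slack in the definition of $\tilde n$ absorbs any lower-order terms.
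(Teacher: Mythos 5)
Your proposal is correct and follows essentially the same route as the paper: the same choice of $\tilde n=\min\{\Theta_z(1/\eps^2),\Theta(d/\log d),n\}$, the same pipeline Lemma~\ref{lmm:large_size} $\to$ Lemma~\ref{lmm:large_cost} $\to$ Lemma~\ref{lmm:arbitraty_power}, and the same shift--scale--round embedding into $[\gamma]^d$ followed by Lemma~\ref{lmm:setsize_to_bit}. The only deviation is that you bound the rounding perturbation via the relaxed triangle inequality (Lemma~\ref{lmm:triangle}) with $\eps'=\Theta(\eps)$ rather than the paper's direct estimate $|\,\|\hat p_i-\bar c\|_2^z-\|\tilde p_i-\bar c\|_2^z\,|\le \frac{\gamma^z}{2^z}\cdot 2^z\cdot\frac{2\sqrt d}{\gamma}$; both give an error of $O_z(\eps\gamma^z\tilde n)$ with a constant that the large prefactor in $\gamma$ renders harmless, so this is an equivalent sub-step, not a different argument.
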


\begin{proof}
Let $\varepsilon = \frac{z}{96}\tilde{\varepsilon}$. Denote 
\[\tilde{n} = \min\left\{\Theta\left(\frac{1}{\tilde{\varepsilon}^2}\right),\Theta\left(\frac{d}{\log d}\right),n\right\} = \Omega\left(\min\left\{\frac{1}{\varepsilon^2},\frac{d}{\log d},n\right\}\right) \geq 100.\]
With the proof of second part of Theorem~\ref{thm:main_lower} and Lemma \ref{lmm:arbitraty_power}, for constant $z$, we are able to find a set $\mathcal{P}$ with all points being orthonormal bases and size being 
\[\exp\left(\Theta\left(\tilde{n}d\right)\right)=\exp\left(\Theta\left(d\min\left\{\frac{1}{\tilde{\varepsilon}^2},\frac{d}{\log d},n\right\}\right)\right) = \exp\left(\Theta\left(d\min\left\{\frac{1}{\varepsilon^2},\frac{d}{\log d},n\right\}\right)\right),\]
such that for any two dataset in this set $\bsP$ and $\bsQ$, we would be able to find a unit-norm vector $\bsc$ satisfying 
\[\left|\sum_{i=1}^{n}\left(2-2\left|\langle \bsp_{i},\bsc\rangle\right|\right)^z - \sum_{i=1}^{n}\left(2-2\left|\langle \bsq_{i},\bsc\rangle\right|\right)^z \right| \geq \left\{\begin{matrix}
 \frac{2^{\frac{z}{2}} {z}}{8}\sqrt{\tilde{n}} - \frac{2^{\frac{z}{2}} {z}\left(1-{\frac{z}{2}}\right)}{4} ,0<z\leq 2\\
\frac{2^{\frac{z}{2}} {z}}{8}\sqrt{\tilde{n}} - \frac{2^{\frac{z}{2}} {z}\left({\frac{z}{2}}-1\right)}{8} ,z\geq 2
\end{matrix}\right.. \]
We now show how to round and scale every dataset $\bsP\in \calP$ to $[\Delta]^{d}$, where $\Delta = \Theta\left(\frac{\sqrt{d}}{ \varepsilon}\right)$.
Without loss of generality, we may assume that $\Delta$ is an odd integer.
For a dataset $\bsP = \left(\bsp_{1},\cdots,\bsp_{\tilde{n}}\right)\in \calP$, we will construct $\tilde{\mathcal{P}}$ to be our final family as follows: 
For each of dataset $\bsP\in\mathcal{P}$, we shift the origin to $\left(\lceil\frac{\Delta}{2}\rceil,\cdots, \lceil\frac{\Delta}{2}\rceil\right)$, scale it by a factor of $\frac{\Delta}{2}$ and finally perform an upward rounding on each dimension to put every point on the grid: 
\[\tilde{\bsP} = \left(\lceil\frac{\Delta}{2}\bsp_{1}\rceil + \lceil\frac{\Delta}{2}\rceil,\cdots,\lceil\frac{\Delta}{2}\bsp_{\tilde{n}}\rceil+ \lceil\frac{\Delta}{2}\rceil\right)= \left(\tilde{\bsp}_{1},\cdots,\tilde{\bsp}_{\tilde{n}}\right).\]
We will then show that this set fulfills the requirement of Lemma~\ref{lmm:setsize_to_bit}. For ease of explanation, we also define $\hat{P}$ to be the dataset without rounding. 
\begin{align*} \hat{\bsP} = \left(\frac{\Delta}{2}\bsp_{1} + \lceil\frac{\Delta}{2}\rceil,\cdots,\frac{\Delta}{2}\bsp_{\tilde{n}}+ \lceil\frac{\Delta}{2}\rceil\right) = \left(\hat{\bsp}_{1},\cdots,\hat{\bsp}_{\tilde{n}}\right).
\end{align*}
Moreover, let $\bar{\bsc} = \frac{\Delta}{2}\bsc+ \left(\lceil\frac{\Delta}{2}\rceil,\cdots,\lceil\frac{\Delta}{2}\rceil\right)$. 
We must have that for the scaling dataset, 
\begin{align*}
&\operatorname{cost}_z\left(\hat{\bsP}, \left\{\bar{\bsc},-\bar{\bsc}\right\}\right) = \frac{\Delta^z}{2^{z}} \operatorname{cost}_z\left(\bsP, \{\bsc,-\bsc\}\right) \leq \frac{\Delta^z \tilde{n}}{2^{\frac{z}{2}}},
\\& \operatorname{cost}_z\left(\hat{\bsP}, \left\{\bar{\bsc},-\bar{\bsc}\right\}\right) -\operatorname{cost}_z\left(\hat{\bsQ}, \left\{\bar{\bsc},-\bar{\bsc}\right\}\right)\\&=\frac{\Delta^z}{2^{z}}\left(\operatorname{cost}_z(\bsP, \{\bsc,-\bsc\}) -\operatorname{cost}_z(\bsQ, \{\bsc,-\bsc\})\right) \geq  \left\{\begin{matrix}
 \frac{ {z}\Delta^z}{8\cdot 2^{\frac{z}{2}}}\sqrt{\tilde{n}} - \frac{ {z}\left(1-{\frac{z}{2}}\right)\Delta^z}{4\cdot 2^{\frac{z}{2}}} ,0<z\leq 2\\
\frac{ {z}\Delta^z}{8\cdot 2^{\frac{z}{2}}}\sqrt{\tilde{n}}- \frac{ {z}\left({\frac{z}{2}}-1\right)\Delta^z}{8\cdot 2^{\frac{z}{2}}} ,z\geq 2
\end{matrix}\right..
\end{align*}
On the other hand, for the rounding dataset, by our choice of $\Delta$, we have 
\begin{align*}
\left|\left\|\hat{\bsp}_{i}-\bar{\bsc}\right\|_2^z - \left\|\tilde{\bsp}_{i}-\bar{\bsc}\right\|_2^z \right|& \leq \frac{\Delta^z}{2^z}2^z \frac{2\sqrt{d}}{\Delta}\leq \frac{\tilde{\varepsilon} \Delta^z}{64\cdot 2^{\frac{z}{2}}} .  
\end{align*}
The case for $-\bar{\bsc}$ and other datasets is similar. Therefore, we have that for any dataset
\begin{align*}
& \left|\operatorname{cost}_2(\hat{\bsP}, \{\bar{\bsc},-\bar{\bsc}\}) - \operatorname{cost}_2(\tilde{\bsP}, \{\bar{\bsc},-\bar{\bsc}\})\right| \leq \frac{\tilde{\varepsilon} \Delta^z \tilde{n}}{64\cdot 2^{\frac{z}{2}}},
\\& \operatorname{cost}_2\left(\tilde{\bsP}, \left\{\bar{\bsc},-\bar{\bsc}\right\}\right) \leq \operatorname{cost}_2\left(\hat{\bsP}, \left\{\bar{\bsc},-\bar{\bsc}\right\}\right) + \frac{\tilde{\varepsilon} \Delta^z n}{64\cdot 2^{\frac{z}{2}}} \leq \frac{\Delta^z \tilde{n}}{2^{\frac{z}{2}}} + \frac{\tilde{\varepsilon} \Delta^z \tilde{n}}{64\cdot 2^{\frac{z}{2}}} \leq \frac{\Delta^z \cdot 2\tilde{n}}{2^{\frac{z}{2}}}.
\end{align*}
We can then have a rounded family $\tilde{\mathcal{P}}$ such that all points are on the grid and we would be able to find $\bar{\bsc}$, 
\begin{align*}
\operatorname{cost}_2\left(\tilde{\bsP}, \left\{\bar{\bsc},-\bar{\bsc}\right\}\right) -\operatorname{cost}_2\left(\tilde{\bsQ}, \left\{\bar{\bsc},-\bar{\bsc}\right\}\right) & \geq \operatorname{cost}_2\left(\hat{\bsP}, \left\{\bar{\bsc},-\bar{\bsc}\right\}\right) -\operatorname{cost}_2\left(\hat{\bsQ}, \left\{\bar{\bsc},-\bar{\bsc}\right\}\right) - \frac{\tilde{\varepsilon} \Delta^z \tilde{n}}{32\cdot 2^{\frac{z}{2}}} \\&\geq \left\{\begin{matrix}
 \frac{ {z}\Delta^z}{8\cdot 2^{\frac{z}{2}}}\sqrt{\tilde{n}} - \frac{ {z}\left(1-{\frac{z}{2}}\right)\Delta^z}{4\cdot 2^{\frac{z}{2}}}- \frac{\tilde{\varepsilon} \Delta^z \tilde{n}}{32\cdot 2^{\frac{z}{2}}} ,0<z\leq 2\\
\frac{ {z}\Delta^z}{8\cdot 2^{\frac{z}{2}}}\sqrt{\tilde{n}} - \frac{ {z}\left({\frac{z}{2}}-1\right)\Delta^z}{8\cdot 2^{\frac{z}{2}}}- \frac{\tilde{\varepsilon} \Delta^z \tilde{n}}{32\cdot 2^{\frac{z}{2}}} ,z\geq 2
\end{matrix}\right.
\end{align*}
Note that for constant $z$ and our choice of $\tilde{n}$ that $\sqrt{\tilde{n}} \geq \tilde{\varepsilon} \tilde{n}$ the right side is larger than $\frac{ {z}\Delta^z}{16\cdot 2^{\frac{z}{2}}}\tilde{\varepsilon} \tilde{n}$. 
On the other side, the cost function is upper bounded by $\frac{\Delta^z \cdot 2\tilde{n}}{2^{\frac{z}{2}}}$.
Therefore, as long as we make $ \varepsilon$-approximation, we must have 
\[\operatorname{cost}_{z}\left(\bsP,\{\bsc,-\bsc\}\right) \notin (1\pm 3\eps)\operatorname{cost}_{z}\left(\bsQ,\{\bsc,-\bsc\}.\right)\] 
Moreover, we can find an origin being $\left(\lceil\frac{\Delta}{2}\rceil,\cdots, \lceil\frac{\Delta}{2}\rceil\right)$ such that all the center points and data points have distance to it less than $\frac{\Delta}{2}+\sqrt{d} \leq \Delta$.  
By Lemma~\ref{lmm:setsize_to_bit}, we have
\[\spc(n,\Delta,2,z,d,\varepsilon)\geq \Omega(\log \left|\mathcal{P}\right|)\geq \Omega\left(d\min\left\{\frac{1}{\varepsilon^2},\frac{d}{\log d},n\right\}\right).\]
\end{proof}


\subsection{Extension to General $k\geq 2$}
\label{sec:generalize_k}

Without loss of generality, we may assume that $k$ is even. 
(If $k$ is odd, we can use $k-1$ centers and put the rest center in a place not to affect the value of the function but still have a similar asymptotic lower bound on the size.) 
{In our previous proof, we have identified a sufficiently large family for $k = 2$. 
When $k > 2$, our approach is to generate $\frac{k}{2}$ copies of our previous family and ensure that the distance between each copy is sufficiently large so that they do not interfere with each other. 
This way, as long as the two datasets we ultimately select have significant differences in a sufficient number of copies, their overall difference will also be substantial. 
Let $\tilde{\Delta} = \Omega\left(\frac{\sqrt{d}}{\varepsilon}\right)$ be the large enough discretization parameter for $k=2$. 
We have $\Delta \geq 4\lceil k^{\frac{1}{d}}\rceil \tilde{\Delta} = \Omega\left(\frac{k^{\frac{1}{d}}\sqrt{d}}{\varepsilon}\right)$. }

By proof of second part of Theorem~\ref{thm:main_lower} and Lemma~\ref{lmm:setsize_to_bit_z_high}, we can find a set $\mathcal{P}$ with size being $\exp\left(\spc\left(\frac{2n}{k}, \tilde{\Delta}, 2,z,d, \varepsilon\right)\right)$ with property that 
\begin{align*}
& \operatorname{cost}_z(\bsP, \{\bsc,-\bsc\}),\operatorname{cost}_z(\bsQ, \{\bsc,-\bsc\}) \leq  \frac{\tilde{\Delta}^z \cdot 2\cdot\frac{2n}{k}}{2^{\frac{z}{2}}},\\
& \operatorname{cost}_z(\bsP,\{\bsc,-\bsc\}) -\operatorname{cost}_z(\bsQ, \{\bsc,-\bsc\}) \geq  \frac{3\varepsilon\tilde{\Delta}^z \cdot 2\cdot \frac{2n}{k}}{2^{\frac{z}{2}}}.
\end{align*}
Moreover, we can find an origin such that all the center points and data points have the distance to it less than $\frac{\tilde{\Delta}}{2}+\sqrt{d} \leq \tilde{\Delta} =\Theta\left(\frac{\sqrt{d}}{\varepsilon}\right)$.  
The full instance is then made of $\frac{k}{2}$ distinct copies of the $k = 2$ instance, denoted as $\mathcal{P}^{\frac{k}{2}}$. 

{
We first prove that our entire space can accommodate those copies.}
Note that each instance can be wrapped by a hypercube with side length $4\tilde{\Delta}$ by putting its origin at the center of the hypercube. 
In our model, The whole space is a hypercube with side length $\Delta$. 
We can then put at most 
\[\left(\frac{\Delta}{4\tilde{\Delta}}\right)^d \geq \left(\lceil k^{\frac{1}{d}}\rceil\right)^d \geq k.\]
in the large hypercube, which fulfills our requirements. 

Moreover, the distance between the origins of any two different instances is at least $4\tilde{\Delta}$, which means that the points in the two instances will have a distance of at least $2\tilde{\Delta}$ and will not interfere with the assignment of clustering. 

We then force that any two datasets $\bsP, \bsQ \in \mathcal{P}^{\frac{k}{2}}$ is different on at least $\frac{k}{4}$ copies. 
It may be easier for us to think of each dataset as a "vector" of dimension $\frac{k}{2}$, where each entry $i$ denotes the choice of the dataset on the $i$-th copy, and thus there are $\exp\left(\spc\left(\frac{2n}{k}, \tilde{\Delta}, 2,z,d, \varepsilon\right)\right)$ choices. 
Our additional requirement is equal to the condition that $\left\| \bsP - \bsQ \right\|_0 \geq \frac{k}{4}. $
We define $\bsP_i, \bsQ_i$ to be te dataset chosen in the $i$-th copy and we place two centers $\{\bsc_{i},-\bsc_{i}\}$.
The total cost of dataset $\bsP$ is thus 
\[\operatorname{cost}_z(\bsP,\{\bsc_{i},-\bsc_{i}\}_{i\in \left[\frac{k}{2}\right]}) = \sum_{i=1}^{\frac{k}{2}} \operatorname{cost}_z(\bsP_i, \{\bsc_{i},-\bsc_{i}\})\leq \frac{1}{2}k \cdot \frac{\tilde{\Delta}^z \cdot 2\cdot\frac{2n}{k}}{2^{\frac{z}{2}}}.\]
On the other hand, when it comes to the cost difference of two datasets, $\bsP$ and $\bsQ$ are different on at least $\frac{k}{4}$ copies. 
Moreover, by the property of our chosen dataset, we can find $\{\bsc_{i},-\bsc_i\}$ such that the cost on $i$-th copy is at least $3\varepsilon \cdot \frac{\tilde{\Delta}^z \cdot 2\cdot\frac{2n}{k}}{2^{\frac{z}{2}}}$.
We thus have the total cost difference be 
\begin{align*}
\operatorname{cost}_z(\bsP, \{\bsc_{i},-\bsc_{i}\}_{i\in \left[\frac{k}{2}\right]}) -\operatorname{cost}_z(\bsQ, \{\bsc_{i},-\bsc_{i}\}_{i\in \left[\frac{k}{2}\right]}) & = \sum_{i=1}^{\frac{k}{2}} \left[\operatorname{cost}_z(\bsP_i, \{\bsc_i,-\bsc_i\}) - \operatorname{cost}_z(\bsQ, \{\bsc_i,-\bsc_i\}\right] \\&\geq \left\| \bsP - \bsQ \right\|_0 \cdot 3\varepsilon \cdot \frac{\tilde{\Delta}^z \cdot 2\cdot\frac{2n}{k}}{2^{\frac{z}{2}}} \geq \frac{3}{4}\varepsilon k\cdot \frac{\tilde{\Delta}^z \cdot 2\cdot\frac{2n}{k}}{2^{\frac{z}{2}}}.
\end{align*}
This fulfills the requirement of family $\calP$ in  Lemma~\ref{lmm:setsize_to_bit}.

We then consider to compute the number of the dataset. 
Note that without the additional requirement, the number of the dataset is 
\[\left(\exp\left(\spc\left(\frac{2n}{k}, \tilde{\Delta}, 2,z,d, \varepsilon\right)\right)\right)^{\frac{k}{2}} = \exp\left(\frac{k}{2}\spc\left(\frac{2n}{k}, \tilde{\Delta}, 2,z,d, \varepsilon\right)\right).\]
On the other hand, we denote the neighbors of a dataset as those who have differences on less than $\frac{k}{4}$ copies.
For a dataset, the number of neighbors is less than 
\[\sum_{i=1}^{\frac{k}{4}} \begin{pmatrix}
 \frac{k}{2} \\
i
\end{pmatrix} \left(\exp\left(\spc\left(\frac{2n}{k}, \tilde{\Delta}, 2,z,d, \varepsilon\right)\right)\right)^{i}=\sum_{i=1}^{\frac{k}{4}} \begin{pmatrix}
 \frac{k}{2} \\
i
\end{pmatrix} \exp\left(i\cdot \spc\left(\frac{2n}{k}, \tilde{\Delta}, 2,z,d, \varepsilon\right)\right),\]
which means that we first choose $i$ copies to be the different ones (the rest $\frac{k}{2} - i$ copies are then fixed to be the same as the original dataset), and then choose the value on these positions arbitrarily. We can upper bound this value with the Stirling inequation that 
\begin{align*}
\exp(1)\left(\frac{n^n}{\exp(n)}\right) &\leq \sqrt{2 \pi n}\left(\frac{n^n}{\exp(n)}\right) \exp\left(\frac{1}{12n+1}\right) \leq n ! \\&\leq \sqrt{2 \pi n}\left(\frac{n^n}{\exp(n)}\right) \exp\left(\frac{1}{12n}\right) \leq \exp(1) n\left(\frac{n^n}{\exp(n)}\right),
\end{align*}
which gives 
\begin{align*}
& \sum_{i=1}^{\frac{k}{4}} \begin{pmatrix}
 \frac{k}{2} \\
i
\end{pmatrix} \exp\left(i\cdot  \spc\left(\frac{2n}{k}, \tilde{\Delta}, 2,z,d, \varepsilon\right)\right)  \leq \frac{k}{4} \begin{pmatrix}
 \frac{k}{2} \\
\frac{k}{4}
\end{pmatrix} \exp\left(\frac{k}{4}\spc\left(\frac{2n}{k}, \tilde{\Delta}, 2,z,d, \varepsilon\right)\right) \\&=\frac{k}{4}  \frac{\left(\frac{k}{2}\right)!}{\left(\left(\frac{k}{4}\right)!\right)^2}\exp\left(\frac{k}{4}\spc\left(\frac{2n}{k}, \tilde{\Delta}, 2,z,d, \varepsilon\right)\right) 
\leq \frac{k}{4} \frac{e \frac{k}{2}\left(\frac{k}{2e}\right)^{\frac{k}{2}}}{e^2 \left(\frac{k}{4e}\right)^{\frac{k}{2}}}\exp\left(\frac{k}{4}\spc\left(\frac{2n}{k}, \tilde{\Delta}, 2,z,d, \varepsilon\right)\right) \\& = 
\exp\left(\frac{k}{4}\spc\left(\frac{2n}{k}, \tilde{\Delta}, 2,z,d, \varepsilon\right)+\frac{k \ln 2}{2} + \ln \left(\frac{k^2}{8e}\right)\right).
\end{align*}
The size of the final dataset is at least the total number divided by the number of neighbors and thus greater than 
\begin{align*}
&\frac{\exp\left(\frac{k}{2}\spc\left(\frac{2n}{k}, \tilde{\Delta}, 2,z,d, \varepsilon\right)\right)}{\exp\left(\frac{k}{4}\spc\left(\frac{2n}{k}, \tilde{\Delta}, 2,z,d, \varepsilon\right)+\frac{k \ln 2}{2} + \ln \left(\frac{k^2}{8e}\right)\right)} \\&= \exp\left(\frac{k}{4}\spc\left(\frac{2n}{k}, \tilde{\Delta}, 2,z,d, \varepsilon\right)-\frac{k \ln 2}{2} -\ln \left(\frac{k^2}{8e}\right)\right) \geq \exp\left(\frac{k}{8}\spc\left(\frac{2n}{k}, \tilde{\Delta}, 2,z,d, \varepsilon\right)\right),
\end{align*}
where the last equation holds for $\spc\left(\frac{2n}{k}, \tilde{\Delta}, 2,z,d, \varepsilon\right) \geq 4\ln 2 + \frac{8}{k}\ln \left(\frac{k^2}{8e}\right)$ larger than a large enough constant.

Therefore, we can find a family $\mathcal{P}^{\frac{k}{2}}$ with size being at least $\exp\left(\frac{k}{8}\spc\left(\frac{2n}{k}, \tilde{\Delta}, 2,z,d, \varepsilon\right)\right)$ satisfying the requirement of Lemma~\ref{lmm:setsize_to_bit}. We thus have,
\begin{align*}
\spc\left(n, \Delta, k,z,d, \frac{1}{2}\varepsilon\right) &=\Omega\left(\log \left|\mathcal{P}^{\frac{k}{2}}\right|\right)= \Omega\left(k\cdot \spc\left(\frac{2n}{k}, \tilde{\Delta}, 2,z,d, \varepsilon\right)\right) \\&= \Omega\left(kd\min\left\{\frac{1}{\varepsilon^2},\frac{d}{\log d},\frac{n}{k}\right\}\right). 
\end{align*}

\section{Application to Space Lower Bound for Terminal Embedding}
\label{sec:application}

Recall that the definition of Terminal Embedding is given as

\begin{definition}[Restatement of Definition~\ref{def:embedding}]
Let $\eps\in (0,1)$ and $\bsP$ be a dataset of $n$ points. 
	A mapping $\tau: \R^d\rightarrow \R^m$ is called an $\eps$-terminal embedding of $\bsP$ 
 if for any $\bsp\in \bsP$ and $\bsq\in \R^d$,
	$
	\dist(\bsp,\bsq)\leq \dist(\tau(\bsp),\tau(\bsq))\leq (1+\eps)\cdot \dist(\bsp,\bsq).
	$
\end{definition}
In the case that $\mathcal{X}$ and $\mathcal{Y}$ are both Euclidean metrics with $\mathcal{Y}$ being lower-dimensional, work of\ \cite{narayanan2019optimal} prove that the dimension of latter space can be as small as $O\left(\eps^{-2} \log n\right)$, which is optimal proven in \cite{larsen2017optimality}. 

In the following part, we will show that our lower bound on the minimum number of bits of computing the cost function actually sheds light on the number of bits of terminal embedding. 

\begin{definition}[Space complexity for terminal embedding] \label{def:space_terminal} Let $\bsP\subset \mathbb{R}^d$ be a dataset, $n\geq 1$ and $\eps>0$ be an error parameter.
We define $\embed(\bsP, d, \eps)$ to be the minimum possible number of bits of a $\eps$-terminal embedding from $\bsP$ into $\mathbb{R}^m$ with $m=O\left(\eps^{-2} \log n\right)$. 
Moreover, we define $\embed(n, d, \eps):= \sup_{\bsP\subset\mathbb{R}^d: |\bsP| = n} \embed(\bsP, d, \eps)$ to be the space complexity function, i.e., the maximum cardinality $\embed(\bsP, d, \eps)$ over all possible datasets $\bsP\subset\mathbb{R}^d$ of size $n$.
\end{definition}

{
Given the definition, we have the space complexity lower bound for terminal embedding. Our proof idea is to use terminal embedding to construct an algorithm for computing $\kzC$. Since the space complexity of $\kzC$ is very large, the terminal embedding will also have a significant space complexity.}

\begin{theorem}[Space lower bound for terminal embedding] 
\label{thm:embedding} Let $\eps\in (0,1)$ and assume $d = \Omega\left(\frac{\log n \log(n/\eps)}{\eps^2 }\right)$. 
The space complexity of terminal embedding $\embed(n,d,\eps) = \Omega(nd)$.
\end{theorem}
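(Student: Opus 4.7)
The plan is to reduce the question to the clustering space lower bound of Theorem~\ref{thm:main_lower}. First I would observe that a terminal embedding of $P$, together with a quantized description of the images $f(P)$, yields an $O(\eps)$-sketch for \kzC\ on $P$: store $f$ using $g(n,d,\eps)$ bits and additionally store a snapped-to-grid copy $\widehat{f(P)}\subseteq\R^m$ of the images; on query $C\in\calC$, apply $f$ to each $c\in C$ and output $\sum_{p\in P}d(\widehat{f(p)},f(c))^z$. The terminal embedding property guarantees $d(p,c)\leq d(f(p),f(c))\leq(1+\eps)d(p,c)$ for every $p\in P$ and $c\in\R^d$, and the quantization can be chosen fine enough to contribute only a further $(1+\eps)$ factor on each image distance (via Lemma~\ref{lmm:triangle} applied to the triple $(f(p),\widehat{f(p)},f(c))$), so for constant $z$ the sum is a $(1+O(\eps))$-approximation of $\cost_z(P,C)$.

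Next I would estimate the overhead. Using $m=O(\log n/\eps^2)$ and $O(\log(d\Delta/\eps))$ bits per coordinate, $\widehat{f(P)}$ consumes $O\!\bigl(\tfrac{n\log n}{\eps^2}\log(d\Delta/\eps)\bigr)$ bits. I then apply Theorem~\ref{thm:main_lower} to a hard instance $P\subseteq[\Delta]^d$ with $k=\Theta(n\eps^2)\geq 2$ and $\Delta=\Theta(\sqrt d/\eps)$, which satisfies $\Delta=\Omega(k^{1/d}\sqrt d/\eps)$ once $d=\Omega(\log n)$. Under the hypothesis $d=\Omega(\log n\log(n/\eps)/\eps^2)$ we also have $1/\eps^2\leq d/\log d$, so the minimum in $\Omega(kd\min\{1/\eps^2,d/\log d,n/k\})$ evaluates to $n/k=\Theta(1/\eps^2)$, and the clustering lower bound simplifies to $sc(n,\Delta,k,z,d,\eps)=\Omega(nd)$. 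Combining with the sketch reduction,
\[
g(n,d,\eps)\;\geq\;\Omega(nd)\;-\;O\!\left(\tfrac{n\log n\log(n/\eps)}{\eps^2}\right),
\]
and the hypothesis on $d$ absorbs the subtracted term, yielding $g(n,d,\eps)=\Omega(nd)$.

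The hard part is parameter calibration: one must simultaneously place the instance in the regime where Theorem~\ref{thm:main_lower} outputs $\Omega(nd)$ \emph{and} keep the image-storage overhead strictly below a constant fraction of $nd$. The hypothesis $d=\Omega(\log n\log(n/\eps)/\eps^2)$ is exactly the threshold at which both constraints are tight; weakening $d$ below this level would either reduce the clustering lower bound to $\Omega(d^2/\log d)=o(nd)$ or allow the image overhead to dominate the main term. A secondary but genuine technical point is verifying that the snap-to-grid quantization of $\widehat{f(P)}$ preserves distances uniformly over all centers $c\in\R^d$ (not merely a fixed finite set), which I would handle through Lemma~\ref{lmm:triangle} together with a bound on $\|f(p)-\widehat{f(p)}\|$ relative to the optimal clustering cost, in the spirit of the proof of Theorem~\ref{thm:main_upper}.
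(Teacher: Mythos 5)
Your proposal is correct and follows essentially the same route as the paper: build an $O(\eps)$-sketch for the hard clustering instances of Theorem~\ref{thm:main_lower} by storing the terminal embedding $f$ plus a quantized copy of the images in dimension $O(\log n/\eps^2)$, choose $k=\Theta(n\eps^2)$ so that the clustering lower bound evaluates to $\Omega(nd)$, and use the hypothesis on $d$ to absorb the $O\!\left(\tfrac{n\log n\log(n/\eps)}{\eps^2}\right)$ overhead of the stored images. The paper handles the uniform-over-all-centers quantization issue exactly as you anticipate, by bounding the additive rounding error per point against the $\Omega(\Delta^2 n)$ cost that the hard instances guarantee.
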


\begin{proof}

We will show that an $\eps$-sketch can be constructed for the case of our lower bound by using terminal embedding. 
We would consider the case where $z=2$ and $k$ is large enough such that $\frac{n}{k} \leq \min\left\{\frac{1}{\varepsilon^2}, \frac{d}{\log d}\right\}, k =O\left(n\right)$. 
Note that $\Delta = \Theta\left(\frac{k^{\frac{1}{d}}\sqrt{d}}{\varepsilon}\right) = O\left(\frac{n^{\frac{1}{d}}\sqrt{d}}{\varepsilon}\right)$.
Moreover, in our proof of lower bound, since we are only considering the scaled orthonormal bases, we would have that for any considered datasets $\bsP$ and center $\{\bsc_{j}\}_{j \in [2]}$. 
\[\operatorname{cost}_{2}\left(\bsP,\{\bsc_{j}\}_{j \in [2]}\right) \geq \frac{\Delta^2}{4} \left(2n-\sqrt{n}\right) \geq \Omega\left(\Delta^2 n\right). \]

To construct the $\eps$-sketch, we first use the terminal embedding to lower the dimension to $\frac{\log n}{\eps^2}$. 
For a dataset $\bsP = \left(\bsp_{1},\cdots,\bsp_{n}\right)$, Let $\hat{\bsP} = \left(\tau\left(\bsp_1\right),\cdots,\tau\left(\bsp_n\right)\right)$ be the embedded dataset. 
By the property of terminal embedding, we have for any center point $\bsc \in \mathbb{R}^d,i\in [n]$, 
\begin{align*}
 \dist(\bsp_i,\bsc)\leq \dist(\tau(\bsp_i),\tau(\bsc))&\leq (1+\eps)\cdot \dist(\bsp_i,\bsc), \dist(\bsp_i,\bsc)^2\leq \dist(\tau(\bsp_i),\tau(\bsc))^2\\&\leq (1+3\eps)\cdot \dist(\bsp_i,\bsc)^2.
 \end{align*}
 Therefore, for the cost function we have 
 \begin{align*}
\operatorname{cost}_{2}\left(\bsP,\{\bsc_{j}\}_{j \in [2]}\right) \leq \operatorname{cost}_{2}\left(\hat{\bsP},\{\tau(\bsc_{j})\}_{j \in [2]}\right) \leq (1+3\eps)\operatorname{cost}_{2}\left(\bsP,\{\bsc_{j}\}_{j \in [2]}\right).
\end{align*}
We then round the dataset to the grid points 
$\left[\Delta\right]^{\frac{\log n}{\varepsilon^2}}$. Let the rounded dataset be $$\tilde{\bsP} = \left(\lceil \tau\left(\bsp_1\right)\rceil,\cdots,\lceil \tau\left(\bsp_n\right)\rceil\right).$$ 
We would have that 
\begin{align*}
\left|\left\|\hat{\bsp}_{i}-\tau(\bsc)\right\|_2^2 - \left\|\tilde{\bsp}_{i}-\tau(\bsc)\right\|_2^2 \right|& \leq 2\left\|\hat{\bsp}_{i}-\tilde{\bsp}_{i}\right\|_2\left\|\tilde{\bsp}_{i}-\tau(\bsc)\right\|_2 + \left\|\hat{\bsp}_{i}-\tilde{\bsp}_{i}\right\|_2^2 \\& \leq \Delta\sqrt{\frac{\log n}{\eps^2}} + \frac{\log n}{\eps^2} \leq O\left(\Delta^2 \varepsilon\right), 
\end{align*}
where the last inequation holds due to our choice of $\Delta$. 
Therefore, we still have that 
\begin{align*}
&\left|\operatorname{cost}_{2}\left(\hat{\bsP},\{\tau(\bsc_{j})\}_{j\in [2]}\right)-\operatorname{cost}_{2}\left(\tilde{\bsP},\{\tau(\bsc_{j})\}_{j \in [2]}\right)\right|\leq O\left(\Delta^2 \varepsilon n \right) \leq \varepsilon \operatorname{cost}_{2}\left(\bsP,\{\bsc_{j}\}_{j \in [2]}\right),
\\& (1-\eps)\operatorname{cost}_{2}\left(\bsP,\{\bsc_{j}\}_{j \in [2]}\right) \leq \operatorname{cost}_{2}\left(\tilde{\bsP},\{\tau(\bsc_{j})\}_{j \in [2]}\right) \leq (1+4\eps)\operatorname{cost}_{2}\left(\bsP,\{\bsc_{j}\}_{j \in [2]}\right).
\end{align*}
We then only need to construct an $\eps$-sketch for $\operatorname{cost}_{2}\left(\tilde{\bsP},\{\tau(\bsc_{j})\}_{j \in [2]}\right)$ in the lower dimension of $ \frac{\log n}{\eps^2}$. 
With the result of Corollary \ref{thm:main_upper}, the sketch only needs to use 
\begin{align*}
\spc\left(n, \Delta, k, 2, \frac{\log n}{\eps^2}, \eps\right) \leq &
\frac{k\log n\log\Delta}{\eps^2} + \Psi(n)\left(\frac{\log n\left(\log \left(\log \Delta/\varepsilon\right)\right)}{\eps^2} + \log\log n \right)
\\\leq&  2n \frac{\log n\log \left( n/\varepsilon  \right)}{\varepsilon^2},
\end{align*}
where we bring in $\Delta=\Theta\left(\frac{k^{\frac{1}{d}}\sqrt{d}}{\eps}\right), k,\Psi(n) \leq n$. 
Combining together, our sketch exploits bits of only 
\[\spc\left(n, \Delta, k, 2, d, \eps\right) \leq \embed(n, d, \eps)+2n \frac{\log n\log \left( n/\varepsilon  \right)}{\varepsilon^2}.\]
On the other hand, with the result of the second part of Theorem~\ref{thm:main_lower}, we have that for $d = {\Omega}\left(\frac{\log n\log \left( n/\varepsilon  \right)}{\varepsilon^2}\right)$ and our choice of $n$,
\begin{align*}
\spc\left(n, \Delta, k, 2, d, \eps\right)&\geq \Omega\left(kd \min\left\{\frac{1}{\varepsilon^2},\frac{d}{\log d}, \frac{n}{k}\right\}\right) = \Omega\left(nd \right) \geq \Omega\left(n \frac{\log n\log \left( n/\varepsilon  \right)}{\varepsilon^2}\right).
\end{align*}
Therefore, we must have that 
\begin{align*}
&\embed(n, d, \eps) \geq \Omega\left(nd\right) - 2n\frac{\log n\log \left( n/\varepsilon  \right)}{\varepsilon^2}\geq \Omega\left(nd\right).
\end{align*}
\end{proof}

The same technique can be applied to other dimensionality reduction methods. 
For example, \cite{feldman2020turning} initially applies dimensionality reduction to project the given set of points into an $m$-dimensional subspace $L$, with $m = O(k/\varepsilon^2)$.
Subsequently, they construct an approximate coreset $S$ within this subspace $L$.
Using a similar proof procedure, we can show that the storage of the projection to the reduced-dimensional space is $\Omega\left(md\right) = \Omega\left(kd/{\varepsilon^2}\right)$.
{

\section{Application of Coreset Construction in Distributed and Streaming Settings}

In this section, we expand our compression scheme for coreset construction, as outlined in Algorithm~\ref{alg:upper}, to other well-studied contexts, including distributed and streaming settings (refer to \ref{subsec:distributed} and \ref{subsec:streaming} respectively). Within these settings, we provide the exact bit space complexity using our quantization scheme, demonstrating the versatility of our method.

\subsection{Communication Cost for Distributed \kzC}
\label{subsec:distributed}

In many practical applications, massive data is collected and stored
on a large number of nodes possibly deployed at different locations, while we want to learn properties of the union of the data.
For example, application data
from location based services\cite{schiller2004location}, images and videos over
networks\cite{mitra2011characterizing}. 
It has become increasingly important to develop effective clustering algorithms in distributed scenarios.
In such distributed systems and applications, communication cost is our major concern, since
communication is much slower than local computation.

Here we consider the distributed \kzC\ model introduced in \cite{balcan2013distributed}.
In this model, there is a set of $l$ sites  
$\calV=\left\{\bsv_i, 1 \leq i \leq l\right\}$, 
each holding a local data set $\bsP_i,$ $i=1,\ldots,l$.
%
These sites communicate through an undirected connected graph $\mathcal{G} = (\mathcal{V},\mathcal{E})$, where an edge $(\bsv_i,\bsv_j)\in \mathcal{E}$ indicates that sites $\bsv_i$ and $\bsv_j$ can communicate with each other.
Our goal is to construct an $\eps$-sketch for $\cup_{i=1}^l \bsP_i$ on a specified site, while keeping the communication efficient.
Previous research has primarily measured the communication cost in
number of points transmitted\cite{balcan2013distributed}. 
Our approach, however, focuses on minimizing the worst-case communication costs, i.e., the total number of bits exchanged.

As done in \cite{balcan2013distributed}, we consider the coordinator model introduced in~\cite{dolev1989multiparty}. The formal definition of our problem is provided in Definition~\ref{def:distributed_coordinator}. Similar results can be obtained for the general communication graphs using the Message-Passing algorithm proposed in \cite{balcan2013distributed} (See Algorithm 2 and Theorem 2 in \cite{balcan2013distributed}). The idea is to propagate messages on the graph in a breadth-first-search style so that all sites have a copy of the coreset at the end. 

\begin{definition}[Coreset for \kzC\ in the coordinator model]\label{def:distributed_coordinator}
Given integers $n,k\geq 1$, constant $z\geq 1$ and an error parameter $\eps\in (0,1)$.
Let there be $l$ sites each holding a private input data set $\bsP_i \subseteq [\Delta]^d$, and an additional site called coordinator.
Sites can only communicate with the coordinator.
The task of the coordinator is to collaborate with all sites to correctly output an $\eps$-sketch for $\cup_{i=1}^l \bsP_i$.
\end{definition}

Our objective is to minimize the communication cost defined in Definition~\ref{def:distributed}.

\begin{definition}[Coreset for \kzC\ in the distributed model and communication cost]\label{def:distributed}
We define the communication cost $\CC(\cup_{i=1}^l\bsP_i, \Delta, k, z, d, \eps)$ to be the minimum possible bits communicated by the sites to construct an $\eps$-sketch.
Moreover, we define $\CC(l, n, \Delta, k, z, d, \eps):= \sup_{\bsP_i\subseteq [\Delta]^d: \sum_{i=1}^l|\bsP_i| = n} \CC\left(\cup_{i=1}^l\bsP_i,\Delta, k, z, d, \eps\right)$ to be the communication complexity function, i.e., the maximum cardinality $\CC\left(\cup_{i=1}^l\bsP_i,\Delta, k, z, d, \eps\right)$ over all possible datasets $\cup_{i=1}^l\bsP_i\subseteq [\Delta]^d$ of size at most $n$.
\end{definition}

The idea for our algorithm is based on the mergeability of coresets, meaning that the union of coresets from multiple datasets forms a coreset for the combined datasets~\cite{balcan2013distributed}. Consequently, we begin by constructing a sketch for each site and then transmit these to the coordinator. The final sketch is then assembled by merging the results from each local dataset.


\begin{corollary}[Communication upper bounds for distributed Euclidean \kzC]
\label{cor:distributed}
    In the distributed \kzC\ problem, suppose for any dataset $\cup_{i=1}^l\bsP_i\subseteq [\Delta]^d$ such that $\sum_{i=1}^l|\bsP_i|=n$ and $|\bsP_i|>k,$ $i=1,\ldots,l$, there exists an $\eps$-coreset of $\bsP_i$ for \kzC\ of size at most $\Psi\left(|\bsP_i|\right) \geq 1$ on each site.
    Then the communication complexity to construct an $\eps$-sketch for $\cup_{i=1}^l \bsP_i$ is bounded by:
    $$\CC(l, n, \Delta, k, z, d, \eps) \leq O\left(l kd\log\Delta + \sum_{i=1}^l \Psi\left(|\bsP_i|\right)(d\log 1/\eps + d\log\log \Delta + \log\log n )\right).$$
\end{corollary}

\begin{proof}
We first let each site runs Algorithm~\ref{alg:upper}. Apply Theorem~\ref{thm:main_upper}, we have the bit complexity of the $\eps$-sketch for each local dataset is 
$$\spc(\bsP_i, \Delta, k, z, d, \eps) \leq O\left(kd\log\Delta +  \Psi\left(|\bsP_i|\right)(d\log 1/\eps + d\log\log \Delta + \log\log n )\right).$$
Each site will transmit its sketch to the coordinator, who will then combine these sketches to obtain the final result. The communication cost for transmitting these sketches is 
\begin{align*}
\CC(l, n, \Delta, k, z, d, \eps) &\leq \sum_{i=1}^l \spc(\bsP_i, \Delta, k, z, d, \eps) \\&\leq  O\left(l kd\log\Delta + \sum_{i=1}^l \Psi\left(|\bsP_i|\right)(d\log 1/\eps + d\log\log \Delta + \log\log n )\right). 
\end{align*}

\end{proof}

Combining with the recent breakthroughs that shows that for any $|\bsP_i|>k$, $\Psi(|\bsP_i|) = \tilde{O}\left(\min\left\{k^{\frac{2z+2}{z+2}} \eps^{-2}, k \eps^{-z-2}\right\}\right)$ \cite{cohen2021new,cohen2022towards,cohenaddad2022improved,huang2023optimal}, we conclude that
\begin{align*}
\label{eq:upper_bound}
\CC(l, n, \Delta, k, z, d, \eps) \leq \tilde{O}\left(ld\cdot \min\left\{\frac{k^{\frac{2z+2}{z+2}}}{\eps^2}, \frac{k}{\eps^{z+2}} \right\}\right).
\end{align*}

\subsection{Space Complexity for Streaming \kzC}
\label{subsec:streaming}

Modern datasets have significantly increased in size, often consisting of hundreds of millions of points, which poses great challenges for analyzing
them. In typical applications, the total volume of data is very large and can not be stored in its entirety. Over the last decade, the streaming model has proven to be successful in dealing with big data \cite{muthukrishnan2005data}. In this model, the input data arrive sequentially and we usually require a data structure using limited working space compared with the huge volume of the data. Our major concern is the storage cost, which is the bit complexity of storing such a data structure. We consider the insertion-only streaming model formally defined in Definition~\ref{def:streaming}.

\begin{definition}[Insertion-Only Streaming \kzC]\label{def:streaming}
Given integers $n,k\geq 1$, constant $z\geq 1$ and an error parameter $\eps\in (0,1)$.
Suppose a stream consists of $n$ point $\bsp_1,\ldots,\bsp_n \in [\Delta]^d$ that arrive sequentially.
The goal is to maintain an $\eps$-sketch for the stream at every point while using limited bits.
\end{definition}

There is a growing body of work studying Euclidean \kzC\  problems over data streams. However, existing studies mainly focus on the size of the coreset~\cite{har2004coresets,braverman2019streaming} or assume a single word for storing the coordinate and weight~\cite{cohen2023streaming}. In contrast, our approach focuses on minimizing the worst-case bit complexity. By using Algorithm~\ref{alg:upper}, we get Corollary~\ref{cor:stream}.


\begin{corollary}[Space upper bounds for streaming Euclidean \kzC]
\label{cor:stream}
In the streaming \kzC\ problem, suppose for any stream consists of $n$ point $\bsp_1,\ldots,\bsp_n \in [\Delta]^d$, there exists an $\eps$-coreset of the stream for \kzC\ using at most $\Phi(n)$ words.
When $n > k$, the bits needed for storage is upper bounded by: $$O\left(kd\log\Delta + \Phi(n) \left(\log 1/\eps + \log\log \Delta + \frac{1}{d}\log\log n \right)\right).$$
\end{corollary}



Combining with the result from \cite{braverman2019streaming} which constructed a streaming coreset using $\Phi(n) = O(\eps^{-2} kd (\log k\log n+\log 1/\delta))$ words with probability at least $1-\delta$, we obtain that the required number of bits is bounded by $$O\left(kd\left(\log\Delta +  \eps^{-2}  \left(\log k\log n+\log \frac{1}{\delta}\right) \left(\log 1/\eps + \log\log \Delta + \frac{1}{d}\log\log n \right) \right)\right).$$

}\color{black}

\section{Conclusions and Future Work}

In this study, we initiate the exploration of space complexity for the Euclidean \kzC\ problem, presenting both upper and lower bounds. 
Our findings suggest that a coreset serves as the optimal compression scheme when $k$ is constant. 
Furthermore, the space lower bounds for \kzC\ directly imply a tight space lower bound for terminal embedding when $d\geq \Omega(\frac{\log n \log(n/\eps)}{\eps^2})$.
The techniques we employ for establishing these lower bounds contribute to a deeper geometric understanding of principal angles, which may be of independent research interest.

Our work opens up several interesting research directions. 
One immediate challenge is to further narrow the gap between the upper and lower bounds of the space complexity for Euclidean \kzC. 
Additionally, it would be valuable to investigate whether a coreset remains optimal for compression when $k$ is large. 

\bibliography{ref}

\begin{thebibliography}{52}
\providecommand{\natexlab}[1]{#1}
\providecommand{\url}[1]{\texttt{#1}}
\expandafter\ifx\csname urlstyle\endcsname\relax
  \providecommand{\doi}[1]{doi: #1}\else
  \providecommand{\doi}{doi: \begingroup \urlstyle{rm}\Url}\fi

\bibitem[Absil et~al.(2006)Absil, Edelman, and Koev]{absil2006largest}
P-A Absil, Alan Edelman, and Plamen Koev.
\newblock On the largest principal angle between random subspaces.
\newblock \emph{Linear Algebra and its applications}, 414\penalty0 (1):\penalty0 288--294, 2006.

\bibitem[Alon and Klartag(2017)]{alon2017optimal}
Noga Alon and Bo'az Klartag.
\newblock Optimal compression of approximate inner products and dimension reduction.
\newblock In \emph{2017 IEEE 58th Annual Symposium on Foundations of Computer Science (FOCS)}, pages 639--650. IEEE, 2017.

\bibitem[Arthur and Vassilvitskii(2007)]{arthur2007k}
David Arthur and Sergei Vassilvitskii.
\newblock {$k$-means}$++$: the advantages of careful seeding.
\newblock In \emph{Proceedings of the eighteenth annual ACM-SIAM symposium on Discrete algorithms}, pages 1027--1035, 2007.

\bibitem[Balcan et~al.(2013)Balcan, Ehrlich, and Liang]{balcan2013distributed}
Maria-Florina~F Balcan, Steven Ehrlich, and Yingyu Liang.
\newblock Distributed $ k $-means and $ k $-median clustering on general topologies.
\newblock \emph{Advances in neural information processing systems}, 26, 2013.

\bibitem[Bj{\"o}rck and Golub(1973)]{bjorck1973numerical}
Ake Bj{\"o}rck and Gene~H Golub.
\newblock Numerical methods for computing angles between linear subspaces.
\newblock \emph{Mathematics of computation}, 27\penalty0 (123):\penalty0 579--594, 1973.

\bibitem[Braverman et~al.(2016)Braverman, Feldman, Lang, Statman, and Zhou]{braverman2016new}
Vladimir Braverman, Dan Feldman, Harry Lang, Adiel Statman, and Samson Zhou.
\newblock New frameworks for offline and streaming coreset constructions.
\newblock \emph{arXiv preprint arXiv:1612.00889}, 2016.

\bibitem[Braverman et~al.(2017)Braverman, Frahling, Lang, Sohler, and Yang]{braverman2017clustering}
Vladimir Braverman, Gereon Frahling, Harry Lang, Christian Sohler, and Lin~F. Yang.
\newblock Clustering high dimensional dynamic data streams.
\newblock In \emph{{ICML}}, volume~70 of \emph{Proceedings of Machine Learning Research}, pages 576--585. {PMLR}, 2017.

\bibitem[Braverman et~al.(2019)Braverman, Feldman, Lang, and Rus]{braverman2019streaming}
Vladimir Braverman, Dan Feldman, Harry Lang, and Daniela Rus.
\newblock Streaming coreset constructions for $m$-estimators.
\newblock In \emph{Approximation, Randomization, and Combinatorial Optimization. Algorithms and Techniques (APPROX/RANDOM 2019)}. Schloss Dagstuhl-Leibniz-Zentrum fuer Informatik, 2019.

\bibitem[Carlson et~al.(2019)Carlson, Kolla, Srivastava, and Trevisan]{carlson2019optimal}
Charles Carlson, Alexandra Kolla, Nikhil Srivastava, and Luca Trevisan.
\newblock Optimal lower bounds for sketching graph cuts.
\newblock In \emph{Proceedings of the Thirtieth Annual ACM-SIAM Symposium on Discrete Algorithms}, pages 2565--2569. SIAM, 2019.

\bibitem[Charikar and Waingarten(2022)]{charikar2022johnson}
Moses Charikar and Erik Waingarten.
\newblock The {Johnson-Lindenstrauss} lemma for clustering and subspace approximation: From coresets to dimension reduction.
\newblock \emph{arXiv preprint arXiv:2205.00371}, 2022.

\bibitem[Chen et~al.(2023)Chen, Jiang, and Krauthgamer]{chen2023streaming}
Xiaoyu Chen, Shaofeng~H.{-}C. Jiang, and Robert Krauthgamer.
\newblock Streaming {Euclidean} max-cut: Dimension vs data reduction.
\newblock In \emph{{STOC}}, pages 170--182. {ACM}, 2023.

\bibitem[Cherapanamjeri and Nelson(2022)]{cherapanamjeri2022terminal}
Yeshwanth Cherapanamjeri and Jelani Nelson.
\newblock Terminal embeddings in sublinear time.
\newblock In \emph{2021 IEEE 62nd Annual Symposium on Foundations of Computer Science (FOCS)}, pages 1209--1216. IEEE, 2022.

\bibitem[Coates and Ng(2012)]{coates2012learning}
Adam Coates and Andrew~Y Ng.
\newblock Learning feature representations with {$k$-means}.
\newblock In \emph{Neural Networks: Tricks of the Trade: Second Edition}, pages 561--580. Springer, 2012.

\bibitem[Cohen-Addad et~al.(2021)Cohen-Addad, Saulpic, and Schwiegelshohn]{cohen2021new}
Vincent Cohen-Addad, David Saulpic, and Chris Schwiegelshohn.
\newblock A new coreset framework for clustering.
\newblock In \emph{Proceedings of the 53rd Annual ACM SIGACT Symposium on Theory of Computing}, pages 169--182, 2021.

\bibitem[Cohen-Addad et~al.(2022{\natexlab{a}})Cohen-Addad, Larsen, Saulpic, and Schwiegelshohn]{cohen2022towards}
Vincent Cohen-Addad, Kasper~Green Larsen, David Saulpic, and Chris Schwiegelshohn.
\newblock Towards optimal lower bounds for {$k$-median} and {$k$-means} coresets.
\newblock In \emph{Proceedings of the 54th Annual ACM SIGACT Symposium on Theory of Computing}, pages 1038--1051, 2022{\natexlab{a}}.

\bibitem[Cohen-Addad et~al.(2022{\natexlab{b}})Cohen-Addad, Larsen, Saulpic, Schwiegelshohn, and Sheikh-Omar]{cohenaddad2022improved}
Vincent Cohen-Addad, Kasper~Green Larsen, David Saulpic, Chris Schwiegelshohn, and Omar~Ali Sheikh-Omar.
\newblock Improved coresets for {E}uclidean {$k$-Means}.
\newblock In \emph{Advances in Neural Information Processing Systems 35: Annual Conference on Neural Information Processing Systems 2022, NeurIPS 2022}, 2022{\natexlab{b}}.

\bibitem[Cohen-Addad et~al.(2023)Cohen-Addad, Woodruff, and Zhou]{cohen2023streaming}
Vincent Cohen-Addad, David~P Woodruff, and Samson Zhou.
\newblock Streaming euclidean $k$-median and $k$-means with o (log n) space.
\newblock In \emph{2023 IEEE 64th Annual Symposium on Foundations of Computer Science (FOCS)}, pages 883--908. IEEE, 2023.

\bibitem[Czumaj et~al.(2013)Czumaj, Lammersen, Monemizadeh, and Sohler]{czumaj2013approximation}
Artur Czumaj, Christiane Lammersen, Morteza Monemizadeh, and Christian Sohler.
\newblock $(1+\varepsilon)$-approximation for facility location in data streams.
\newblock In \emph{{SODA}}, pages 1710--1728. {SIAM}, 2013.

\bibitem[Dadush et~al.(2022)Dadush, Jiang, and Reis]{dadush2022new}
Daniel Dadush, Haotian Jiang, and Victor Reis.
\newblock A new framework for matrix discrepancy: partial coloring bounds via mirror descent.
\newblock In \emph{Proceedings of the 54th Annual ACM SIGACT Symposium on Theory of Computing}, pages 649--658, 2022.

\bibitem[Dahlquist et~al.(1968)Dahlquist, Sj{\"o}berg, and Svensson]{dahlquist1968comparison}
G~Dahlquist, B~Sj{\"o}berg, and P~Svensson.
\newblock Comparison of the method of averages with the method of least squares.
\newblock \emph{Mathematics of Computation}, 22\penalty0 (104):\penalty0 833--845, 1968.

\bibitem[Dexter et~al.(2024)Dexter, Drineas, and Khanna]{dexterspace}
Gregory Dexter, Petros Drineas, and Rajiv Khanna.
\newblock The space complexity of approximating logistic loss.
\newblock In \emph{The Thirty-eighth Annual Conference on Neural Information Processing Systems}, 2024.

\bibitem[Dolev and Feder(1989)]{dolev1989multiparty}
Danny Dolev and Tom{\'a}s Feder.
\newblock \emph{Multiparty communication complexity}.
\newblock IBM Thomas J. Watson Research Division, 1989.

\bibitem[Elkin et~al.(2017)Elkin, Filtser, and Neiman]{elkin2017terminal}
Michael Elkin, Arnold Filtser, and Ofer Neiman.
\newblock Terminal embeddings.
\newblock \emph{Theoretical Computer Science}, 697:\penalty0 1--36, 2017.

\bibitem[Feldman and Langberg(2011)]{feldman2011unified}
Dan Feldman and Michael Langberg.
\newblock A unified framework for approximating and clustering data.
\newblock In \emph{Proceedings of the forty-third annual ACM symposium on Theory of computing}, pages 569--578, 2011.

\bibitem[Feldman et~al.(2020)Feldman, Schmidt, and Sohler]{feldman2020turning}
Dan Feldman, Melanie Schmidt, and Christian Sohler.
\newblock Turning big data into tiny data: Constant-size coresets for $k$-means, pca, and projective clustering.
\newblock \emph{SIAM Journal on Computing}, 49\penalty0 (3):\penalty0 601--657, 2020.

\bibitem[Frahling et~al.(2008)Frahling, Indyk, and Sohler]{frahling2008sampling}
Gereon Frahling, Piotr Indyk, and Christian Sohler.
\newblock Sampling in dynamic data streams and applications.
\newblock \emph{Int. J. Comput. Geom. Appl.}, 18\penalty0 (1/2):\penalty0 3--28, 2008.

\bibitem[Har-Peled and Mazumdar(2004)]{har2004coresets}
Sariel Har-Peled and Soham Mazumdar.
\newblock On coresets for {$k$-means} and {$k$-medians} clustering.
\newblock In \emph{Proceedings of the thirty-sixth annual ACM symposium on Theory of computing}, pages 291--300, 2004.

\bibitem[Henzinger and Kale(2020)]{henzinger2020fully}
Monika Henzinger and Sagar Kale.
\newblock Fully-dynamic coresets.
\newblock \emph{arXiv preprint arXiv:2004.14891}, 2020.

\bibitem[Hu et~al.(2018)Hu, Song, Yang, and Zhong]{Hu2018NearlyOD}
Wei Hu, Zhao Song, Lin~F. Yang, and Peilin Zhong.
\newblock Nearly optimal dynamic {$k$-Means} clustering for high-dimensional data.
\newblock \emph{arXiv: Data Structures and Algorithms}, 2018.
\newblock URL \url{https://api.semanticscholar.org/CorpusID:127972547}.

\bibitem[Huang and Vishnoi(2020)]{huang2020coresets}
Lingxiao Huang and Nisheeth~K Vishnoi.
\newblock Coresets for clustering in {Euclidean} spaces: importance sampling is nearly optimal.
\newblock In \emph{Proceedings of the 52nd Annual ACM SIGACT Symposium on Theory of Computing}, pages 1416--1429, 2020.

\bibitem[Huang et~al.(2022{\natexlab{a}})Huang, Jiang, Lou, and Wu]{huang2022near}
Lingxiao Huang, Shaofeng H-C Jiang, Jianing Lou, and Xuan Wu.
\newblock Near-optimal coresets for robust clustering.
\newblock \emph{arXiv preprint arXiv:2210.10394}, 2022{\natexlab{a}}.

\bibitem[Huang et~al.(2022{\natexlab{b}})Huang, Li, and Wu]{huang2023optimal}
Lingxiao Huang, Jian Li, and Xuan Wu.
\newblock On optimal coreset construction for {E}uclidean $(k,z)$-clustering, 2022{\natexlab{b}}.

\bibitem[Huang et~al.(2023)Huang, Huang, Huang, and Wu]{huang2023coresets}
Lingxiao Huang, Ruiyuan Huang, Zengfeng Huang, and Xuan Wu.
\newblock On coresets for clustering in small dimensional {Euclidean} spaces.
\newblock In \emph{{ICML}}, volume 202 of \emph{Proceedings of Machine Learning Research}, pages 13891--13915. {PMLR}, 2023.

\bibitem[Indyk and Wagner(2018)]{indyk2018approximate}
Piotr Indyk and Tal Wagner.
\newblock Approximate nearest neighbors in limited space.
\newblock In \emph{Conference On Learning Theory}, pages 2012--2036. PMLR, 2018.

\bibitem[Indyk and Wagner(2022)]{indyk2022optimal}
Piotr Indyk and Tal Wagner.
\newblock Optimal ({Euclidean}) metric compression.
\newblock \emph{SIAM Journal on Computing}, 51\penalty0 (3):\penalty0 467--491, 2022.

\bibitem[Johnson et~al.(1986)Johnson, Lindenstrauss, and Schechtman]{johnson1986extensions}
William~B Johnson, Joram Lindenstrauss, and Gideon Schechtman.
\newblock Extensions of lipschitz maps into {Banach} spaces.
\newblock \emph{Israel Journal of Mathematics}, 54\penalty0 (2):\penalty0 129--138, 1986.

\bibitem[Jordan(1875)]{jordan1875essai}
Camille Jordan.
\newblock Essai sur la g{\'e}om{\'e}trie {\`a} $ n $ dimensions.
\newblock \emph{Bulletin de la Soci{\'e}t{\'e} math{\'e}matique de France}, 3:\penalty0 103--174, 1875.

\bibitem[Kane et~al.(2011)Kane, Meka, and Nelson]{kane2011almost}
Daniel Kane, Raghu Meka, and Jelani Nelson.
\newblock Almost optimal explicit {Johnson-Lindenstrauss} families.
\newblock In \emph{International Workshop on Approximation Algorithms for Combinatorial Optimization}, pages 628--639. Springer, 2011.

\bibitem[Kleitman(1966)]{kleitman1966combinatorial}
Daniel~J Kleitman.
\newblock On a combinatorial conjecture of erd{\"o}s.
\newblock \emph{Journal of Combinatorial Theory}, 1\penalty0 (2):\penalty0 209--214, 1966.

\bibitem[Larsen and Nelson(2017)]{larsen2017optimality}
Kasper~Green Larsen and Jelani Nelson.
\newblock Optimality of the {Johnson-Lindenstrauss} lemma.
\newblock In \emph{2017 IEEE 58th Annual Symposium on Foundations of Computer Science (FOCS)}, pages 633--638. IEEE, 2017.

\bibitem[Lloyd(1982)]{lloyd1982least}
Stuart Lloyd.
\newblock Least squares quantization in pcm.
\newblock \emph{IEEE transactions on information theory}, 28\penalty0 (2):\penalty0 129--137, 1982.

\bibitem[Makarychev et~al.(2019)Makarychev, Makarychev, and Razenshteyn]{makarychev2019performance}
Konstantin Makarychev, Yury Makarychev, and Ilya Razenshteyn.
\newblock Performance of {Johnson-Lindenstrauss} transform for {$k$-means} and {$k$-medians} clustering.
\newblock In \emph{Proceedings of the 51st Annual ACM SIGACT Symposium on Theory of Computing}, pages 1027--1038, 2019.

\bibitem[Meckes(2019)]{Meckes_2019}
Elizabeth~S. Meckes.
\newblock \emph{The Random Matrix Theory of the Classical Compact Groups}.
\newblock Cambridge Tracts in Mathematics. Cambridge University Press, 2019.

\bibitem[Mitra et~al.(2011)Mitra, Agrawal, Yadav, Carlsson, Eager, and Mahanti]{mitra2011characterizing}
Siddharth Mitra, Mayank Agrawal, Amit Yadav, Niklas Carlsson, Derek Eager, and Anirban Mahanti.
\newblock Characterizing web-based video sharing workloads.
\newblock \emph{ACM Transactions on the Web (TWEB)}, 5\penalty0 (2):\penalty0 1--27, 2011.

\bibitem[Muirhead(1982)]{muirhead2009aspects}
Robb~J Muirhead.
\newblock \emph{Aspects of multivariate statistical theory}.
\newblock John Wiley \& Sons, 1982.

\bibitem[Muthukrishnan et~al.(2005)]{muthukrishnan2005data}
Shanmugavelayutham Muthukrishnan et~al.
\newblock Data streams: Algorithms and applications.
\newblock \emph{Foundations and Trends{\textregistered} in Theoretical Computer Science}, 1\penalty0 (2):\penalty0 117--236, 2005.

\bibitem[Narayanan and Nelson(2019)]{narayanan2019optimal}
Shyam Narayanan and Jelani Nelson.
\newblock Optimal terminal dimensionality reduction in {Euclidean} space.
\newblock In \emph{Proceedings of the 51st Annual ACM SIGACT Symposium on Theory of Computing}, pages 1064--1069, 2019.

\bibitem[Richards and Zheng(2020)]{richards2020reflection}
Donald Richards and Qifu Zheng.
\newblock A reflection formula for the gaussian hypergeometric function of matrix argument.
\newblock \emph{arXiv preprint arXiv:2002.05248}, 2020.

\bibitem[Schiller and Voisard(2004)]{schiller2004location}
Jochen Schiller and Agn{\`e}s Voisard.
\newblock \emph{Location-based services}.
\newblock Elsevier, 2004.

\bibitem[Sohler and Woodruff(2018)]{sohler2018strong}
Christian Sohler and David~P. Woodruff.
\newblock Strong coresets for {$k$-Median} and subspace approximation: Goodbye dimension.
\newblock In \emph{{FOCS}}, pages 802--813. {IEEE} Computer Society, 2018.

\bibitem[Spencer(1985)]{spencer1985six}
Joel Spencer.
\newblock Six standard deviations suffice.
\newblock \emph{Transactions of the American mathematical society}, 289\penalty0 (2):\penalty0 679--706, 1985.

\bibitem[Varah(1970)]{varah1970computing}
James~M Varah.
\newblock Computing invariant subspaces of a general matrix when the eigensystem is poorly conditioned.
\newblock \emph{Mathematics of Computation}, 24\penalty0 (109):\penalty0 137--149, 1970.

\end{thebibliography}
\bibliographystyle{plainnat}

\appendix

\section{Missing Proofs of Lemmas \ref{lmm:less_than_identity} and \ref{lmm:inequation_for_arbitrary_power}}\label{appendix}

{
\begin{lemma}[Restatement of Lemma~\ref{lmm:less_than_identity}]Given real numbers $e,f,g,q$ and $0<h<1$, the ${ }_2 F_1$ function satisfies that 
\[{ }_2 F_1\left(e, f ; g ; (1-h)\bsI_q\right) \leq { }_2 F_1\left(e, f ; g ; \bsI_q\right). \]
\end{lemma}}

\begin{proof}

{
We have the definition of the Gaussian hypergeometric function
of matrix argument.}

{
\begin{definition}[Definition 7.3.1 in \cite{muirhead2009aspects}]\label{def:original_gaussian_hyper} The Gaussian hypergeometric function of matrix argument is given by
\[
{ }_2 F_1\left(e,f ; g ; \bsX\right)=\sum_{k=0}^{\infty} \sum_\kappa \frac{\left(e\right)_\kappa \left(f\right)_\kappa}{\left(g\right)_\kappa} \frac{C_\kappa(\bsX)}{k !},
\]
where $\sum_\kappa$ denotes summation over all partitions $\kappa=\left(k_1, \ldots, k_m\right), k_1 \geq \cdots \geq$ $k_m \geq 0$, of $k, C_\kappa(X)$ is the zonal polynomial of $X$ corresponding to $\kappa$ and the generalized hypergeometric coefficient $(a)_\kappa$ is given by
\[
(a)_\kappa=\prod_{i=1}^m\left(a-\frac{1}{2}(i-1)\right)_{k_i},
\]
where $(a)_k=a(a+1) \ldots(a+k-1),(a)_0=1$. Here $X$, the argument of the function, is a complex symmetric $q \times q$ matrix, and the parameters $e,f,g$ are arbitrary real numbers. Denominator parameter $g$ is not allowed to be zero or an integer or half-integer $\leqslant \frac{1}{2}(m-1)$.
\end{definition}}

{
From the definition, we can see that the only term involving the matrix is the zonal polynomial $C_\kappa(X)$. The value of it is defined in Definition~\ref{def:zonal}.}

{
\begin{definition}[Equation 13 in \cite{muirhead2009aspects}]\label{def:zonal} Let $x_1,\cdots,x_q$ be the eigenvalues of $\bsX$. If the partition $\lambda=\left(l_1, \ldots, l_m\right), l_1 \geq \cdots \geq l_m \geq 0$, the monomial symmetric functions is defined as 
$$M_{\lambda}(\bsX) = \sum \cdots \sum x_{i_1}^{l_1}x_{i_2}^{l_2}\cdots x_{i_p}^{l_p},$$ 
where $p$ is the number of nonzero parts in the partition $\lambda$ and the summation is over the distinct permutations $(i_1,i_2,\cdots,i_p)$ of $p$ different integers fromthe integers $1,\cdots,q$. Then for some constants $c_{\kappa,\lambda} \geq 0$, the value of zonal polynomial is 
$$C_\kappa(X) = \sum_{\lambda \leq \kappa}c_{\kappa,\lambda} M_{\lambda}(\bsX)$$
\end{definition}}

{
Now come back to our setting. We have all the eigenvalues of $(1-h)\bsI_q$ are $(1-h)$,  which are less than the eigenvalues of $\bsI$, whose eigenvalues are 1. Therefore, we must have that for any partition $\lambda,\kappa$, 
$$M_{\lambda}((1-h)\bsI_q) \leq M_{\lambda}(\bsI_q), C_{\kappa}((1-h)\bsI_q) \leq  C_{\kappa}(\bsI_q).$$
Therefore, we would have our desired bound, 
\begin{align*}
{ }_2 F_1\left(e,f ; g ; (1-h)\bsI_q\right)&=\sum_{k=0}^{\infty} \sum_\kappa \frac{\left(e\right)_\kappa \left(f\right)_\kappa}{\left(g\right)_\kappa} \frac{C_\kappa((1-h)\bsI_q)}{k !} \\&\leq \sum_{k=0}^{\infty} \sum_\kappa \frac{\left(e\right)_\kappa \left(f\right)_\kappa}{\left(g\right)_\kappa} \frac{C_\kappa(\bsI_q)}{k !} \\&= { }_2 F_1\left(e,f ; g ; \bsI_q\right).
\end{align*}

}

\end{proof}

{
\begin{lemma}[Restatement of Lemma~\ref{lmm:inequation_for_arbitrary_power}]
For any $0 < z \leq 2$ and any $x \in  \left[0, \frac{1}{2}\right]$, we have
\[1-{\frac{z}{2}}x - z\left(1-{\frac{z}{2}}\right)x^2 \leq (1-x)^{\frac{z}{2}} \leq  1-{\frac{z}{2}}x  . \]
For any $ z\geq 2$ and any $x \in  \left[0, \frac{1}{2}\right]$, we have
\[1-{\frac{z}{2}}x \leq (1-x)^{\frac{z}{2}} \leq   1-{\frac{z}{2}}x + {\frac{z}{2}}\left({\frac{z}{2}}-1\right)x^2. \]
\end{lemma}}

\begin{proof}
We first deal with the case when $0 < z \leq 2$. 
Note that the right hand of the inequalities can actually be found in \cite{cohen2022towards}. 
For any $0<z \leq 2$ and any $x \in  \left[0, \frac{1}{2}\right]$, we have $(1-x)^{\frac{z}{2}}= \exp \left(-{\frac{z}{2}} \sum_{n=1}^{\infty}(x)^n / n\right)$. 
Since $z \leq 2$, this is at most $\exp \left(-\sum_{n=1}^{\infty}\left({\frac{z}{2}}x\right)^n / n\right)=1-{\frac{z}{2}}x$. 
For the other side, it is equal for us to prove 
\[\h(x) = z\left(1-{\frac{z}{2}}\right)x^2 + {\frac{z}{2}}x - 1 + (1-x)^{\frac{z}{2}} \geq 0, \forall x \in \left[0,\frac{1}{2}\right].\]
Note that the first and second derivative of $\h(x)$ is 
\begin{align*}
&  \h^{\prime}(x) = 2z\left(1-{\frac{z}{2}}\right) x + {\frac{z}{2}}  - {\frac{z}{2}} (1-x)^{{\frac{z}{2}}-1}, \\ 
& \h^{\prime\prime}(x) = 2z\left(1-{\frac{z}{2}}\right) - {\frac{z}{2}}\left(1-{\frac{z}{2}}\right) (1-x)^{{\frac{z}{2}}-2}.
\end{align*}
since $x \in \left[0,\frac{1}{2}\right]$, we must have $(1-x)^{{\frac{z}{2}}-2} \leq 2^{2-{\frac{z}{2}}} \leq 4$. We have $\h^{\prime\prime}(x) \geq 0,\forall x \in \left[0,\frac{1}{2}\right] $ and consequently 
\begin{align*}
&\h^{\prime}(x)\geq \h^{\prime}(0)=0, \forall x \in \left[0,\frac{1}{2}\right],\\&\h(x)\geq \h(0)=0, \forall x \in \left[0,\frac{1}{2}\right].
\end{align*}

The case when $z\geq 2$ is rather similar. 
The left hand of the inequality can again be found in \cite{cohen2022towards}. 
For any $z\geq 2$ and any $x \in  \left[0, \frac{1}{2}\right]$, we have $(1-x)^{\frac{z}{2}}= \exp \left(-{\frac{z}{2}} \sum_{n=1}^{\infty}(x )^n / n\right)$. 
Since $z \geq 2$, this is at least $\exp \left(-\sum_{n=1}^{\infty}\left({\frac{z}{2}}x\right)^n / n\right)=1-{\frac{z}{2}}x$.

For the other side, it is equal for us to prove 
\[\h(x) = -{\frac{z}{2}}\left({\frac{z}{2}}-1\right)x^2 + {\frac{z}{2}}x - 1 + (1-x)^{\frac{z}{2}} \leq 0, \forall x \in \left[0,\frac{1}{2}\right].\]
Note that the first and second derivative of $\h(x)$ is 
\begin{align*}
& \h^{\prime}(x) = -{\frac{z}{2}}\left({\frac{z}{2}}-1\right) x + {\frac{z}{2}}  - {\frac{z}{2}} (1-x)^{{\frac{z}{2}}-1},\\
& \h^{\prime\prime}(x) = -{\frac{z}{2}}\left({\frac{z}{2}}-1\right) + {\frac{z}{2}}\left({\frac{z}{2}}-1\right) (1-x)^{{\frac{z}{2}}-2}.
\end{align*}
since $x \in \left[0,\frac{1}{2}\right]$, we must have $(1-x)^{{\frac{z}{2}}-2} \leq \max\{1,2^{2-{\frac{z}{2}}}  \}\leq 2$. We have $\h^{\prime\prime}(x) \leq 0,\forall x \in \left[0,\frac{1}{2}\right] $ and consequently 
\begin{align*}
&\h^{\prime}(x)\leq \h^{\prime}(0)=0, \forall x \in \left[0,\frac{1}{2}\right],\\&\h(x)\leq \h(0)=0, \forall x \in \left[0,\frac{1}{2}\right].
\end{align*}
\end{proof}

\end{document}